\newcommand{\iprod}[2]{{\langle {#1},{#2} \rangle}}
\newcommand{\z}[1]{\mathbb{F}_{2}^{#1}}
\newcommand{\MF}[1]{\mathcal{M}_{#1}}
\newcommand{\MFC}[1]{\mathcal{M}_{#1}^{\#}}
\newcommand{\MFSP}[1]{\widehat{\mathcal{M}}_{#1}}
\newcommand{\MFCSP}[1]{\widehat{\mathcal{M}}_{#1}^{\#}}
\newcommand{\ClsC}[1]{#1^{\#}}
\newcommand{\ClsSP}[1]{\widehat{#1}}
\newcommand{\ClsCSP}[1]{\widehat{#1}^{\#}}
\newcommand{\subsp}[2]{\mathcal{L}_{#1}({#2})}
\newcommand{\linsp}[1]{[#1]}
\newcommand{\charfunc}[1]{\mathcal{\mathrm{Ind}}_{#1}}
\newcommand{\snkp}[2]{{\mathcal{S}_{#1}^{#2}}}
\newcommand{\asnk}[2]{{\mathcal{A}_{#1, #2}}}
\newcommand{\snk}[2]{\linsp{\asnk{#1}{#2}}}
\newcommand{\lbound}[1]{{\ell_{#1}}}
\newcommand{\Span}[1]{{\langle #1 \rangle}}
\newcommand{\near}[1]{\mathcal{N}(#1)}
\newcommand{\infomaps}[0]{\mathcal{I}}
\newcommand{\nearav}[1]{E({#1})}
\newcommand{\nbent}[4]{f_{#1, #2}^{#3, #4}}
\newcommand{\perk}[2]{#1_{#2}}
\newcommand{\coords}[2]{#1^{#2}}
\newcommand{\fper}[1]{\mathcal{P}_{#1}}
\newcommand{\sav}[2]{\rho_{#1,#2}}
\newcommand{\mfcbound}[1]{\vartheta_{2n}}
\newcommand{\sconstr}[5]{\mathcal{U}_{#1}^{#5}(#2, #3, #4)}
\newcommand{\bentset}[1]{\mathcal{B}_{#1}}
\newcommand{\xspace}[1]{\Gamma_{#1}}
\newtheorem{theorem}{Theorem}
\newtheorem{proposition}{Proposition}
\newtheorem{lemma}{Lemma}
\newtheorem{corollary}{Corollary}
\theoremstyle{remark}%
\newtheorem{remark}{Remark}%
\providecommand{\keywords}[1]{
	\medskip
    \noindent
	\small\textbf{Keywords: } #1
	}
\providecommand{\affili}[1]{
	\begin{center} 
	\small #1
	\end{center}
	\vspace{.1cm}}
\title{On the Maiorana--McFarland Class Extensions}
\author{\normalfont Nikolay Kolomeec, Denis Bykov
\medskip
}
\date{}
\begin{document}
\large
\maketitle
\par\vspace{-60pt}
 \affili{
	Novosibirsk State University, Novosibirsk, Russia\\

    \medskip

 	\texttt{nkolomeec@gmail.com, den.bykov.2000i@gmail.com}
}
\begin{abstract}
       The closure $\MFC{m}$ and the extension $\MFSP{m}$ of the Maiorana--McFarland class $\MF{m}$ in $m = 2n$ variables relative to the extended-affine equivalence and the bent function construction $f \oplus \charfunc{U}$ are considered, where $U$ is an affine subspace of $\z{m}$ of dimension $m/2$. 
       We obtain an explicit formula for $|\MFSP{m}|$ and an upper bound for $|\MFCSP{m}|$. Asymptotically tight bounds for $|\MFC{m}|$ are proved as well, 
       for instance, $|\MFC{8}| \approx 2^{77.865}$. Metric properties of $\MF{m}$ and $\MFC{m}$ are also investigated. 
       We find the number of all closest bent functions to the set $\MF{m}$ and provide an upper bound of the same number for $\MFC{m}$. The average number $\nearav{\MF{m}}$ of $m/2$-dimensional affine subspaces of $\z{m}$ such that a function from $\MF{m}$ is affine on each of them is calculated. 
       We obtain that similarly defined $\nearav{\MFC{m}}$ satisfies $\nearav{\MFC{m}} < \nearav{\MF{m}}$ and $\nearav{\MFC{m}} = \nearav{\MF{m}} - o(1)$. 

\keywords{Bent functions, the Maiorana--McFarland class, minimum distance, affinity, affine equivalence, subspaces.}

\end{abstract}

  \section{Introduction}
  
  Bent functions are Boolean functions having interesting applications in cryptography, coding theory, algebra, etc. They are maximal nonlinear Boolean functions in an even number of variables. Many books 
  such as~\cite{
  TokarevaBook, MesnagerBook, LogachevEtAlBook, CusickStanica2017, CarletBook} are dedicated to them or contain information on them; some problems from student olympiads are dedicated to them as well~\cite{NSUCRYPTO2}. The investigations of bent functions were started in the 1960s both in the USA and the USSR~\cite{TokarevaBook}, the term appeared in~\cite{Rothaus1976}. At the same time, there are many open problems in this area. For instance, the number of all bent functions in $m = 2n$ variables is unknown if $m \geq 10$. 

There are important primary subclasses of bent functions such as the Maiorana--McFarland class $\MF{m}$~\cite{McFarland1973} containing bent functions 
$$
    f(x,y) = \iprod{x}{\pi(y)} \oplus \varphi(y), \ x, y \in \z{\frac{m}{2}},
$$ 
where $\pi$ is a permutation on $\z{\frac{m}{2}}$ and $\varphi: \z{\frac{m}{2}} \to \z{}$. Also, there are two most famous secondary constructions that can generate additional bent functions using some given class $\mathcal{K}$.
\begin{itemize}
	\item The extended-affine equivalence, which generates $\ClsC{\mathcal{K}}$ containing all
	$
		f(xA \oplus a) \oplus h(x)
	$
	for each $f \in \mathcal{K}$ in $m$ variables, invertible $m \times m$ matrix $A$ over $\z{}$, $a \in \z{m}$ and affine $h : \z{m} \to \z{}$.
	\item The construction from~\cite{Dillon1974}, which generates $\ClsSP{\mathcal{K}}$ containing 
	\begin{equation}\label{eq:subspaceContruction}
		f \oplus \charfunc{U}
	\end{equation}
	for each $f \in \mathcal{K}$ in $m$ variables and $\frac{m}{2}$-dimensional affine subspace $U \subseteq \z{m}$ such that $f$ is affine on $U$.
\end{itemize}
However, the cardinalities of $\ClsC{\mathcal{K}}$, $\ClsSP{\mathcal{K}}$ and $\ClsCSP{\mathcal{K}} = \ClsSP{\ClsC{\mathcal{K}}}$ are unknown for most of significant classes, and the calculation of them looks unfeasible. 
This work demonstrates that at least for $\MF{m}$ we can make progress in this direction.  
We obtain the exact value of $|\MFSP{m}|$, prove asymptotically tight bounds for $|\MFC{m}|$ and propose an upper bound for $|\MFCSP{m}|$, for instance,
\begin{equation}\label{eq:introMFCAsympt}
    |\MFC{m}| = \frac{(2^{m} - 1) \cdot \ldots \cdot (2^{\frac{m}{2} + 1} - 1)}{(2^{\frac{m}{2}} - 1) \cdot \ldots \cdot (2^{1} - 1)} |\MF{m}| - o(2^{\frac{m}{2}}!),
\end{equation} 
\begin{equation}\label{eq:introMFCSPBound}
    |\MFSP{m}| = (\frac{1}{18} 2^{m} + \frac{430}{63}) |\MF{m}| + o(|\MF{m}|)
\end{equation} 
and $|\MFCSP{m}| < (\frac{4}{3}2^{m} + 30) |\MFC{m}|$ for $m \geq 10$.
Note that the proved estimations are much more precise than the trivial
$$
  |\MFC{m}| \leq (2^{m} - 2^0) \cdot \ldots \cdot (2^{m} - 2^{m - 1}) |\MF{m}| \text{ and }  
$$ 
$$
  |\MFSP{m}| \leq 2^{\frac{m}{2}}(2^1 + 1) \cdot \ldots \cdot (2^{\frac{m}{2}} + 1) |\MF{m}|
$$ 
that follows from~\cite{Kolomeec2017}.
The class $\MF{m}$ is very important due to the simplicity of its functions. Its cardinality $2^{2^{\frac{m}{2}}}2^{\frac{m}{2}}!$ was often used as a lower bound for the number of all bent functions prior to work~\cite{PotapovTaranenkoTarannikov2024}. 
The class $\MFC{m}$ is called \textit{the completed Maiorana--McFarland} class. A natural question addressed to constructions is to generate bent functions outside~$\MFC{m}$, see, for instance,~\cite{Carlet1994, ZhangEtAl2017, ZhangEtAl2020, KudinEtAl2022_1, KudinEtAl2022_2, BapicEtAl2022, PasalicEtAl2023, PolujanThesis2021}.
There is the $\mathcal{D}$ class~\cite{Carlet1994} constructed using~(\ref{eq:subspaceContruction}) and $f$ from a subset of $\MF{m}$, i.e. $\mathcal{D} \subset \MFSP{m}$. 
In addition, there are other approaches to the obtained results.

First, all bent functions $\near{f}$ that are generated using the construction~(\ref{eq:subspaceContruction}) and some given bent function $f$ in $m$ variables are exactly all bent functions at the Hamming distance~$2^{\frac{m}{2}}$ from $f$~\cite{KolomeecPavlov2009}. It is the minimum possible distance between two bent functions. 
Moreover, $\near{f}$ are all closest bent functions to $f \in \MFC{m}$. We can define the set $\near{\MF{m}}$ as 
$$
    \near{\MF{m}} = \bigcup_{f \in \MF{m}} \near{f} \setminus \MF{m} = \MFSP{m} \setminus \MF{m}
$$
that consists of all closest bent functions to the set $\MF{m}$. Similarly, we determine $\near{\MFC{m}} = \MFCSP{m} \setminus \MFC{m}$ which is the set of all closest bent functions to $\MFC{m}$. Their cardinalities are
$$
    |\MFSP{m}| = |\near{\MF{m}}| + |\MF{m}| \text{ and } |\MFCSP{m}| = |\near{\MFC{m}}| + |\MFC{m}|.
$$
Thus, we also find the exact value of $|\near{\MF{m}}|$ and the upper bound for $|\near{\MFC{m}}|$, which are interesting metric properties of $\MF{m}$ and $\MFC{m}$. 

Secondly, any function $f \in \MF{m}$ is constructed as the concatenation of $2^{\frac{m}{2}}$ affine functions 
$
    x \in \z{\frac{m}{2}} \mapsto \iprod{\pi(y)}{x} \oplus \varphi(y)
$ which are distinct up to adding a constant, i.e. $f$ is affine on the affine subspaces $\z{\frac{m}{2}} \times \{y\}$ of $\z{m}$, where $y \in \z{\frac{m}{2}}$.
A natural combinatorial problem arises: what is the number of such $\frac{m}{2}$-dimensional affine subspaces of $\z{m}$ in total? According to~(\ref{eq:subspaceContruction}), the answer is $|\near{f}|$. 
Note that 
$\frac{m}{2}$ is the maximum possible dimension of such subspaces,
see, for instance,~\cite{LogachevEtAlBook}. 
In this work, we obtain the expected value $\nearav{\MF{m}}$ of $|\near{f}|$ for a random $f \in \MF{m}$:
$$
   \nearav{\MF{m}} = \frac{1}{|\MF{m}|}\sum_{f \in \MF{m}} |\near{f}|.
$$
Similarly, we determine $\nearav{\MFC{m}}$ for $f \in \MFC{m}$.
We obtain the exact value of $\nearav{\MF{m}}$, which can be also expressed as
\begin{equation}\label{eq:introNearAvAsympt}
    \nearav{\MF{m}} = \frac{10}{3} 2^{m} - 2^{\frac{m}{2}} + \frac{176}{21} + o(1),
\end{equation}
and prove that $\nearav{\MFC{m}} < \nearav{\MF{m}}$ (their difference is negligible) allowing us to estimate~$|\MFCSP{m}|$.
Taking into account the principle of constructing~$f \in \MF{m}$, we can make the assumption that $\nearav{\mathcal{K}} < \nearav{\MF{m}}$ for any sufficiently large~$\mathcal{K}$.

The outline. In Sections~\ref{sec:definitions},  necessary definitions are given. 
Section~\ref{sec:criterion} contains the criterion from~\cite{BykovKolomeec2023} for describing $\near{f}$ for $f \in \MF{m}$. We rewrite it in a more convenient form for this work (Section~\ref{sec:criterionRe}, Theorem~\ref{th:criterionMMFnew}) and point out a simple subcase generating most of the bent functions from $\near{\MF{m}}$ (Section~\ref{sec:DimL2Descr}, Theorem~\ref{theorem:criterionMMFdim2}). 
Section~\ref{sec:nearestToMF} studies intersections of $\near{f}$ and $\near{g}$ for $f, g \in \MF{m}$ (Theorem~\ref{th:coincidenceMMF}), its results are used  for the calculation of $|\near{\MF{m}}|$ and are important in the context of the mentioned subcase. 

In Section~\ref{sec:nearestAverage}, the explicit formula for $\nearav{\MF{m}}$ is proved (Theorem~\ref{th:NearAverage}) as well as some properties of this number (Corollaries~\ref{cor:AverageNumbSimplified}, \ref{cor:AverageNumbUpperBound} and \ref{cor:NearAverageAsympt}), see, for instance~(\ref{eq:introNearAvAsympt}). 
Next, we give the explicit formula for $|\near{\MF{m}}|$ in Section~\ref{sec:nearestNumber} (Theorem~\ref{th:NearMMFNumb}). Its asymptotic and some estimations are also given (Corollaries~\ref{cor:NearSimplified} and \ref{cor:NearMMFAsympt}). The results imply the expression~(\ref{eq:introMFCSPBound}) for $|\MFSP{m}|$ (Remark~\ref{remark:MFSPPower}).  

In Section~\ref{sec:mfcdescr}, we move to the $\MFC{m}$ class and prove some of its properties (Lemma~\ref{lemma:mfdescrequivalence} and Corollary~\ref{cor:fullCosetSerie}). 
Section~\ref{sec:BoundsMFC} proves the bounds for $|\MFC{m}|$ (Theorem~\ref{th:MFCLowerBound}, Corollary~\ref{cor:MFCLowerBoundAsympt} and Proposition~\ref{state:mcboundasympt}) that can be transformed to~(\ref{eq:introMFCAsympt}). For instance, $|\MFC{8}| \approx 2^{77.865}$ (see Table~\ref{table:mfcestimations}) that was estimated as at most $2^{81.38}$ in~\cite{LangevinLeander2011}.
Section~\ref{sec:nearestAverageMFC} contains an upper bound for $|\near{\MFC{m}}|$ obtained from $\nearav{\MFC{m}}$. The main result here is $\nearav{\MFC{m}} < \nearav{\MF{m}}$ for $m \geq 10$ and  $\nearav{\MFC{m}} = \nearav{\MF{m}} - o(1)$ (Theorem~\ref{th:MFCompletedNearAverage}). This implies the mentioned bound for $|\MFCSP{m}|$ (Corollary~\ref{cor:NearCompletedUpperBound} and Remark~\ref{remark:NearMFC}).


\section{Preliminaries}\label{sec:definitions}

\subsection{Boolean functions}
Let $\z{n} = \{ (x_1, \ldots, x_n): x_1, \ldots, x_n \in \z{} \}$
be the 
vector space over the field $\z{}$ consisting of two elements, the addition be denoted by $\oplus$ and 
$\iprod{x}{y} = x_1 y_1 \oplus \ldots \oplus x_n y_n$, 
$x, y \in \z{n}$.

\textit{A (vectorial) Boolean function} in $n$ variables is $f: \z{n} \to \z{}$ ($F: \z{n} \to \z{m}$), $\charfunc{S}$ is \textit{the characteristic Boolean function} of a set $S \subseteq \z{n}$.
\textit{The Hamming distance} between $f, g : \z{n} \to \z{}$ is equal to the number of $x \in \z{n}$ such that $f(x) \neq g(x)$.

$F: \z{n} \to \z{m}$ is called \textit{linear} if $F(x \oplus y) = F(x) \oplus F(y)$ for all $x, y \in \z{n}$.
It can be represented as $x \mapsto xA$ for some matrix $A$ over $\z{}$ of size $n \times m$.
By adding a constant from $\z{m}$ to linear functions, we get the set of all \textit{affine} functions.

Functions $f, g \in \z{n} \to \z{}$ are \textit{EA-equivalent} if $g = f \circ A \oplus h$, i.e. $g(x) = f(A(x)) \oplus h(x)$ for all $x \in \z{n}$, where $A: \z{n} \to \z{n}$ is affine and invertible, $h: \z{n} \to \z{}$ is affine.

\subsection{Subspaces and restrictions}

\textit{A linear subspace of} $\z{n}$ is a nonempty $L \subseteq \z{n}$ such that $x, y \in L$ implies $x \oplus y \in L$. 
The set $U = a \oplus L = \{a \oplus x : x \in L\}$ is called \textit{an affine subspace of} $\z{n}$, where $a \in \z{n}$. Also, $\linsp{U} = L = a \oplus U$. Their \textit{dimensions} are $\dim L = \log_2 |L|$ and $\dim U = \log_2 |U|$. 
The set of all $k$-dimensional affine (linear) subspaces of $\z{n}$ is denoted by $\asnk{n}{k}$ ($\snk{n}{k}$).
\textit{The orthogonal} $L^{\bot}$ to $L \in \snk{n}{k}$ is $\{ y \in \z{n} : \iprod{x}{y} = 0 \text{ for all } x \in L\}$ and belongs to $\snk{n}{n - k}$. The cardinality of $\snk{n}{k}$ is 
\begin{equation}\label{eq:snkpDefinition}
    \snkp{n}{k} = \prod_{i = 0}^{k - 1}\frac{2^{n} - 2^{i}}{2^{k} - 2^{i}} = \frac{(2^{n} - 1) \cdot \ldots \cdot (2^{n - k + 1} - 1)}{(2^{k} - 1) \cdot \ldots \cdot (2^{1} - 1)}
\end{equation}
for $0 \leq k \leq n$ and $0$ otherwise, $|\asnk{n}{k}| = 2^{n - k} \snkp{n}{k}$.

\textit{A restriction} of $F: \z{n} \to \z{m}$ to $S \subseteq \z{n}$ is denoted by $F|_S$. A function $H : U \to \z{m}$ is \textit{affine (linear)}, if $H = F|_{U}$ for some affine (linear) $F$ and $U$ is an affine (linear) subspace of~$\z{n}$. We will also say that $F$ is affine on $U$ if $F|_{U}$ is affine.

\subsection{Indices and information coordinates}

Let $I = \{i_1, \ldots, i_k\}$, where $1 \leq i_1 < i_2 < \ldots < i_k \leq n$.
The following notations for $x \in \z{n}$ and $y \in \z{k}$ are used:
\begin{itemize}
  \item $\perk{x}{I} = (x_{i_1}, \ldots, x_{i_k}) \in \z{k}$,
  \item $\coords{y}{I} = z \in \z{n}$, where $z_{i_1} = y_1, \ldots, z_{i_k} = y_k$ and $z_j = 0$ for $j \notin I$, i.e. $\perk{(\coords{y}{I})}{I} = y$ holds.
\end{itemize}
Also, $\overline{I} = \{1, \ldots, n\} \setminus I$. 
For $H: \z{n} \to \z{k}$, we denote by $\coords{H}{I}$ the function $x \mapsto \coords{H(x)}{I}$ and the same for $\perk{\pi}{I}$, where $\pi: \z{n} \to \z{n}$. If $I = \{i\}$, $\pi_i = \perk{\pi}{I}$ will be also used.

The set $I$ is called \textit{an information set of} $L \in \asnk{n}{k}$ if $$\{ (x_{i_1}, \ldots, x_{i_{k}}) : x \in L\} = \z{k}.$$
One can find some $I$ for any $L \in \asnk{n}{k}$ using Gaussian elimination for its basis matrix. It is well known that $I$ is an information set of $L$ $\iff$ $\overline{I}$ is an information set of $\linsp{L}^{\bot}$.
One more of its important properties is that any coset of $\linsp{L}$ can be uniquely determined as $\coords{a}{\overline{I}} \oplus L$, where $a \in \z{n - k}$. 

Hereinafter, $\infomaps$ is some arbitrary fixed mapping that for any $L \in \asnk{n}{k}$, $0 \leq k \leq n$, gives us its information set $I = \infomaps(L)$. We need it only to choose information sets deterministically. 

\subsection{Bent functions and the Maiorana--McFarland class}

A function $f: \z{2n} \to \z{}$ is \textit{a bent function} if it is at the maximum possible Hamming distance from the set of all affine Boolean functions in $2n$ variables; they form the set $\bentset{2n}$. Though $m = 2n$ is used in the introduction (which makes the dependence of the estimates on the number of variables more clear), we use bent functions in $2n$ variables elsewhere.

\textit{The Maiorana--McFarland class} $\MF{2n}$ consists of 
\begin{equation*}
    f_{\pi, \varphi}(x, y) = \iprod{x}{\pi(y)} \oplus \varphi(y),
\end{equation*} 
where $\pi: \z{n} \to \z{n}$ is invertible and $\varphi: \z{n} \to \z{}$. All such functions are bent functions.
Let us denote by $\fper{n}$ the set of all invertible $\pi: \z{n} \to \z{n}$ and define for each $\pi \in \fper{n}$
\begin{equation*}
\subsp{k}{\pi} = \{ U \in \asnk{n}{k} : \pi(U) = \{\pi(x) : x \in U\} \in \asnk{n}{k}\}.
\end{equation*}

Also, $\xspace{n} = \z{n} \times \{0 \in \z{n}\}$. We note that any $f_{\pi, \varphi} \in \MF{2n}$ is affine on each coset of $\xspace{n}$. 

\textit{The completed Maiorana--McFarland class} $\MFC{2n}$ is the closure of $\MF{2n}$ with respect to EA-equivalence. 
In fact, we can consider only linear transformations of coordinates since $f(x \oplus a) \oplus h(x) \in \MF{2n}$ for any $f \in \MF{2n}$, $a \in \z{2n}$ and affine $h: \z{2n} \to \z{}$.
The set of all bent functions $\bentset{2n}$ is closed with respect to EA-equivalence.

\section{The closest bent functions to a given bent function}\label{sec:criterion}

It is known~\cite[Corollary~3]{KolomeecPavlov2009} that all bent functions at the minimum possible distance $2^n$ from a given $f \in \bentset{2n}$ can be generated using the construction~(\ref{eq:subspaceContruction}).  \begin{proposition}[see~\cite{KolomeecPavlov2009}]\label{th:BasicCriterion}
  Let $f \in \bentset{2n}$ and $U \subset \z{2n}$, $|U| = 2^n$.
  Then $f \oplus \charfunc{U} \in \bentset{2n}$ $\iff$ $U \in \asnk{2n}{n}$ and $f|_U$ is affine.
\end{proposition}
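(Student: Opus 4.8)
The plan is to prove both directions via the Fourier–Walsh characterization of bent functions. Recall that $f:\z{2n}\to\z{}$ is bent iff every Walsh coefficient $W_f(u)=\sum_{x\in\z{2n}}(-1)^{f(x)\oplus\iprod{u}{x}}$ equals $\pm 2^n$, and that changing $f$ on a set $U$ of size $2^n$ means replacing $(-1)^{f}$ by $(-1)^{f}\cdot(-1)^{\charfunc{U}}$, so that $W_{f\oplus\charfunc{U}}(u)=W_f(u)-2\sum_{x\in U}(-1)^{f(x)\oplus\iprod{u}{x}}$. Write $S_u=\sum_{x\in U}(-1)^{f(x)\oplus\iprod{u}{x}}$; then $f\oplus\charfunc{U}$ is bent iff $W_f(u)-2S_u\in\{\pm2^n\}$ for every $u$, i.e. $S_u\in\{0,W_f(u)\}$ for every $u$ (using $W_f(u)=\pm2^n$). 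The whole statement therefore reduces to understanding when the $2^n$ "partial Walsh sums" $S_u$ can all take values in $\{0,\pm 2^n\}$.

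For the ($\Leftarrow$) direction, suppose $U\in\asnk{2n}{n}$ and $f|_U$ is affine, say $f(x)=\iprod{c}{x}\oplus\varepsilon$ on $U=a\oplus L$ with $L=\linsp{U}\in\snk{2n}{n}$. Then $(-1)^{f(x)\oplus\iprod{u}{x}}=(-1)^{\varepsilon\oplus\iprod{c\oplus u}{x}}$ on $U$, and summing a character over the coset $a\oplus L$ gives $\pm|L|=\pm2^n$ when $c\oplus u\in L^{\bot}$ and $0$ otherwise. Hence each $S_u\in\{0,\pm2^n\}$, and one checks $S_u=\pm2^n$ forces $W_f(u)-2S_u=\mp2^n$, still of the right modulus; so every Walsh coefficient of $f\oplus\charfunc{U}$ stays $\pm2^n$ and the modified function is bent. (Alternatively, one can invoke Proposition-style bent-on-flat facts, but the direct character-sum computation is self-contained.)

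The ($\Rightarrow$) direction is the substantive part and the main obstacle. Here we are given only $|U|=2^n$ and that $f\oplus\charfunc{U}$ is bent, and we must extract both that $U$ is an affine subspace and that $f$ is affine on it. The natural route is a Parseval/second-moment argument: since $\sum_u W_f(u)^2=\sum_u W_{f\oplus\charfunc{U}}(u)^2=2^{2n}\cdot2^{2n}$, subtracting gives $\sum_u(W_f(u)S_u-S_u^2)=0$, i.e. $\sum_u S_u^2=\sum_u W_f(u)S_u$. On the other hand $\sum_u S_u^2=2^{2n}|U|=2^{3n}$ by Parseval applied to $\charfunc{U}\cdot(-1)^f$, while $\sum_u W_f(u)S_u=2^{2n}\sum_{x\in U}1=2^{2n}|U|$ as well — consistent but not yet decisive, so the inequality must be pushed termwise: from $S_u\in\{0,W_f(u)\}$ we get $S_u^2=|S_u||W_f(u)|$ exactly, meaning the "mass" $\sum|S_u|=2^n$ is forced to concentrate so that $S_u=\pm2^n$ on exactly $2^n$ values of $u$ and $S_u=0$ elsewhere. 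I would then argue that $\sum_u|S_u|=2^n$ together with $S_u\in\{0,\pm2^n\}$ makes $\tfrac{1}{2^n}(-1)^{f}\charfunc{U}$ essentially the indicator of a coset in the Fourier domain, and translate this back: the set $A=\{u:S_u\neq0\}$ has $|A|=2^n$, and the condition that the function $(-1)^{f(x)}\charfunc{U}(x)$ has spectral support exactly $A$ with all nonzero coefficients of modulus $2^n$ forces — by a standard argument about functions whose spectrum is supported on an affine subspace and is flat — that $A$ is an affine subspace $a_0\oplus L^{\bot}$, that $U=\linsp{U}^{\perp\perp}$-type reasoning gives $U$ an affine subspace with $\linsp{U}=L$, and that $(-1)^{f}$ restricted to $U$ is a single character times a sign, i.e. $f|_U$ affine. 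The delicate point to get right is ruling out "fake" solutions where the $S_u$ are balanced but $U$ is not flat; this is exactly where flatness of the spectrum (all nonzero $|S_u|$ equal to the maximum $2^n$) is used, via the equality case of $\big(\sum|S_u|\big)^2\le |A|\sum S_u^2$ forcing $U$ and $f|_U$ into the claimed form. I expect the bookkeeping of signs and the clean statement of the "flat-spectrum implies coset" lemma to be where most of the care is needed; everything else is routine character-sum manipulation.
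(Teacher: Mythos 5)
The paper itself gives no proof of this proposition --- it is quoted from~\cite{KolomeecPavlov2009} --- so your argument has to stand on its own. Your framing is the standard one and is correct as far as it goes: $W_{f\oplus\charfunc{U}}(u)=W_f(u)-2S_u$ with $S_u=\sum_{x\in U}(-1)^{f(x)\oplus\iprod{u}{x}}$, so bentness of $f\oplus\charfunc{U}$ is equivalent to $S_u\in\{0,W_f(u)\}$ for every $u$.

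The genuine gap is in the ($\Leftarrow$) direction, at the words ``one checks''. Your coset character sum shows only that $S_u\in\{0,\pm 2^n\}$, with $S_u\neq 0$ exactly for $u\in c\oplus\linsp{U}^{\bot}$. What you actually need at those $u$ is the equality of signs $S_u=W_f(u)$: if the signs disagreed you would get $W_f(u)-2S_u=3W_f(u)=\pm 3\cdot 2^n$, which is \emph{not} ``of the right modulus''. Nothing you wrote rules this out, and it is not a formality --- it is the entire content of the sufficiency direction and the only place where bentness of $f$ is used (for a non-bent $f$ affine on a flat the signs can indeed disagree). The standard repair is Poisson summation over the coset $U=a\oplus L$: for $u\in c\oplus L^{\bot}$ one gets $(-1)^{\varepsilon\oplus\iprod{c\oplus u}{a}}2^{2n}=\sum_{w\in L^{\bot}}(-1)^{\iprod{a}{w}}W_f(u\oplus w)$, a sum of $2^n$ terms each of absolute value exactly $2^n$ because $f$ is bent; equality forces all terms to share one sign, and the $w=0$ term then gives $W_f(u)=S_u$. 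Without some such argument your ($\Leftarrow$) does not close.

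The ($\Rightarrow$) direction is a workable outline rather than a proof. Parseval applied to $g=(-1)^f\charfunc{U}$ does force $|\{u:S_u\neq 0\}|=2^n$ (note $\sum_u|S_u|=2^{2n}$, not $2^n$ as written), so $|\mathrm{supp}(g)|\cdot|\mathrm{supp}(\widehat{g}\,)|=2^{2n}$ and the equality case of the Donoho--Stark uncertainty principle over $\z{2n}$ yields precisely that $U$ is an $n$-dimensional flat and $g$ is $\pm$ a character on it, i.e.\ $f|_U$ is affine. That ``flat-spectrum implies coset'' lemma is standard and citable, but in your write-up it is a black box; you should either cite it or prove it, since it carries all the weight of ruling out the non-flat $U$ you worry about.
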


In the case of $f \in \MFC{2n}$ there always exists a bent function at the distance $2^n$ from $f$, i.e. we can define the set
\begin{equation}
    \near{f} = \{ f \oplus \charfunc{U} : U \in \asnk{2n}{n} \text{ and } f|_{U} \text{ is affine }\}   
\end{equation}
and refer to it as \textit{the set of all closest bent functions to} $f$. Similarly, 
\begin{equation}
    \near{\mathcal{K}} = \bigcup_{f \in \mathcal{K}} \near{f} \setminus \mathcal{K}  
\end{equation}
is \textit{the set of all closest bent functions to the class} $\mathcal{K}$, where $\mathcal{K}$ is either $ \MF{2n}$ or $\MFC{2n}$. 
Also, $|\near{\MF{2n}}|$ and $|\near{\MFC{2n}}|$ are directly connected with $|\MFSP{2n}|$ and $|\MFCSP{2n}|$:
\begin{equation*}
    |\MFSP{2n}| = |\near{\MF{2n}}| + |\MF{2n}|,\ 
    |\MFCSP{2n}| = |\near{\MFC{2n}}| + |\MFC{2n}|
\end{equation*}
since $f \oplus \charfunc{\xspace{n}} \in \MF{2n}$ for any $f \in \MF{2n}$.

\subsection{The case of $f \in \MF{2n}$} \label{sec:criterionRe}

The criterion~\cite[Theorem 3]{BykovKolomeec2023} describes $\near {f}$ for any $f\in\MF{2n}$. We provide it swapping $x$ and $y$. Also, $\Span{M}$ is the linear span of the rows of a matrix $M$ over $\z{}$.

\begin{proposition}[see~\cite{BykovKolomeec2023}]\label{state:criterionMMFOld}
  Let $f_{\pi, \varphi}\in \MF{2n}$ and $U = (a,b) \oplus \Span{M}$ for $a, b \in \z{n}$ and $M$ over $\z{}$ represented as 
  \begin{equation}\label{eq:basisMatrixBlockForm}
  M =
  \begin{pmatrix}
    T & S \\
    P & 0
  \end{pmatrix},
  \end{equation}
  where $S$ and $P$ of size $k \times n$ and $(n-k) \times n$ have full rank, $0 \leq k \leq n$ and $T$ is arbitrary of size $k \times n$.
  Then $f_{\pi, \varphi}\oplus \charfunc{U} \in \bentset{2n}$ $\iff$ the following conditions are satisfied:
  \begin{enumerate}
    \item $\pi(b \oplus \Span{S}) = \pi(b) \oplus \Span{P}^{\bot}$,
    \item $u \in \z{k}\mapsto \iprod{uT \oplus a}{\pi(uS \oplus b)}\oplus \varphi (uS \oplus b)$ is affine.
  \end{enumerate}
\end{proposition}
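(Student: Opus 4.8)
The plan is to derive this from the basic criterion, Proposition~\ref{th:BasicCriterion}: since $|U| = 2^{n}$, the function $f_{\pi,\varphi}\oplus\charfunc{U}$ is bent if and only if $U\in\asnk{2n}{n}$ and $f_{\pi,\varphi}|_{U}$ is affine. First I would observe that the condition $U\in\asnk{2n}{n}$ is automatic from the hypotheses: the rows of $M$ are linearly independent, because any nontrivial $\z{}$-combination of the upper $k$ rows of $M$ has a nonzero part in the last $n$ columns (where $S$ sits, and $\operatorname{rank}S = k$), whereas every combination of the lower $n-k$ rows has zero part there; and the lower rows are independent among themselves since $\operatorname{rank}P = n-k$. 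Hence $\operatorname{rank}M = n$, i.e. $\dim U = n$. Equivalently, the linear map $(u,v)\in\z{k}\times\z{n-k}\mapsto(uT\oplus vP,\ uS)$ is injective, so $\psi\colon(u,v)\mapsto(a,b)\oplus(uT\oplus vP,\ uS)$ is an affine bijection from $\z{k}\times\z{n-k}$ onto $U$, and $f_{\pi,\varphi}|_{U}$ is affine if and only if $F:=f_{\pi,\varphi}\circ\psi$ is affine on $\z{k}\times\z{n-k}$.

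Next I would compute $F$ explicitly. Writing $y=b\oplus uS$, which does not depend on $v$, and expanding the inner product,
\[
  F(u,v)=\iprod{a\oplus uT\oplus vP}{\pi(b\oplus uS)}\oplus\varphi(b\oplus uS)=\iprod{vP}{\pi(b\oplus uS)}\ \oplus\ g(u),
\]
where $g(u):=\iprod{a\oplus uT}{\pi(b\oplus uS)}\oplus\varphi(b\oplus uS)$ is exactly the function of condition~2. Let $\ell(u)\in\z{n-k}$ be the vector whose $i$-th coordinate is $\iprod{P_{i}}{\pi(b\oplus uS)}$, where $P_{i}$ is the $i$-th row of $P$; then $\iprod{vP}{\pi(b\oplus uS)}=\iprod{v}{\ell(u)}$, so $F(u,v)=\iprod{v}{\ell(u)}\oplus g(u)$.

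Now the affinity of $F$ decouples. Since $g$ depends only on $u$, $F$ is affine if and only if $g$ is affine in $u$ \emph{and} the map $(u,v)\mapsto\iprod{v}{\ell(u)}$ is affine. The first is precisely condition~2. For the second, $(u,v)\mapsto\iprod{v}{\ell(u)}$ is affine if and only if $\ell$ is constant on $\z{k}$: if $\ell$ were nonconstant, its nonconstant part would produce a genuine bilinear, hence quadratic, term in $(u,v)$ that nothing in $F$ can cancel (seen either via the algebraic normal form or via the vanishing of all second-order derivatives of $F$). Finally, $\ell$ is constant if and only if $\iprod{P_{i}}{\pi(b\oplus uS)}=\iprod{P_{i}}{\pi(b)}$ for all $i$ and all $u\in\z{k}$, i.e. $\pi(b\oplus uS)\oplus\pi(b)\in\Span{P}^{\bot}$ for all $u$. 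As $\operatorname{rank}S=k$, the argument $b\oplus uS$ runs exactly over the coset $b\oplus\Span{S}$, so this reads $\pi(b\oplus\Span{S})\subseteq\pi(b)\oplus\Span{P}^{\bot}$; since $\pi$ is a bijection the left side has $2^{k}$ elements and $\Span{P}^{\bot}$ has dimension $n-(n-k)=k$, so the right side also has $2^{k}$ elements, forcing equality — exactly condition~1 (conversely condition~1 visibly makes $\ell$ constant). Combining, $F$ is affine $\iff$ conditions~1 and~2 both hold, which is the assertion.

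I expect the only genuinely non-routine points to be (i) the rigorous argument that a nonconstant coefficient vector $\ell$ destroys affinity of $F$ — it must be phrased so that the $v$-cross-terms provably cannot be absorbed into $g(u)$, which is cleanest via the vanishing of all second-order derivatives of $F$ — and (ii) the cardinality step upgrading the set inclusion $\pi(b\oplus\Span{S})\subseteq\pi(b)\oplus\Span{P}^{\bot}$ to the equality of condition~1. Everything else is bookkeeping with the block form~(\ref{eq:basisMatrixBlockForm}) of $M$, and both directions of the equivalence fall out of the single decomposition $F(u,v)=\iprod{v}{\ell(u)}\oplus g(u)$.
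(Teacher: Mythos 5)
The paper does not prove this proposition at all --- it is imported verbatim (up to swapping $x$ and $y$) from \cite{BykovKolomeec2023}, so there is no internal proof to compare your argument against. Judged on its own, your derivation from Proposition~\ref{th:BasicCriterion} is correct and complete in outline: the rank computation showing $U\in\asnk{2n}{n}$ is right, the parametrization $\psi(u,v)=(a\oplus uT\oplus vP,\ b\oplus uS)$ is an affine bijection onto $U$ (so affinity of $f_{\pi,\varphi}|_U$ is indeed equivalent to affinity of $F=f_{\pi,\varphi}\circ\psi$, using that $M$ has a right inverse to extend $\psi^{-1}$ affinely), the decomposition $F(u,v)=\iprod{v}{\ell(u)}\oplus g(u)$ is exact, and the decoupling via $F(u,0)=g(u)$ is valid. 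The one step you flag as delicate --- that affinity of $(u,v)\mapsto\iprod{v}{\ell(u)}$ forces $\ell$ constant --- does not need ANF or second-order derivatives: if that map equals $\iprod{\alpha}{u}\oplus\iprod{\beta}{v}\oplus c$, then setting $v=0$ gives $\alpha=0$ and $c=0$, whence $\iprod{v}{\ell(u)\oplus\beta}=0$ for all $v$, i.e. $\ell(u)=\beta$ for every $u$. With that substitution the proof is airtight; the cardinality upgrade from inclusion to the equality in condition~1 is exactly as you state, since $\pi$ is a bijection and $\dim\Span{P}^{\bot}=k$.
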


Let us describe a construction allowing us to represent any $U \in \asnk{2n}{n}$ in the unique way.
\begin{proposition}\label{state:subspaceRepresentation}
    Any $U \in \asnk{2n}{n}$ can be uniquely expressed as 
    $$
        \sconstr{n}{L}{R}{H}{\infomaps} = \{ (\coords{H}{\infomaps(R^{\bot})}(y) \oplus z, y) : y \in L, z \in R\}, 
    $$
    where $L \in \asnk{n}{k}$, $R \in \snk{n}{n - k}$,  and $H: L \to \z{k}$ is affine, $k \in \{0, \ldots, n\}$. 
    Moreover, 
    $\sconstr{n}{L}{R}{H}{\infomaps}$ is linear $\iff$ $L$ and $H$ are linear. Also, $\linsp{\sconstr{n}{L}{R}{H}{\infomaps}} \cap \xspace{n} = R$.
\end{proposition}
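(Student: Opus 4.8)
The plan is to describe an explicit bijection between the set of all $U\in\asnk{2n}{n}$ and the set of triples $(L,R,H)$ as in the statement, and then check the two ``moreover'' claims. First I would fix $U\in\asnk{2n}{n}$ and consider its projection onto the last $n$ coordinates, i.e. $L=\{y\in\z{n} : (x,y)\in U \text{ for some } x\in\z{n}\}$; this is an affine subspace, say of dimension $k$. The fibre over any $y\in L$ has the form $x_0(y)\oplus R_0$ for a fixed linear subspace $R_0\subseteq\z{n}$ independent of $y$ (this is the standard fact that a coset of a linear subspace, intersected with the preimages of an affine projection, has constant fibre), and $\dim R_0 = n-k$. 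Setting $R=R_0$, we have $\dim R = n-k$ and $\linsp{U}\cap\xspace{n}=\{(x,0):(x,0)\in\linsp{U}\}=R$, which already gives the last assertion once the representation is established. What remains is to show the fibre can be parametrised in the normalised way dictated by $\infomaps(R^\bot)$: since $\infomaps(R^\bot)$ is an information set of $R^\bot$, its complement $\overline{\infomaps(R^\bot)}$ is an information set of $(R^\bot)^\bot=R$, hence every coset of $R$ in $\z{n}$ is uniquely written as $\coords{c}{\infomaps(R^\bot)}\oplus R$ for a unique $c\in\z{k}$. Defining $H:L\to\z{k}$ by sending $y$ to the coordinate vector $c$ of the coset containing the fibre over $y$, we get $U=\sconstr{n}{L}{R}{H}{\infomaps}$, and $H$ is affine because $U$ is an affine subspace (the map $y\mapsto$ fibre is affine in the appropriate sense, and picking out the $\infomaps(R^\bot)$-coordinates of a canonical coset representative is a linear read-off).

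For uniqueness, I would argue that each of $L$, $R$, $H$ is forced by $U$: $L$ is recovered as the projection to the last $n$ coordinates, $R$ as $\linsp{U}\cap\xspace{n}$, and then $H$ is determined because, for each $y\in L$, the fibre $\{x:(x,y)\in U\}$ is a fixed coset of $R$, and $H(y)$ is the unique $c\in\z{k}$ with $\coords{c}{\infomaps(R^\bot)}\oplus R$ equal to that coset — the point being that the choice of $\infomaps(R^\bot)$ is deterministic, so there is no ambiguity in which representative of the coset is ``the'' one. Conversely, distinct triples clearly give distinct sets (they differ in projection, or in $\linsp{U}\cap\xspace{n}$, or in some fibre), and every triple $(L,R,H)$ of the stated type does produce a genuine $n$-dimensional affine subspace of $\z{2n}$: it is a union of $|L|=2^k$ parallel translates of $\{0\}\times\{0\}\oplus(R\times\{0\})$... more precisely it contains $\linsp{\sconstr{n}{L}{R}{H}{\infomaps}}$ which is visibly linear of dimension $k+(n-k)=n$, so $\sconstr{n}{L}{R}{H}{\infomaps}$ is an affine subspace of dimension $n$; a short rank computation confirms the dimension.

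For the first ``moreover'' claim: $\sconstr{n}{L}{R}{H}{\infomaps}$ contains $0$ iff $0\in L$ and $\coords{H(0)}{\infomaps(R^\bot)}\in R$, i.e. iff $0\in L$ and $H(0)=0$ (using uniqueness of the coset representative), and it is closed under $\oplus$ iff additionally $L$ is closed under $\oplus$ and $H$ is additive; combining, linearity of $\sconstr{n}{L}{R}{H}{\infomaps}$ is equivalent to linearity of $L$ together with linearity of $H$. This is a routine unwinding of the definition once the normalisation via $\infomaps$ is in place. I expect the only real subtlety — and the step I would write out most carefully — to be the verification that $H$ as defined is affine and that the whole correspondence respects the deterministic choice of information set, since that is exactly what makes the representation \emph{unique} rather than merely \emph{existing}; the rest is bookkeeping with cosets and ranks.
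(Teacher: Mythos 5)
Your proposal is correct in substance but takes a genuinely different route from the paper's for the surjectivity step. The paper shows that every $U \in \asnk{2n}{n}$ admits such a representation by taking a basis matrix in the block form~(\ref{eq:basisMatrixBlockForm}) and using Gaussian elimination to normalise the block $T$ to $\coords{W}{I}$; you instead recover the triple intrinsically: $L$ as the projection onto the last $n$ coordinates, $R$ as the common direction of the fibres, i.e.\ $\linsp{U} \cap \xspace{n}$, and $H$ from the canonical coset representatives determined by $\infomaps$. Your route makes uniqueness essentially automatic, since every component of the triple is read off from $U$ itself; the paper's route has the advantage of tying the representation directly to the block matrix form appearing in Proposition~\ref{state:criterionMMFOld}, which is exactly what the proof of Theorem~\ref{th:criterionMMFnew} needs. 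The affinity of your $H$ (the step you flag as delicate) does go through via the triple-sum characterisation of affine objects over $\z{}$: for $(x_i,y_i)\in U$, $i=1,2,3$, the point $(x_1\oplus x_2\oplus x_3,\, y_1\oplus y_2\oplus y_3)$ again lies in $U$, and comparing the unique coset representatives $\coords{c}{\infomaps(R^{\bot})}\oplus R$ yields $H(y_1\oplus y_2\oplus y_3)=H(y_1)\oplus H(y_2)\oplus H(y_3)$.

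One justification in your write-up is wrong as stated: an affine subspace does not in general contain $\linsp{\sconstr{n}{L}{R}{H}{\infomaps}}$, so ``it contains $\linsp{\cdot}$, which is visibly linear of dimension $n$'' does not establish that $\sconstr{n}{L}{R}{H}{\infomaps}$ is an $n$-dimensional affine subspace; nor does being a union of $2^k$ parallel translates of $R\times\{0\}$ suffice on its own, since such a union need not be affine unless the translates are arranged affinely. The claim itself is true and the fix is routine: either verify closure under triple sums as above and count the $|L|\cdot|R|=2^n$ distinct elements, or do what the paper does, namely split $H=A|_L\oplus a$ with $A$ linear and exhibit the set as a coset of the linear subspace $\sconstr{n}{\linsp{L}}{R}{A}{\infomaps}$. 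As written, though, this step is not justified and should be reworked.
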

\begin{proof}
    First of all, $\sconstr{n}{L}{R}{H}{\infomaps} \in \asnk{2n}{n}$. Indeed, let $L' = \linsp{L}$, $b \in L$ and $I = \infomaps(R^{\bot})$, i.e. $\overline{I}$ is some information set of $R$. By definition, there exist linear $A: \z{n} \to \z{k}$ and $a \in \z{k}$ such that $H = A|_L \oplus a$. Thus, 
    \begin{multline}\label{eq:affineSubspConstr}
         \sconstr{n}{L}{R}{H}{\infomaps} = \{(\coords{A}{I}(b \oplus y) \oplus \coords{a}{I} \oplus z, b \oplus y) : y \in L', z \in R\} \\
         = (\coords{A}{I}(b) \oplus \coords{a}{I}, b) \oplus \sconstr{n}{L'}{R}{A}{\infomaps}.
    \end{multline} 
    $U' = \sconstr{n}{L'}{R}{A}{\infomaps} \in \snk{2n}{n}$ holds since $L'$, $R$, $A|_{L'}$ are linear and $|U'| = 2^{\dim L} \cdot 2^{\dim R} = 2^n$. Also, $U' \cap \xspace{n} = R$. 
    The equality~(\ref{eq:affineSubspConstr}) gives us $2^{n - \dim R} \cdot 2^{n - \dim L} = 2^n$ cosets of $U'$ which  are distinct for distinct $(a, b \oplus L')$. 

    The representation is unique. 
    Indeed, distinct $(L', R)$ generate distinct $U'$. Also, if $A(y) \neq A'(y)$ for some $y \in L'$, then the cosets $\coords{A}{I}(y) \oplus R$ and $\coords{A'}{I}(y) \oplus R$ are distinct due to the choice of $I$. This implies    
    $\sconstr{n}{L'}{R}{A}{\infomaps} \neq \sconstr{n}{L'}{R}{A'}{\infomaps}$.

    Hence, we only need to prove that any $U' \in \snk{2n}{n}$ can be represented in this way. 
    Let us choose a basis matrix $M$ of $U$ in the form~(\ref{eq:basisMatrixBlockForm}) that can be obtained, for instance, from the reduced row echelon form after Gaussian elimination. Moreover, Gaussian elimination can transform $T$ to $\coords{W}{I}$ without changing $P$ and $S$, where $W$ is a matrix of size $k \times k$ and $I = \infomaps(\Span{P}^{\bot})$. This means that
    \begin{multline}\label{eq:blockToSubspaceRepr}
        U' = \{(\coords{(uW)}{I} \oplus vP, uS) : u \in \z{k}, v \in \z{n - k}\} \\= \{(\coords{(V(y)W)}{I} \oplus z, y) : y \in \Span{S}, z \in \Span{P}\},
    \end{multline}
    where $V: y \mapsto u$ such that $uS = y$. Since $V$ is linear, $U' = \sconstr{n}{\Span{S}}{\Span{P}}{W \circ V}{\infomaps}$. 
\end{proof}

Next, we rewrite the criterion in terms of Proposition~\ref{state:subspaceRepresentation}. 

\begin{theorem}\label{th:criterionMMFnew}
  Let $f_{\pi, \varphi} \in \MF{2n}$. Then $\near{f_{\pi, \varphi}}$ consists of all $f_{\pi, \varphi} \oplus \charfunc{\sconstr{n}{L}{\linsp{\pi(L)}^{\bot}}{H}{\infomaps}}$, 
  where
  \begin{enumerate}
    \item $L \in \subsp{k}{\pi}$ for $0 \leq k \leq n$, 
    \item both $H: L \to \z{k}$ and 
    $
      x \in L \mapsto \iprod{H(x)}{\perk{\pi}{I}(x)} \oplus \varphi(x)
    $ are affine, $I = \infomaps(\linsp{\pi(L)})$.
  \end{enumerate}
  Distinct $(L, H)$ correspond to distinct bent functions. 
\end{theorem}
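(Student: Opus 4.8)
The plan is to deduce Theorem~\ref{th:criterionMMFnew} from Proposition~\ref{state:criterionMMFOld} by rewriting an arbitrary $U\in\asnk{2n}{n}$ in the canonical form of Proposition~\ref{state:subspaceRepresentation}. So I would start from the unique representation $U=\sconstr{n}{L}{R}{H}{\infomaps}$ with $L\in\asnk{n}{k}$, $R\in\snk{n}{n-k}$, $H$ affine, fix a base point $b\in L$, and put $I=\infomaps(R^{\bot})$. Using the translation between the block form~(\ref{eq:basisMatrixBlockForm}) and the construction $\sconstr{n}{L}{R}{H}{\infomaps}$ carried out in the proof of Proposition~\ref{state:subspaceRepresentation} (see~(\ref{eq:affineSubspConstr}) and~(\ref{eq:blockToSubspaceRepr})), I would choose a basis matrix $M=\begin{pmatrix}T&S\\P&0\end{pmatrix}$ of $\linsp{U}$ and a shift $(a,b)$ with $\Span{S}=\linsp{L}$, $\Span{P}=R$ and, crucially, $uT\oplus a=\coords{H}{I}(uS\oplus b)$ for all $u\in\z{k}$. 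This exact identity---not just equality of the cosets modulo $R$---is achievable because $\overline{I}$ is an information set of $R$, so every coset $uT\oplus a\oplus R$ has a unique representative supported on the positions in $I$, and by definition of $\sconstr{n}{L}{R}{H}{\infomaps}$ that representative equals $\coords{H}{I}(uS\oplus b)$; one then takes $T$ and $a$ to realise precisely this representative, which does not change $\linsp{U}$.

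Given this normalisation, I would rewrite the two conditions of Proposition~\ref{state:criterionMMFOld}. Condition~1, $\pi(b\oplus\Span{S})=\pi(b)\oplus\Span{P}^{\bot}$, is equivalent to saying that $\pi(L)\in\asnk{n}{k}$, i.e.\ $L\in\subsp{k}{\pi}$, together with $\linsp{\pi(L)}=\Span{P}^{\bot}=R^{\bot}$, i.e.\ $R=\linsp{\pi(L)}^{\bot}$; hence $U=\sconstr{n}{L}{\linsp{\pi(L)}^{\bot}}{H}{\infomaps}$ and $I=\infomaps(R^{\bot})=\infomaps(\linsp{\pi(L)})$, which is exactly the shape and the information set appearing in the theorem. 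For condition~2, since $\coords{H}{I}(y)$ is supported on $I$ we get, with $y=uS\oplus b$,
\[\iprod{uT\oplus a}{\pi(uS\oplus b)}=\iprod{\coords{H}{I}(y)}{\pi(y)}=\iprod{H(y)}{\perk{\pi}{I}(y)},\]
and since $u\mapsto y=uS\oplus b$ is an affine bijection of $\z{k}$ onto $L$, the map in condition~2 is affine if and only if $x\in L\mapsto\iprod{H(x)}{\perk{\pi}{I}(x)}\oplus\varphi(x)$ is affine. Together with the affinity of $H$, which is part of the parametrisation, this is precisely condition~2 of Theorem~\ref{th:criterionMMFnew}. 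Running the same chain backwards shows that every pair $(L,H)$ obeying conditions~1--2 of the theorem produces an element of $\near{f_{\pi,\varphi}}$: it yields a $U\in\asnk{2n}{n}$ for which Proposition~\ref{state:criterionMMFOld} (hence Proposition~\ref{th:BasicCriterion}) gives $f_{\pi,\varphi}\oplus\charfunc{U}\in\bentset{2n}$.

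The last sentence of the theorem is then immediate: if $(L,H)\neq(L',H')$, the triples $(L,\linsp{\pi(L)}^{\bot},H)$ and $(L',\linsp{\pi(L')}^{\bot},H')$ are distinct, so by the uniqueness in Proposition~\ref{state:subspaceRepresentation} the subspaces $\sconstr{n}{L}{\linsp{\pi(L)}^{\bot}}{H}{\infomaps}$ and $\sconstr{n}{L'}{\linsp{\pi(L')}^{\bot}}{H'}{\infomaps}$ differ, hence so do the corresponding characteristic functions and the bent functions $f_{\pi,\varphi}\oplus\charfunc{U}$.

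I expect the main obstacle to be the normalisation in the first paragraph: arranging the basis matrix so that $uT\oplus a$ is literally $\coords{H}{I}(uS\oplus b)$. With this identity condition~2 collapses at once; with an arbitrary representative one is left with an extra summand $\iprod{r(y)}{\pi(y)}=\iprod{r(y)}{\pi(b)}$ with $r(y)\in R=\linsp{\pi(L)}^{\bot}$ that would have to be shown affine in $y$ and absorbed, so it is cleaner to pin down the canonical representative from the start. A lesser point to keep straight is that the information set returned by $\infomaps$ is the same in all three places it occurs (in the parametrisation of $U$, in the reading of condition~2 coming from the proof of Proposition~\ref{state:subspaceRepresentation}, and in the statement of Theorem~\ref{th:criterionMMFnew}), which holds because $R^{\bot}=\Span{P}^{\bot}=\linsp{\pi(L)}$.
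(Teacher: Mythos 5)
Your proposal is correct and follows essentially the same route as the paper's proof: both pass through the canonical representation of Proposition~\ref{state:subspaceRepresentation}, normalise the basis matrix so that $uT\oplus a=\coords{H}{I}(uS\oplus b)$ (the paper does this explicitly by writing $H(x)=xA\oplus c$ and taking $T=\coords{(SA)}{I}$), and then translate the two conditions of Proposition~\ref{state:criterionMMFOld} into $L\in\subsp{k}{\pi}$ with $R=\linsp{\pi(L)}^{\bot}$ and the affinity of $x\mapsto\iprod{H(x)}{\perk{\pi}{I}(x)}\oplus\varphi(x)$, using that $\coords{H}{I}$ vanishes outside $I$. The "main obstacle" you flag is exactly the step the paper resolves by the same computation, and your distinctness argument via the uniqueness in Proposition~\ref{state:subspaceRepresentation} matches the paper's.
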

\begin{proof}
    According to Proposition~\ref{state:subspaceRepresentation}, we can represent any $U \in \asnk{2n}{n}$ as $U = \sconstr{n}{b \oplus L}{R}{H}{\infomaps}$, where $L \in \snk{n}{k}$ and $R \in \snk{n}{n - k}$ whose basis matrices are $S$ and $P$, and $H(x) = xA \oplus c$ for some matrix $A$ of size $n \times k$. Since $x \in b \oplus L$ can be represented as $b \oplus uS$ and $z \in R$ as $vP$, where $u \in \z{k}$, $v \in \z{n - k}$, each element of $U$ can be represented as
    \begin{multline*}
        (\coords{H}{I}(uS \oplus b) \oplus vP, uS \oplus b) = (\coords{((uS \oplus b)A \oplus c)}{I} \oplus vP, b \oplus uS) \\
        = (b\coords{A}{I} \oplus \coords{c}{I}, b) \oplus (u\coords{(SA)}{I} \oplus vP, uS), \ u \in \z{k}, \ v \in \z{n - k}.
    \end{multline*}
    Therefore, one of basis matrices of $U$ is 
    $$
    \begin{pmatrix}
    \coords{(SA)}{I} & S \\
    P & 0
  \end{pmatrix}.    
    $$ 
    The first condition of Proposition~\ref{state:criterionMMFOld} gives us that $b \oplus L \in \subsp{k}{\pi}$ and $R = \linsp{\pi(b \oplus L)}^{\bot}$. The second condition is the affinity of
    \begin{multline*}
        u \in \z{k} \mapsto \iprod{u\coords{(SA)}{I} \oplus \coords{(bA \oplus c)}{I}}{\pi(uS \oplus b)}\oplus \varphi (uS \oplus b)  \\ 
        = \iprod{\coords{H}{I}(uS \oplus b)}{\pi(uS \oplus b)}\oplus \varphi (uS \oplus b)
    \end{multline*}
    which is equivalent to the affinity of the function $x \in b \oplus L \mapsto \iprod{H(x)}{\perk{\pi}{I}(x)}\oplus \varphi (x)$ due to $\coords{H}{I}_i \equiv 0$ for $i \in \overline{I}$. 
\end{proof}

For shortness, we will denote any bent function from $\near{f_{\pi, \varphi}}$ or $\near{\MF{2n}}$ by $\nbent{\pi}{\varphi}{L}{H}$ which is equal to
\begin{equation}\label{eq:nearbentDenot}
     f_{\pi, \varphi}(x, y) \oplus \charfunc{L}(y) \cdot \charfunc{\coords{H}{\infomaps(\linsp{\pi(L)})}(y) \oplus \linsp{\pi(L)}^{\bot}}(x),
\end{equation}
where $L$ and $H$ satisfy the conditions of Theorem~\ref{th:criterionMMFnew}.
Though $H$ is not defined on $\z{n} \setminus L$, $\charfunc{L}(y) = 0$ for any $y \notin L$.

We also note that the properties of $\pi \in \fper{n}$ related to its $\subsp{k}{\pi}$ were studied, for instance, in~\cite{ClarkEtAl20071, LiEtAl2020, KolomeecBykov2024, Kolomeec2024}. 

\subsection{The most important special cases}\label{sec:DimL2Descr}

Let us pay attention to the most important special cases of Theorem~\ref{th:criterionMMFnew}. We start with one described in~\cite[Corollary 2]{BykovKolomeec2023} that gives us the lower bound $\lbound{2n}$ for $|\near{f_{\pi, \varphi}}|$.
\begin{proposition}[see~\cite{BykovKolomeec2023}]\label{state:mcfarlandcase}
    Let $f_{\pi, \varphi} \in \MF{2n}$. Then $|\near{f_{\pi, \varphi}} \cap \MF{2n}| = \lbound{2n}$, where $\lbound{2n} = 2^{2n + 1} - 2^n$. Moreover, $\nbent{\pi}{\varphi}{L}{H} \in \MF{2n}$ $\iff$ $\dim L \leq 1$, where $\nbent{\pi}{\varphi}{L}{H} \in \near{f_{\pi, \varphi}}$, see~(\ref{eq:nearbentDenot}). 
\end{proposition}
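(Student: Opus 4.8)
The plan is to reduce membership in $\MF{2n}$ to affinity on the cosets of $\xspace{n}$, combine this with the explicit form~(\ref{eq:nearbentDenot}) of the elements of $\near{f_{\pi,\varphi}}$ together with Theorem~\ref{th:criterionMMFnew}, and then enumerate. I would first record two elementary facts. (i) A bent function $g\colon\z{2n}\to\z{}$ lies in $\MF{2n}$ if and only if $g$ is affine on every coset $\z{n}\times\{y\}$ of $\xspace{n}$: the ``only if'' is the remark made for the functions $f_{\pi,\varphi}$ in the preliminaries, while ``if'' holds because a function affine on each such coset has the form $(x,y)\mapsto\iprod{x}{\sigma(y)}\oplus\psi(y)$, which is bent exactly when $\sigma$ is a permutation. (ii) The characteristic function of a coset of a linear subspace $V\subseteq\z{n}$ has algebraic degree $n-\dim V$, so it is affine if and only if $\dim V\geq n-1$.

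Next I would analyze $g=\nbent{\pi}{\varphi}{L}{H}\in\near{f_{\pi,\varphi}}$ coset by coset via~(\ref{eq:nearbentDenot}), writing $k=\dim L$; recall $g$ is a bent function. For $y\notin L$ the factor $\charfunc{L}(y)$ vanishes, so $g$ coincides there with the affine restriction $f_{\pi,\varphi}|_{\z{n}\times\{y\}}$. For $y\in L$ the restriction of $g$ to $\z{n}\times\{y\}$ equals that affine function plus the characteristic function of a coset of $\linsp{\pi(L)}^{\bot}$; since $L\in\subsp{k}{\pi}$ forces $\dim\linsp{\pi(L)}=k$, this subspace has dimension $n-k$, and by (ii) the added term has degree $k$. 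Hence for $k\leq 1$ the restriction is affine, while for $k\geq 2$ its degree is $k\geq 2$ (the degree-$k$ part is not destroyed by adding an affine function), so it is not affine. With (i) this gives $\nbent{\pi}{\varphi}{L}{H}\in\MF{2n}\iff\dim L\leq 1$, which is the second assertion.

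It then remains to count, using that distinct $(L,H)$ give distinct bent functions (Theorem~\ref{th:criterionMMFnew}). For $\dim L=0$: every singleton lies in $\subsp{0}{\pi}$, the map $H\colon L\to\z{0}$ is the unique trivial one, and the affinity conditions are vacuous on a one-point domain, so all $|\asnk{n}{0}|=2^{n}$ choices of $L$ are admissible (these are exactly the functions $f_{\pi,\varphi}$ with one value of $\varphi$ flipped). For $\dim L=1$: every line lies in $\subsp{1}{\pi}$, since the image of a two-element set is again a two-element set, and because any function on the two-point set $L$ is affine, all $4$ maps $H\colon L\to\z{}$ satisfy condition~2 of Theorem~\ref{th:criterionMMFnew}; thus each of the $|\asnk{n}{1}|=2^{n-1}(2^{n}-1)$ lines contributes $4$ functions. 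Adding up, $|\near{f_{\pi,\varphi}}\cap\MF{2n}|=2^{n}+4\cdot2^{n-1}(2^{n}-1)=2^{2n+1}-2^{n}=\lbound{2n}$.

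The main obstacle is fact (i) — cleanly establishing that $\MF{2n}$ is precisely the set of bent functions affine on every coset of $\xspace{n}$, including the step that bentness forces $\sigma$ to be a permutation — together with the (standard but worth stating carefully) degree computation in (ii). Once these are in place, the case split on $\dim L$ and the enumeration are routine.
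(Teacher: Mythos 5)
Your proposal is correct. Note that the paper itself gives no proof of this proposition --- it is quoted from \cite{BykovKolomeec2023} (with an alternative description referenced in \cite[Proposition~10]{Kolomeec2017}) --- so there is no in-paper argument to compare against; but your derivation is a clean, self-contained proof within the paper's own framework. Both halves check out: the characterization of $\MF{2n}$ as the bent functions affine on every coset of $\xspace{n}$ is exactly the fact the paper invokes in Lemma~\ref{lemma:mfdescrequivalence} (bentness of $(x,y)\mapsto\iprod{x}{\sigma(y)}\oplus\psi(y)$ forces $\sigma$ invertible), the degree count for the indicator of a coset of $\linsp{\pi(L)}^{\bot}$ matches the argument the paper uses inside Lemma~\ref{lemma:piPhiOutsideL}, and the enumeration $2^{n}\cdot 1+2^{n-1}(2^{n}-1)\cdot 4=2^{2n+1}-2^{n}$ is justified by the distinctness clause of Theorem~\ref{th:criterionMMFnew}, whose proof does not rely on this proposition, so there is no circularity.
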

One more description of $\near{f} \cap \MF{2n}$ for $f \in \MF{2n}$ can be found in~\cite[Proposition~10]{Kolomeec2017}. We also note that this bound is accurate for some $n$~\cite{BykovKolomeec2023, BykovKolomeec2025}.

Finally, we consider the case of $\dim L = 2$ which generates most of the bent functions from $\near{\MF{2n}}$, see Sections~\ref{sec:nearestAverage} and \ref{sec:nearestNumber}.
Let us recall the following well-known property.
\begin{lemma}\label{lemma:affinity2}
  Let $H: L \to U$, where $L$ and $U$ are affine subspaces of $\z{n}$ and $\z{k}$, $\dim L = 2$ and $\dim U \leq 2$.
  Then $H$ is affine $\iff$ $\bigoplus_{x \in L} H(x) = 0$.
  Any invertible $H$ is affine.
\end{lemma}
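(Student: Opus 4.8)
The plan is to reduce the statement to a finite linear-algebra fact about the two-dimensional space $L$. First I would fix the parametrization: pick an affine basis of $L$, say $L = \{x_0, x_0 \oplus e_1, x_0 \oplus e_2, x_0 \oplus e_1 \oplus e_2\}$ for linearly independent $e_1, e_2 \in \z{n}$, so that $L$ consists of exactly these four points. A function $H : L \to \z{k}$ (or into an affine subspace $U \subseteq \z{k}$, which is harmless since affinity is checked coordinatewise in $\z{k}$ and restricting the codomain does not change it) is affine precisely when it extends to a global affine map; since $\{x_0, x_0 \oplus e_1, x_0 \oplus e_2\}$ is an affine-independent triple, there is always a unique affine map $G : \z{n} \to \z{k}$ agreeing with $H$ on those three points, and $H$ is affine iff $H(x_0 \oplus e_1 \oplus e_2) = G(x_0 \oplus e_1 \oplus e_2)$. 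Writing $G$ explicitly as $G(x_0 \oplus t) = H(x_0) \oplus A(t)$ with $A$ linear and $A(e_i) = H(x_0) \oplus H(x_0 \oplus e_i)$, one computes $G(x_0 \oplus e_1 \oplus e_2) = H(x_0 \oplus e_1) \oplus H(x_0 \oplus e_2) \oplus H(x_0)$.

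The key step is then the observation that in characteristic two the sum over all four points collapses:
\[
  \bigoplus_{x \in L} H(x) = H(x_0) \oplus H(x_0 \oplus e_1) \oplus H(x_0 \oplus e_2) \oplus H(x_0 \oplus e_1 \oplus e_2).
\]
Comparing with the previous paragraph, $H$ is affine iff $H(x_0 \oplus e_1 \oplus e_2) = H(x_0) \oplus H(x_0 \oplus e_1) \oplus H(x_0 \oplus e_2)$, and moving everything to one side (again using that each term is its own additive inverse over $\z{}$) this is exactly $\bigoplus_{x \in L} H(x) = 0$. This proves the equivalence, and the hypothesis $\dim U \le 2$ plays no role here beyond making the statement of interest self-contained — the equivalence holds for any codomain.

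For the last sentence, suppose $H$ is invertible; since $|L| = 4$ this forces $|U| = 4$, i.e.\ $\dim U = 2$, and $H$ is a bijection from $L$ onto $U$. Then $\bigoplus_{x \in L} H(x)$ is the sum of all four elements of $U$; writing $U = u_0 \oplus L_U$ with $L_U = \linsp{U}$ a two-dimensional linear space, the four elements of $U$ sum to $4 u_0 \oplus \bigoplus_{v \in L_U} v = \bigoplus_{v \in L_U} v$, and a two-dimensional linear subspace $\{0, f_1, f_2, f_1 \oplus f_2\}$ sums to $0$ over $\z{}$. Hence $\bigoplus_{x \in L} H(x) = 0$ and $H$ is affine by the first part. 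The only mild obstacle is bookkeeping the codomain correctly — affinity into $\z{k}$ versus into an affine subspace $U$ — but since affinity is a coordinatewise condition this is immediate and does not require the subspace structure of $U$ at all; the whole argument is a two-line computation once the parametrization is set up.
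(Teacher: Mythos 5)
Your proof is correct and complete; note that the paper itself states this lemma as a ``well-known property'' and gives no proof at all, so there is nothing to compare against. The only cosmetic slip is the claim that the affine extension $G:\z{n}\to\z{k}$ agreeing with $H$ on the three points $x_0, x_0\oplus e_1, x_0\oplus e_2$ is \emph{unique} --- for $n>2$ it is not (the linear part may be chosen arbitrarily outside $\Span{e_1,e_2}$), but its restriction to $L$, and in particular its value at the fourth point $x_0\oplus e_1\oplus e_2$, is uniquely determined, which is all your argument uses. Both the equivalence via $\bigoplus_{x\in L}H(x)=0$ and the deduction that an invertible $H$ is affine (summing the four elements of the $2$-dimensional coset $U$ to get $0$) are sound.
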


\begin{theorem}\label{theorem:criterionMMFdim2}
  Let $f_{\pi, \varphi} \in \MF{2n}$ and $L \in \subsp{2}{\pi}$.
  Let $a, b, c \in L$ such that $\perk{\pi}{I}(a) = (0,1)$, $\perk{\pi}{I}(b) = (1,0)$ and $\perk{\pi}{I}(c) = (1,1)$, where $I = \infomaps(\linsp{\pi(L)})$.
  Then $\nbent{\pi}{\varphi}{L}{H} \in \near{f_{\pi, \varphi}}$ $\iff$  
  \begin{enumerate}
    \item $H(a), H(b), H_1(c)$ are arbitrary 5 bits,
    \item $H_2(c) = H_2(a) \oplus H_1(b) \oplus H_1(c) \oplus \sum_{x \in L} \varphi(x)$,
    \item $H(a \oplus b \oplus c) = H(a) \oplus H(b) \oplus H(c)$.
  \end{enumerate}
  These give us $|\near{f_{\pi, \varphi}}| \geq \lbound{2n} + 2^5|\subsp{2}{\pi}|$.
\end{theorem}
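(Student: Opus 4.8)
The plan is to specialise Theorem~\ref{th:criterionMMFnew} to the case $k=2$ and to turn the two affinity requirements appearing there into the explicit conditions~1--3 by means of Lemma~\ref{lemma:affinity2}.

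First I would fix coordinates on $L$. Since $L \in \subsp{2}{\pi}$, the map $\pi$ restricts to an affine bijection $L \to \pi(L)$ with $\pi(L) \in \asnk{n}{2}$, and because $I = \infomaps(\linsp{\pi(L)})$ is an information set of $\linsp{\pi(L)}$, hence of $\pi(L)$, the projection $z \mapsto \perk{z}{I}$ is a bijection of $\pi(L)$ onto $\z{2}$. Composing, $x \in L \mapsto \perk{\pi}{I}(x)$ is an affine bijection $L \to \z{2}$; in particular $a,b,c$ are well defined, and together with the unique $d \in L$ satisfying $\perk{\pi}{I}(d) = (0,0)$ they are the four elements of $L$. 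As $L$ is a $2$-dimensional affine subspace, $a \oplus b \oplus c \oplus d = 0$, so $d = a \oplus b \oplus c \in L$, which is what makes condition~3 meaningful, and one checks $\perk{\pi}{I}(a)\oplus\perk{\pi}{I}(b)\oplus\perk{\pi}{I}(c)\oplus\perk{\pi}{I}(d)=0$ consistently.

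Next, Theorem~\ref{th:criterionMMFnew} says $\nbent{\pi}{\varphi}{L}{H} \in \near{f_{\pi,\varphi}}$ iff $H : L \to \z{2}$ is affine and $g : x \in L \mapsto \iprod{H(x)}{\perk{\pi}{I}(x)} \oplus \varphi(x)$ is affine. Applying Lemma~\ref{lemma:affinity2} to $H$ (with $\dim L = 2$ and target $\z{2}$), affinity of $H$ is equivalent to $\bigoplus_{x \in L} H(x) = 0$, i.e. $H(a \oplus b \oplus c) = H(a) \oplus H(b) \oplus H(c)$, which is condition~3. Applying the same lemma to $g$ (target $\z{}$), affinity of $g$ is equivalent to $\bigoplus_{x \in L} g(x) = 0$; a direct evaluation of $\bigoplus_{x \in L}\iprod{H(x)}{\perk{\pi}{I}(x)}$ using $\perk{\pi}{I}(d)=(0,0)$, $\perk{\pi}{I}(a)=(0,1)$, $\perk{\pi}{I}(b)=(1,0)$, $\perk{\pi}{I}(c)=(1,1)$ gives $H_2(a)\oplus H_1(b)\oplus H_1(c)\oplus H_2(c)$, so the condition becomes $H_2(c) = H_2(a) \oplus H_1(b) \oplus H_1(c) \oplus \sum_{x \in L} \varphi(x)$, which is condition~2. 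Choosing the $5$ bits $H(a),H(b),H_1(c)$ freely and then letting~2 force $H_2(c)$ and~3 force $H(d)$ produces a bijection between these $5$-bit vectors and the affine maps $H$ satisfying both affinity requirements, so there are exactly $2^5$ admissible pairs $(L,H)$ for each $L \in \subsp{2}{\pi}$.

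Finally, for the lower bound I would invoke Proposition~\ref{state:mcfarlandcase}: it gives $|\near{f_{\pi,\varphi}} \cap \MF{2n}| = \lbound{2n}$ and, moreover, $\nbent{\pi}{\varphi}{L}{H} \in \MF{2n}$ iff $\dim L \le 1$. Since all $2^5 |\subsp{2}{\pi}|$ functions just constructed have $\dim L = 2$, they lie in $\near{f_{\pi,\varphi}} \setminus \MF{2n}$, and by the last assertion of Theorem~\ref{th:criterionMMFnew} (distinct $(L,H)$ yield distinct bent functions) they are pairwise distinct; adding the two disjoint families yields $|\near{f_{\pi,\varphi}}| \ge \lbound{2n} + 2^5 |\subsp{2}{\pi}|$. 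I expect the only delicate point to be the bookkeeping of the first step---verifying that $\perk{\pi}{I}|_L$ is genuinely an affine bijection so that $a,b,c,d$ and the identity $d = a\oplus b\oplus c$ are legitimate, and that the $5$-bit parametrisation is a bijection onto the solution set; the affinity computations themselves are routine once Lemma~\ref{lemma:affinity2} is in hand.
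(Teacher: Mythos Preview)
Your proposal is correct and follows essentially the same route as the paper: reduce to the second condition of Theorem~\ref{th:criterionMMFnew}, apply Lemma~\ref{lemma:affinity2} to both $H$ and the function $g$, and read off conditions~2 and~3 from the resulting $\bigoplus_{x\in L}$-identities. Your treatment is in fact more careful than the paper's on two points the paper leaves implicit: the well-definedness of $a,b,c$ (and $d=a\oplus b\oplus c$) via the bijectivity of $\perk{\pi}{I}|_L$, and the justification of the lower bound via Proposition~\ref{state:mcfarlandcase} together with the distinctness clause of Theorem~\ref{th:criterionMMFnew}.
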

\begin{proof}
  We only need to check the second condition of Theorem~\ref{th:criterionMMFnew}.
  Let $H: L \to \z{2}$ is arbitrary.
  According to Lemma~\ref{lemma:affinity2}, the function
  $
    x \in L \mapsto \iprod{H(x)}{\perk{\pi}{I}(x)} \oplus \varphi(x)
  $
  is affine $\iff$
  \begin{equation*}
    0 = \bigoplus_{x \in L} \left(\iprod{H(x)}{\perk{\pi}{I}(x)} \oplus \varphi(x)\right) 
    = \bigoplus_{x \in L} \varphi(x) \oplus H_2(a) \oplus H_1(b) \oplus H_1(c) \oplus H_2(c).
  \end{equation*}
  This is satisfied$\iff$ the second point is true.
  At the same time, the third point $H(a \oplus b \oplus c) = H(a) \oplus H(b) \oplus H(c)$ is satisfied $\iff$ $H$ is affine.
\end{proof}

\section{Intersections of $\near{f}$ and $\near{g}$ for $f, g \in \MF{2n}$}\label{sec:nearestToMF}

The calculation of the cardinalities of $\MFSP{2n}$ and $\near{\MF{2n}}$ requires information related to intersections of $\near{f}$ and $\near{g}$ for $f, g \in \MF{2n}$. This is the most technically complex section, but it contains, however, a briefly formulated result (see Remark~\ref{remark:mmfBentCoincidence}) which will be used only in Section~\ref{sec:nearestNumber}.
We start with the following lemma.   

\begin{lemma}\label{lemma:piPhiOutsideL}
	Let $\nbent{\pi}{\varphi}{L}{H}, \nbent{\pi'}{\varphi'}{L'}{H'} \in \near{\MF{2n}}$ and $\nbent{\pi}{\varphi}{L}{H} = \nbent{\pi'}{\varphi'}{L'}{H'}$. Then $L = L'$, $\pi|_{\z{n} \setminus L} = \pi'|_{\z{n} \setminus L}$ and $\varphi|_{\z{n} \setminus L} = \varphi'|_{\z{n} \setminus L}$. Also, $\dim L \geq 3$ implies $(\pi, \varphi, H) = (\pi', \varphi', H')$.
\end{lemma}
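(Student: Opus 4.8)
The plan is to peel the data $(L,\pi,\varphi,H)$ off a near-bent function by restricting it to the cosets $\z{n}\times\{y\}$, $y\in\z{n}$, and comparing the two sides coset by coset. First I would recover $L$. Since $\nbent{\pi}{\varphi}{L}{H}\in\near{\MF{2n}}$ does not belong to $\MF{2n}$, Proposition~\ref{state:mcfarlandcase} gives $\dim L\geq 2$, and likewise $\dim L'\geq 2$. By~(\ref{eq:nearbentDenot}), for $y\notin L$ the restriction $\nbent{\pi}{\varphi}{L}{H}|_{\z{n}\times\{y\}}$ is the affine function $x\mapsto\iprod{x}{\pi(y)}\oplus\varphi(y)$, while for $y\in L$ it equals this plus $\charfunc{\coords{H}{I}(y)\oplus\linsp{\pi(L)}^{\bot}}$, $I=\infomaps(\linsp{\pi(L)})$; the added term is the indicator of a coset of a linear subspace of dimension $n-\dim L\leq n-2$, so it has $2^{n-\dim L}$ ones, a value not in $\{0,2^{n-1},2^{n}\}$, hence it is not affine and neither is the sum. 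Therefore $L$ is the intrinsic set $\{y:\nbent{\pi}{\varphi}{L}{H}|_{\z{n}\times\{y\}}\text{ is not affine}\}$, the same description holds for $L'$, and equality of the two functions yields $L=L'$.

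For $y\notin L$ the two restrictions are $x\mapsto\iprod{x}{\pi(y)}\oplus\varphi(y)$ and $x\mapsto\iprod{x}{\pi'(y)}\oplus\varphi'(y)$; their coincidence for all $x$ forces $\pi(y)=\pi'(y)$ and $\varphi(y)=\varphi'(y)$, i.e.\ $\pi|_{\z{n}\setminus L}=\pi'|_{\z{n}\setminus L}$ and $\varphi|_{\z{n}\setminus L}=\varphi'|_{\z{n}\setminus L}$.

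Now assume $k:=\dim L\geq 3$. Because $\pi$ and $\pi'$ are permutations agreeing on $\z{n}\setminus L$, they send $L$ onto the same set, $\pi(L)=\z{n}\setminus\pi(\z{n}\setminus L)=\pi'(L)$ (trivially if $L=\z{n}$), and both are affine subspaces since $L=L'\in\subsp{k}{\pi}\cap\subsp{k}{\pi'}$ by Theorem~\ref{th:criterionMMFnew}. Hence $\linsp{\pi(L)}=\linsp{\pi'(L)}$, so the index set $I=\infomaps(\linsp{\pi(L)})$ and the subspace $R:=\linsp{\pi(L)}^{\bot}$ are the same for both functions. Restricting to $\z{n}\times\{y\}$ for $y\in L$ and cancelling the common terms, I would obtain, for all $x\in\z{n}$,
\[
\iprod{x}{\pi(y)\oplus\pi'(y)}\oplus\varphi(y)\oplus\varphi'(y)=\charfunc{\coords{H}{I}(y)\oplus R}(x)\oplus\charfunc{\coords{H'}{I}(y)\oplus R}(x).
\]
The right-hand side is supported on a union of at most two cosets of $R$, so it has either $0$ or $2\cdot 2^{\,n-k}=2^{\,n-k+1}\leq 2^{\,n-2}$ ones; the left-hand side is affine, hence has $0$, $2^{n-1}$, or $2^{n}$ ones. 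As $2^{\,n-k+1}<2^{n-1}$, the only possibility is that both sides vanish identically. Thus $\pi(y)=\pi'(y)$, $\varphi(y)=\varphi'(y)$, and $\coords{H}{I}(y)\oplus R=\coords{H'}{I}(y)\oplus R$; since $\overline{I}$ is an information set of $R$, a coset of $R$ has a unique representative supported on $I$, so $\coords{H}{I}(y)=\coords{H'}{I}(y)$, i.e.\ $H(y)=H'(y)$. Ranging over $y\in L$ and combining with the previous paragraph gives $(\pi,\varphi,H)=(\pi',\varphi',H')$.

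The main obstacle is this last step. One must first pin down $\pi(L)=\pi'(L)$ so that the two functions use literally the same $\infomaps$-data ($I$ and $R$) and the displayed equation even makes sense, and then the weight inequality $2^{\,n-k+1}<2^{n-1}$ has to be applied with care — it is precisely the point where the hypothesis $\dim L\geq 3$ is indispensable. For $\dim L=2$ that inequality degenerates to an equality, and indeed two distinct parallel cosets of an $(n-2)$-dimensional subspace can unite to an affine hyperplane, whose indicator is affine; so the argument, and as it turns out the conclusion, genuinely breaks at $\dim L=2$.
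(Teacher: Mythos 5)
Your proof is correct and follows essentially the same route as the paper's: recover $L$ as the set of $y$ where the restriction to $\z{n}\times\{y\}$ fails to be affine, read off $\pi,\varphi$ outside $L$ from the affine restrictions, and for $\dim L\geq 3$ compare the weight of the XOR of two coset indicators of $\linsp{\pi(L)}^{\bot}$ (at most $2^{n-\dim L+1}<2^{n-1}$) with the possible weights of an affine function. The only cosmetic difference is that you deduce $H=H'$ directly from the uniqueness of coset representatives over the information set, where the paper instead cites the uniqueness clause of Theorem~\ref{th:criterionMMFnew}; both are valid.
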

\begin{proof} Since $\nbent{\pi}{\varphi}{L}{H}, \nbent{\pi'}{\varphi'}{L'}{H'} \notin \MF{2n}$, Proposition~\ref{state:mcfarlandcase} claims that $\dim L \geq 2$ and $\dim L' \geq 2$. Let $I = \infomaps(\linsp{\pi(L)})$ and $J = \infomaps(\linsp{\pi'(L')})$
Then 
(\ref{eq:nearbentDenot}) gives us that
    \begin{multline}\label{eq:nearfuncequality}
        \iprod{x}{\pi(y)} \oplus \varphi(y) \oplus \charfunc{L}(y) \, \charfunc{\coords{H}{I}(y) \oplus \linsp{\pi(L)}^{\bot}}(x) = \\
        \iprod{x}{\pi'(y)} \oplus \varphi'(y) \oplus \charfunc{L'}(y) \, \charfunc{\coords{H'}{J}(y) \oplus \linsp{\pi'(L')}^{\bot}}(x).
    \end{multline}
    Let us fix some $y \in \z{n}$. The function on the left is affine $\iff$ $y \in L$. Indeed, $\dim \linsp{\pi(L)}^{\bot} = n - \dim \pi(L) \leq n - 2$, i.e. $\charfunc{\coords{H}{I}(y) \oplus \linsp{\pi(L)}^{\bot}}$ cannot be affine. Similarly, the function on the right is affine $\iff$ $y \in L'$. 
    Hence, (\ref{eq:nearfuncequality}) implies $L = L'$.
	
	Fixing some $y \in \z{n} \setminus L$ in~(\ref{eq:nearfuncequality}), we obtain
	\begin{equation*}
		\iprod{x}{\pi(y)} \oplus \varphi(y)  = \iprod{x}{\pi'(y)} \oplus \varphi'(y) \text{ for all } x \in \z{n},
	\end{equation*}
	i.e. $\pi|_{\z{n} \setminus L} = \pi'|_{\z{n} \setminus L}$, $\varphi|_{\z{n} \setminus L} = \varphi'|_{\z{n} \setminus L}$
    and $\pi(L) = \pi'(L')$. 
    
    Let $R = \linsp{\pi(L)}^{\bot}$ and, again, fix some $y \in L$ in~(\ref{eq:nearfuncequality}):
    \begin{equation}\label{eq:nearBentCoincidenceForL}
        \iprod{x}{\pi(y) \oplus \pi'(y)} \oplus \varphi(y) \oplus \varphi'(y) 
        =  \charfunc{\coords{H}{I}(y) \oplus R}(x) \oplus \charfunc{\coords{H'}{I}(y) \oplus R}(x).
    \end{equation}
    Suppose that it is satisfied for $\dim L > 2$. The function on the left is affine. But the function on the right is affine only if the cosets $\coords{H}{I}(y) \oplus R$ and $\coords{H'}{I}(y) \oplus R$ coincide for all $y \in L$. Otherwise, the function on the right will take $1$ exactly $2^{n - \dim L + 1} < 2^{n - 1}$ times for some $y \in L$, i.e. it cannot be affine. 
    However, the coincidence of the cosets for all $y \in L$ implies that the function on the left must be identically zero for all $y \in L$, i.e. $\pi = \pi'$ and $\varphi = \varphi'$. Finally, Theorem~\ref{th:criterionMMFnew} guarantees that $\nbent{\pi}{\varphi}{L}{H} \neq \nbent{\pi}{\varphi}{L}{H'}$ if $H \neq H'$.
\end{proof}

Thus, we only need to consider the case of Theorem~\ref{theorem:criterionMMFdim2}.

\begin{lemma}\label{lemma:dimL2Auxiliary}
	Let $L \in \asnk{n}{2}$, $\varphi: L \to \z{}$, $\sigma, \sigma' : L \to \z{2}$ be invertible and both $H: L \to \z{2}$ and $h: x \in L \mapsto \iprod{H(x)}{\sigma(x)} \oplus \varphi(x)$ be affine. 	
	Let us define $\varphi': L \to \z{}$ as
	$$
		\varphi'(x) = \varphi(x) \oplus \iprod{H(x)}{\sigma(x) \oplus \sigma'(x)} \oplus \charfunc{\{\sigma(x)\}}(\sigma'(x)) \oplus 1,
	$$
	$x \in L$, and $H': L \to \z{2}$ in the following way:  
		\begin{enumerate}
			\item $H'(x) = H(x)$ $\iff$ $\sigma'(x) = \sigma(x)$ for all $x \in L$,
			\item $\iprod{H'(x) \oplus H(x)}{\sigma'(x) \oplus \sigma(x)} = 0$ for all $x \in L$.
		\end{enumerate}
	Then $H'$ is uniquely determined and affine. Moreover, $h': x \in L \mapsto \iprod{H'(x)}{\sigma'(x)} \oplus \varphi'(x)$ is affine as well.
\end{lemma}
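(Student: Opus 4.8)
The plan is to make Lemma~\ref{lemma:affinity2} the only real tool: since $\dim L = 2$, a map from $L$ into a space of dimension at most $2$ is affine iff the $\oplus$-sum of its values over $L$ equals $0$. In particular $\sigma$ and $\sigma'$ are affine (being invertible), and for both $H'$ and $h'$ it will suffice to control two such sums. Throughout I will use the linear coordinate-swap $J\colon\z{2}\to\z{2}$, $J(a,b)=(b,a)$; it satisfies $\iprod{Jv}{v}=0$ for every $v$ and $Jv\neq 0$ whenever $v\neq 0$, so for nonzero $v$ the vector $Jv$ is the unique nonzero element of $\{w\in\z{2}:\iprod{w}{v}=0\}$.

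First I would pin down $H'$ explicitly. If $\sigma'(x)=\sigma(x)$, condition~1 forces $H'(x)=H(x)$. If $\sigma'(x)\neq\sigma(x)$, then $\sigma(x)\oplus\sigma'(x)\neq 0$, condition~2 puts $H'(x)\oplus H(x)$ in its one-dimensional orthogonal complement, and condition~1 forbids $H'(x)\oplus H(x)=0$; hence $H'(x)\oplus H(x)$ is the unique nonzero vector orthogonal to $\sigma(x)\oplus\sigma'(x)$, that is $H'(x)\oplus H(x)=J(\sigma(x)\oplus\sigma'(x))$. This establishes existence and uniqueness of $H'$, and the identity $H'(x)\oplus H(x)=J(\sigma(x)\oplus\sigma'(x))$ also holds (both sides being $0$) when $\sigma'(x)=\sigma(x)$. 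Now set $D=H'\oplus H=J\circ(\sigma\oplus\sigma')$. Since $\sigma$ and $\sigma'$ each run bijectively over $\z{2}$, $\bigoplus_{x\in L}\sigma(x)=\bigoplus_{x\in L}\sigma'(x)=\bigoplus_{v\in\z{2}}v=0$, so $\bigoplus_{x\in L}D(x)=J(0)=0$, and $D$, hence $H'=H\oplus D$, is affine by Lemma~\ref{lemma:affinity2}.

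For $h'$ I would compute $g:=h'\oplus h$. Substituting the definitions of $h'$, $h$, $\varphi'$ and expanding $\iprod{H(x)}{\sigma(x)\oplus\sigma'(x)}$ yields $g(x)=\iprod{D(x)}{\sigma'(x)}\oplus\charfunc{\{\sigma(x)\}}(\sigma'(x))\oplus 1$, so it is enough to prove $\bigoplus_{x\in L}g(x)=0$ and then conclude that $h'=h\oplus g$ is affine. Using $D=J\sigma\oplus J\sigma'$ and $\iprod{Jv}{v}=0$, the first term contributes $\bigoplus_{x\in L}\iprod{J\sigma(x)}{\sigma'(x)}$, and $\iprod{Ju}{v}=u_1v_2\oplus u_2v_1$ equals $1$ exactly when $u,v$ are linearly independent in $\z{2}$; the constant $1$ summed four times vanishes. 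Hence $\bigoplus_{x\in L}g(x)\equiv A+B\pmod 2$, where $A=|\{x\in L:\sigma(x),\sigma'(x)\text{ linearly independent}\}|$ and $B=|\{x\in L:\sigma(x)=\sigma'(x)\}|$. Finally, two vectors of $\z{2}$ are linearly dependent iff one is zero or they coincide, so $L\setminus\bigl(\{x:\sigma(x),\sigma'(x)\text{ independent}\}\cup\{x:\sigma(x)=\sigma'(x)\}\bigr)$ is precisely the set of $x$ for which exactly one of $\sigma(x),\sigma'(x)$ equals $0$, i.e.\ the symmetric difference of the singletons $\sigma^{-1}(0)$ and $(\sigma')^{-1}(0)$, which has even cardinality ($0$ or $2$); therefore $A+B=4-(\text{even})$ is even and $\bigoplus_{x\in L}g(x)=0$.

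I expect the parity count in the last paragraph to be the only delicate point. Producing the formula for $g$ and recognising $\charfunc{\{\sigma(x)\}}(\sigma'(x))$ alongside the bilinear term is mechanical, but one must notice that $\bigoplus_{x\in L}\iprod{J\sigma(x)}{\sigma'(x)}$ is a linear-independence count and that it is forced to have the same parity as the number of coincidences $\sigma(x)=\sigma'(x)$; once this bookkeeping is done, both affinity assertions fall out of Lemma~\ref{lemma:affinity2}.
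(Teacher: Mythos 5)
Your proof is correct, and while it follows the same overall reduction as the paper --- both affinity claims are converted into zero-sum conditions over $L$ via Lemma~\ref{lemma:affinity2}, applied to the differences $D = H \oplus H'$ and $g = h \oplus h'$ --- the way you verify those two sums is genuinely different and cleaner. The paper determines $T = H \oplus H'$ pointwise as the nonzero solution of $\iprod{u}{\delta(x)} = 0$ with $\delta = \sigma \oplus \sigma'$, and then proves $\bigoplus_{x \in L} T(x) = 0$ by a case split on whether $\delta$ is invertible; your closed form $D = J \circ (\sigma \oplus \sigma')$ with the coordinate swap $J$ makes both the uniqueness and the affinity of $H'$ immediate, since $J$ is linear and each of $\sigma, \sigma'$ sums to $0$ over $L$ by bijectivity. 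Likewise, for $h'$ the paper counts the zeros of $r(x) = \iprod{T(x)}{\sigma'(x)} \oplus \charfunc{\{0\}}(\delta(x))$ by exhibiting a pairing between the at most two relevant points, whereas you recognise $\iprod{J\sigma(x)}{\sigma'(x)}$ as the linear-independence indicator (a $2 \times 2$ determinant over $\z{}$), partition $L$ into the three classes ``independent'', ``equal'', ``exactly one of $\sigma(x), \sigma'(x)$ is zero'', and reduce everything to the parity of the symmetric difference of the singletons $\sigma^{-1}(0)$ and $(\sigma')^{-1}(0)$. Both arguments are sound; yours trades the paper's case analyses for an explicit formula and a single parity count, which is easier to verify and would shorten the exposition.
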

\begin{proof}
	Let $\delta(x) = \sigma(x) \oplus \sigma'(x)$ and $T(x) = H(x) \oplus H'(x)$, $x \in L$. According to the conditions, $T(x) = 0$ $\iff$ $\delta(x) = 0$ and $\iprod{T(x)}{\delta(x)} = 0$ for all $x \in L$. Moreover, $T$ is uniquely determined $\iff$ $H'$ is uniquely determined; and $T$ is affine $\iff$ $H'$ is affine.

	It is clear that $T$ is uniquely determined. Indeed, for $\delta(x) \neq 0$ the equation $\iprod{u}{\delta(x)} = 0$, $u \in \z{2}$, has exactly two solutions $\{0, \delta(x)\}^{\bot}$ and one of them is $u = 0$. It means that $T(x)$ must be equal to the nonzero solution due to $T(x) \neq 0$ for this case. If $\delta(x) = 0$, then $T(x) = 0$ as well.
	
	Let us show that $T$ is affine. According to Lemma~\ref{lemma:affinity2}, we need to prove that
	\begin{equation}\label{eq:affinitysum}
		\bigoplus_{x \in L} T(x) = 0.
	\end{equation} 
	Since $\sigma$ and $\sigma'$ are invertible, they are affine due to Lemma~\ref{lemma:affinity2}. This means that $\delta$ is affine as their sum.
	Hence, either $\delta$ is invertible or the number of solutions of $\delta(x) = b$ is even for any $b \in \z{2}$.
	
	If $\delta$ is invertible, then $T$ is invertible as well. Indeed, if $\delta(x) = 0$, then $T(x) = 0$. 
    Otherwise $T(x)$ is the only nonzero element of $\{0, \delta(x)\}^{\bot}$; they are distinct for distinct $\delta(x)$.
    Thus, $T$ is affine due to Lemma~\ref{lemma:affinity2}.
	
	If $\delta$ is not invertible, then $\delta(x) = b$ has an even number of solutions (or does not have solutions). Also, if $\delta(x_1) = \delta(x_2) = b$, $x_1, x_2 \in L$, then $T(x_1) = T(x_2)$ and $T(x_1) \oplus T(x_2) = 0$ as well. 
    Hence,~(\ref{eq:affinitysum}) is satisfied. 
	
	Let us prove that the function $h'$ is affine. 
    First, the equality
    $$
        \iprod{H(x)}{\delta(x)} \oplus \iprod{H'(x)}{\sigma'(x)} = \iprod{H(x)}{\sigma(x)}  \oplus \iprod{T(x)}{\sigma'(x)} 
    $$
    holds since both its parts are equal to $\iprod{H(x)}{\sigma(x)} \oplus \iprod{H(x)}{\sigma'(x)} \oplus \iprod{H'(x)}{\sigma'(x)}$.
    Secondly, using $\varphi'(x) = \varphi(x) \oplus \iprod{H(x)}{\delta(x)} \oplus \charfunc{\{0\}}(\delta(x)) \oplus 1$ and the equality above, we obtain that $h'(x) \oplus 1$ is equal to     
	\begin{multline*}
		\iprod{H'(x)}{\sigma'(x)} \oplus \varphi(x) \oplus \iprod{H(x)}{\delta(x)} \oplus \charfunc{\{0\}}(\delta(x)) \oplus 1 \oplus 1\\
        = \iprod{H(x)}{\sigma(x)} \oplus \varphi(x) \oplus \iprod{T(x)}{\sigma'(x)} \oplus \charfunc{\{0\}}(\delta(x))\\     
		 = h(x) \oplus \iprod{T(x)}{\sigma'(x)} \oplus \charfunc{\{0\}}(\delta(x)).
	\end{multline*}	
	Since $h$ is affine, it is enough to prove that the function $r: x \in L \mapsto \iprod{T(x)}{\sigma'(x)} \oplus \charfunc{\{0\}}(\delta(x))$ is affine. According to Lemma~\ref{lemma:affinity2}, $r$ is affine $\iff$ it takes $0$ an even number of times. Let us calculate this number.
	
	If $\delta(x) = 0$, then $T(x) = 0$ and $r(x) = \iprod{T(x)}{\sigma'(x)} \oplus \charfunc{\{0\}}(\delta(x)) = 1$, which is not interesting for us. 
	
	Let $\delta(x) \neq 0$. Then $T(x) \neq 0$ and $\charfunc{\{0\}}(\delta(x)) = 0$, i.e. $r(x) = \iprod{T(x)}{\sigma'(x)}$. Note that  
	\begin{equation}\label{eq:sigmasigmaprime}
		\iprod{T(x)}{\sigma'(x)} = \iprod{T(x)}{\sigma(x)}
	\end{equation}	
	due to $\iprod{T(x)}{\delta(x)} = 0$. Thus, $\iprod{T(x)}{\sigma(x)} = 0$ $\iff$ $\iprod{T(x)}{\sigma'(x)} = 0$ . 
	But the equation $\iprod{T(x)}{y} = 0$ has exactly two solutions: some nonzero $y \in \z{2}$ and $0$. Since $\sigma(x) \neq \sigma'(x)$ in this case,  either ($\sigma(x) = y$, $\sigma'(x) = 0$) or ($\sigma(x) = 0$, $\sigma'(x) = y$). Moreover, we cannot have more than two such $x$ since $\sigma$ and $\sigma'$ are invertible.
		
	Let $\sigma(x) = y$ and $\sigma'(x) = 0$.
	Since $\sigma$ and $\sigma'$ are invertible, there always exists an additional $x' \in L$, $x' \neq x$, such that $\sigma(x') = 0$ and $\sigma'(x') \neq 0$, i.e. $\delta(x') \neq 0$. 
    We can use~(\ref{eq:sigmasigmaprime}) for $x'$ and obtain $r(x') = \iprod{T(x')}{\sigma'(x')} = 0$. 
    
    If $\sigma(x) = 0$ and $\sigma'(x) = y$, we immediately obtain some $x' \neq x$ with $\sigma'(x') = 0 \neq \sigma(x')$, i.e. $\delta(x') \neq 0$ and $r(x') = 0$.
    
    Thus, either there are exactly two $x, x' \in L$ such that $r(x) = r(x') = 0$ or $r(x) \equiv 1$. Consequently, $r$ and $h'$ are affine.
\end{proof}

Let us union the proved properties into one theorem. 
\begin{theorem}\label{th:coincidenceMMF}
	Let $\nbent{\pi}{\varphi}{L}{H} \in \near{\MF{2n}}$ and $I = \infomaps(\linsp{\pi(L)})$. Then 
	\begin{enumerate}
    	\item The case of $\dim L \leq 1$ is impossible.
		\item If $\dim L = 2$, $\nbent{\pi}{\varphi}{L}{H} \in \near{f_{\pi', \varphi'}}$ for exactly 24 distinct $f_{\pi', \varphi'} \in \MF{2n}$. Moreover, $\nbent{\pi}{\varphi}{L}{H} = \nbent{\pi'}{\varphi'}{L'}{H'}$ $\iff$
		\begin{enumerate}
			\item $L' = L$, $\pi'|_{\z{n} \setminus L} = \pi|_{\z{n} \setminus L}$ and $\varphi'|_{\z{n} \setminus L} = \varphi|_{\z{n} \setminus L}$,
            \item $\varphi'|_{L}$ and $H': L \to \z{2}$ are uniquely determined as
$$
    \varphi'(y) = \varphi(y) \oplus \iprod{H(y)}{\perk{\pi}{I}(y) \oplus \perk{\pi'}{I}(y)} 
        \oplus \charfunc{\{\pi(y)\}}(\pi'(y)) \oplus 1,
$$
    \begin{multline*}
    \iprod{H'(y) \oplus H(y)}{\perk{\pi}{I}(y) \oplus \perk{\pi'}{I}(y)} = 0 \\  \text{ and } 
    H'(y) = H(y) \iff \pi'(y) = \pi(y).
    \end{multline*}

		\end{enumerate}
		\item If $\dim L \geq 3$, $\nbent{\pi}{\varphi}{L}{H} \in \near{f_{\pi',\varphi'}}$ $\iff$ $\pi = \pi'$ and $\varphi = \varphi'$.
	\end{enumerate}
\end{theorem}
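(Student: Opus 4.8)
The plan is to assemble Proposition~\ref{state:mcfarlandcase} and Lemmas~\ref{lemma:piPhiOutsideL} and~\ref{lemma:dimL2Auxiliary}, using Theorem~\ref{th:criterionMMFnew} to turn the relation $\nbent{\pi}{\varphi}{L}{H}\in\near{f_{\pi',\varphi'}}$ into the existence of admissible $(L',H')$ with $\nbent{\pi}{\varphi}{L}{H}=\nbent{\pi'}{\varphi'}{L'}{H'}$, so that both sides then lie in $\near{\MF{2n}}$. Items~1 and~3 are short. Item~1: by Proposition~\ref{state:mcfarlandcase}, $\dim L\leq 1$ would give $\nbent{\pi}{\varphi}{L}{H}\in\MF{2n}$, contradicting $\nbent{\pi}{\varphi}{L}{H}\in\near{\MF{2n}}$. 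Item~3: such an equality puts two elements of $\near{\MF{2n}}$ equal, so Lemma~\ref{lemma:piPhiOutsideL} applies, and for $\dim L\geq 3$ it yields $(\pi,\varphi,H)=(\pi',\varphi',H')$, hence $\pi=\pi'$ and $\varphi=\varphi'$; the converse is trivial since $\nbent{\pi}{\varphi}{L}{H}\in\near{f_{\pi,\varphi}}$ by the meaning of the notation.

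The case $\dim L=2$ is the substance. For every $f_{\pi',\varphi'}$ with $\nbent{\pi}{\varphi}{L}{H}\in\near{f_{\pi',\varphi'}}$, Lemma~\ref{lemma:piPhiOutsideL} forces $L'=L$, $\pi'|_{\z{n}\setminus L}=\pi|_{\z{n}\setminus L}$, $\varphi'|_{\z{n}\setminus L}=\varphi|_{\z{n}\setminus L}$ and $\pi'(L)=\pi(L)$ (the last because $\pi,\pi'$ are permutations agreeing off $L$). Consequently $\linsp{\pi'(L)}=\linsp{\pi(L)}$, the information set $I$ is unchanged, and $\pi'$ is completely determined by the bijection $\sigma':=\perk{\pi'}{I}|_{L}\colon L\to\z{2}$. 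There are exactly $4!=24$ such $\sigma'$, they give pairwise distinct $\pi'$ and hence pairwise distinct $f_{\pi',\varphi'}$, and the choice $\sigma'=\sigma:=\perk{\pi}{I}|_{L}$ recovers $\pi'=\pi$. So it remains, for each fixed $\sigma'$, to describe the admissible $(\varphi',H')$ with $\nbent{\pi'}{\varphi'}{L}{H'}=\nbent{\pi}{\varphi}{L}{H}$ and to check that this set is nonempty.

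Restricting that equality to $\z{n}\times\{y\}$, $y\in L$, gives the identity~(\ref{eq:nearBentCoincidenceForL}) with $R=\linsp{\pi(L)}^{\bot}$. Because $R$ has codimension $2$ and $\overline{I}$ is an information set of $R$, there is a fixed surjective linear $\mu\colon\z{n}\to\z{2}$ with $\ker\mu=R$ and $\mu(\coords{a}{I})=a$ such that $\charfunc{\coords{c}{I}(y)\oplus R}(x)=\charfunc{\{c(y)\}}(\mu(x))$ (for $c\in\{H,H'\}$) and $\iprod{x}{w}=\iprod{\mu(x)}{\perk{w}{I}}$ for $w\in\linsp{\pi(L)}$; substituting $u=\mu(x)$, which runs over all of $\z{2}$, rewrites~(\ref{eq:nearBentCoincidenceForL}) as the identity of functions of $u\in\z{2}$
\begin{equation*}
  \iprod{u}{\sigma(y)\oplus\sigma'(y)}\oplus\varphi(y)\oplus\varphi'(y)=\charfunc{\{H(y)\}}(u)\oplus\charfunc{\{H'(y)\}}(u),\qquad y\in L.
\end{equation*}
The left side is affine in $u$; the right side is affine exactly when $H(y)=H'(y)$ (it is then $0$) or $H(y)\neq H'(y)$ (it is then the indicator of the two-element set $\{H(y),H'(y)\}$, which is automatically an affine hyperplane of $\z{2}$). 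Equating the two sides for each $y$ turns out to be equivalent to the three relations $H'(y)=H(y)\iff\sigma'(y)=\sigma(y)$, $\iprod{H'(y)\oplus H(y)}{\sigma'(y)\oplus\sigma(y)}=0$, and $\varphi'(y)=\varphi(y)\oplus\iprod{H(y)}{\sigma(y)\oplus\sigma'(y)}\oplus\charfunc{\{\sigma(y)\}}(\sigma'(y))\oplus 1$. Since $\perk{\cdot}{I}$ is injective on $\linsp{\pi(L)}$, one has $\charfunc{\{\sigma(y)\}}(\sigma'(y))=\charfunc{\{\pi(y)\}}(\pi'(y))$, so these are exactly conditions~2(a)--(b) of the theorem; moreover they are precisely the hypotheses and the conclusion of Lemma~\ref{lemma:dimL2Auxiliary} applied with this $\sigma,\sigma',H,\varphi$. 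That lemma therefore provides, for each of the $24$ values of $\sigma'$, a unique admissible pair $(\varphi'|_L,H')$ with $H'$ affine and $x\in L\mapsto\iprod{H'(x)}{\sigma'(x)}\oplus\varphi'(x)$ affine; by Theorem~\ref{th:criterionMMFnew} the function $\nbent{\pi'}{\varphi'}{L}{H'}$ is then a genuine element of $\near{f_{\pi',\varphi'}}$, and reversing the computation shows it equals $\nbent{\pi}{\varphi}{L}{H}$ (they agree at all $(x,y)$ with $y\notin L$ because $\pi'=\pi$ and $\varphi'=\varphi$ there). Counting the $24$ values of $\sigma'$, one of them ($\sigma'=\sigma$) giving $f_{\pi,\varphi}$ itself, shows that $\nbent{\pi}{\varphi}{L}{H}$ lies in $\near{f_{\pi',\varphi'}}$ for exactly $24$ distinct $f_{\pi',\varphi'}$, which establishes item~2.

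The step I expect to be the main obstacle is the analysis of~(\ref{eq:nearBentCoincidenceForL}) in codimension $2$. In the regime $\dim L\geq 3$ of Lemma~\ref{lemma:piPhiOutsideL}, the difference of two coset-indicators on the right is non-affine unless the cosets coincide, which collapses the whole situation at once; for $\dim L=2$ that difference is always affine, so the equation persists as a nontrivial per-$y$ constraint, and one must carefully extract the three relations above and check that they match Lemma~\ref{lemma:dimL2Auxiliary} exactly — in particular that the two bookkeeping indicators $\charfunc{\{0\}}(\sigma(y)\oplus\sigma'(y))$ and $\charfunc{\{\pi(y)\}}(\pi'(y))$ agree. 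The rest is routine assembly of the cited results.
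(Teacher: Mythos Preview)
Your proof is correct and follows essentially the same route as the paper: Proposition~\ref{state:mcfarlandcase} for item~1, Lemma~\ref{lemma:piPhiOutsideL} for item~3 and the reduction in item~2, a coordinate reduction of~(\ref{eq:nearBentCoincidenceForL}) to an identity over $\z{2}$, and Lemma~\ref{lemma:dimL2Auxiliary} to certify that the uniquely determined $(\varphi'|_L,H')$ is admissible in Theorem~\ref{th:criterionMMFnew}. Your use of the quotient map $\mu$ with $\ker\mu=R$ is exactly the paper's decomposition $x=\coords{z}{I}\oplus x'$, $x'\in R$, written invariantly; the per-$y$ case analysis (whether $\sigma(y)=\sigma'(y)$, equivalently $H(y)=H'(y)$) and the resulting three relations match the paper's derivation line for line.
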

\begin{proof}
	The first point follows from Proposition~\ref{state:mcfarlandcase}. 
    Due to Lemma~\ref{lemma:piPhiOutsideL}, the third point is satisfied and $\nbent{\pi}{\varphi}{L}{H} = \nbent{\pi'}{\varphi'}{L'}{H'}$ requires $L = L'$, $\pi'|_{\z{n} \setminus L} = \pi|_{\z{n} \setminus L}$ and $\varphi'|_{\z{n} \setminus L} = \varphi|_{\z{n} \setminus L}$. 
    
    Let us prove the case of $\dim L = 2$.	    
	According to the proof of Lemma~\ref{lemma:piPhiOutsideL}, $\nbent{\pi}{\varphi}{L}{H} = \nbent{\pi'}{\varphi'}{L}{H'}$, where $\pi'|_{\z{n} \setminus L} = \pi|_{\z{n} \setminus L}$ and $\varphi'|_{\z{n} \setminus L} = \varphi|_{\z{n} \setminus L}$ $\iff$ (\ref{eq:nearBentCoincidenceForL}) is satisfied for any $y \in L$:
    \begin{equation}\label{eq:nearBentCoincidenceForL2}
        \iprod{x}{\pi(y) \oplus \pi'(y)} \oplus \varphi(y) \oplus \varphi'(y) 
        = \charfunc{\coords{H}{I}(y) \oplus R}(x) \oplus \charfunc{\coords{H'}{I}(y) \oplus R}(x),
    \end{equation}
    where $R = \linsp{\pi(L)}^{\bot} = \linsp{\pi'(L)}^{\bot}$ and the second condition of Theorem~\ref{th:criterionMMFnew} is satisfied for affine $H': L \to \z{2}$. 
	
    Next, let us choose any possible $\pi'|_L$. We note that the coincidence of the cosets from the indicators implies that $H(y) = H'(y)$ since $\overline{I}$ is some information set of $R$, see Section~\ref{sec:definitions}. 
    Assuming that~(\ref{eq:nearBentCoincidenceForL2}) is satisfied, $H(y) = H'(y)$ implies $\pi(y) = \pi'(y)$ and $\varphi(y) = \varphi'(y)$, where $y$ is some element of $L$. Moreover, $\pi(y) = \pi'(y)$ implies $H(y) = H'(y)$ due to $\dim R = n - 2$. The formulas from the condition provide the same since $I$ is an information set of $\pi(L)$. 
	
	Let us consider $y \in L$ such that $\pi(y) \neq \pi'(y)$, i.e. $H(y) \neq H'(y)$ is also required. First, we point out that $\pi(y) \oplus \pi'(y) \in \linsp{\pi(L)}$. Secondly, the choice of $I$ provides that any $x \in \z{n}$ can be uniquely represented as $\coords{z}{I} \oplus x'$, where $x' \in R$ and $z \in \z{2}$. Thus,
	\begin{multline*}
		\iprod{x}{\pi(y) \oplus \pi'(y)} = \iprod{\coords{z}{I} \oplus x'}{\pi(y) \oplus \pi'(y)} = \\
		\iprod{\coords{z}{I}}{\pi(y) \oplus \pi'(y)} \oplus 0 = \iprod{z}{\perk{\pi}{I}(y) \oplus \perk{\pi'}{I}(y)}.
	\end{multline*}
    Similarly, $\charfunc{\coords{H}{I}(y) \oplus R}(x)$ transforms to $\charfunc{\{H(y)\}}(z)$. Consequently, (\ref{eq:nearBentCoincidenceForL2}) is equivalent to
    $$
        \iprod{z}{\perk{\pi}{I}(y) \oplus \perk{\pi'}{I}(y)} \oplus \varphi(y) \oplus \varphi'(y) = \charfunc{\{H(y), H'(y)\}}(z),
    $$
    where $z \in \z{2}$. Since $\iprod{z}{\perk{\pi}{I}(y) \oplus \perk{\pi'}{I}(y)} \oplus \varphi(y) = 1$ has exactly two solutions, (\ref{eq:nearBentCoincidenceForL2}) is satisfied $\iff$ these solutions are $z = H(y)$ and $z = H'(y)$, i.e. (\ref{eq:nearBentCoincidenceForL2}) transforms to
	$$
		\begin{cases}
		\iprod{H(y)}{\perk{\pi}{I}(y) \oplus \perk{\pi'}{I}(y)} \oplus \varphi(y) \oplus \varphi'(y) = 1,\\	
		\iprod{H'(y)}{\perk{\pi}{I}(y) \oplus \perk{\pi'}{I}(y)} \oplus \varphi(y) \oplus \varphi'(y) = 1.\\	
		\end{cases}
	$$
	This system is equivalent to the following:
	$$
		\begin{cases}
		\iprod{H(y) \oplus H'(y)}{\perk{\pi}{I}(y) \oplus \perk{\pi'}{I}(y)} = 0, \\	
		\varphi'(y) = \varphi(y) \oplus \iprod{H(y)}{\perk{\pi}{I}(y) \oplus \perk{\pi'}{I}(y)} \oplus 1.
		\end{cases}
	$$
	Since $H(y) \neq H'(y)$ and $\pi(y) \neq \pi'(y)$ in this case, the formulas from the statement of the theorem are obtained. 
    
    At the same time, we still need to prove that $H'$ is uniquely determined and $\nbent{\pi'}{\varphi'}{L}{H'}$ belongs to $\near{f_{\pi', \varphi'}}$, i.e. the obtained $H'$ satisfies the second condition of Theorem~\ref{th:criterionMMFnew}. 
    However, all conditions of Lemma~\ref{lemma:dimL2Auxiliary} are satisfied for $\sigma = \perk{\pi}{I}|_L$, $\sigma' = \perk{\pi'}{I}|_L$ and $H$. Moreover, $\pi(y) = \pi'(y)$ $\iff$ $\sigma(y) = \sigma'(y)$ due to the choice of $I$, where $y \in L$. Hence, Lemma~\ref{lemma:dimL2Auxiliary} gives us that both $H'$ and $y \in L \mapsto \iprod{H'(y)}{\pi'(y)} \oplus \varphi'(y)$ are affine and $H'$ is uniquely determined, i.e. $\nbent{\pi'}{\varphi'}{L}{H'} \in \near{f_{\pi', \varphi'}}$ by Theorem~\ref{th:criterionMMFnew}. 
    Since we choose any $\pi'|_L$ with $\pi'|_{\z{n} \setminus L} = \pi|_{\z{n} \setminus L}$ and then uniquely determine $\varphi'$ and $H'$, there are exactly $4! = 24$ distinct $f_{\pi', \varphi'}$ such that $\nbent{\pi}{\varphi}{L}{H} \in \near{f_{\pi', \varphi'}}$.
\end{proof}

\begin{remark}\label{remark:mmfBentCoincidence}
Theorem~\ref{th:coincidenceMMF} constructs all $\nbent{\pi'}{\varphi'}{L'}{H'} $ that are equal to some given $\nbent{\pi}{\varphi}{L}{H} \in \near{\MF{2n}}$ using $L' = L$ of dimension $2$ and all 24 distinct $\pi'$ such that $\pi'|_{\z{n} \setminus L} = \pi|_{\z{n} \setminus L}$ . After that the given formulas uniquely determine $\varphi'$ and $H'$.  
\end{remark}

\section{The expected value of $|\near{f}|$ for $f \in \MF{2n}$}\label{sec:nearestAverage}

Let us introduce 
$$
    \nearav{\MF{2n}} = \frac{1}{|\MF{2n}|} \sum_{f \in \MF{2n}} |\near{f}|
$$
which is the expected value of the number of $U \in \asnk{2n}{n}$ such that a random $f \in \MF{2n}$ is affine on $U$. 
We also need to define
\begin{equation}\label{eq:rhoDefinition}
    \sav{n}{k} = 2^{2(n - k)}(\snkp{n}{k})^2 2^k! (2^n - 2^k)! / 2^n!.
\end{equation}
The most important values are
	\begin{align}
		\sav{n}{2} &= 
        \frac{2^{2n - 3}}{3} + \frac{1}{12} + \frac{1}{2^{n + 2} - 12}, \ n \geq 2,  \label{eq:rhoDefinition2}\\
		\sav{n}{3} &= \frac{5}{224} \frac{2^{n}(2^n - 1)(2^n - 2)(2^n - 4)}{(2^n - 3)(2^n - 5)(2^n - 6)(2^n - 7)}, n \geq 3. \label{eq:rhoDefinition3}
	\end{align}
Let us start with auxiliary properties.

\subsection{Auxiliary results}

The following result related to the expected value of $|\subsp{k}{\pi}|$ for $\pi \in \fper{n}$, was proved in~\cite[Proposition 5]{Kolomeec2024}. For  completeness, we give it with the proof which is not difficult.  
\begin{lemma}[see~\cite{Kolomeec2024}]\label{lemma:sumPi}
	 Let $0 \leq k \leq n$. Then
	$$
		\sum_{\pi \in \fper{n}} |\subsp{k}{\pi}| = 2^n!\,\sav{n}{k}.
	$$
\end{lemma}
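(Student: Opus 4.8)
The plan is to compute the sum by double counting over pairs $(\pi, U)$ with $\pi \in \fper{n}$ and $U \in \subsp{k}{\pi}$, that is, $\sum_{\pi \in \fper{n}} |\subsp{k}{\pi}| = |\{(\pi, U) : U \in \asnk{n}{k},\ \pi \in \fper{n},\ \pi(U) \in \asnk{n}{k}\}|$. I would swap the order of summation and first fix the affine subspace $U \in \asnk{n}{k}$, then count the number of permutations $\pi \in \fper{n}$ for which $\pi(U)$ is again a $k$-dimensional affine subspace. For a fixed $U$, a permutation $\pi$ lies in the count iff there exists some $V \in \asnk{n}{k}$ with $\pi(U) = V$ (as sets); so I would further sum over the target subspace $V \in \asnk{n}{k}$ and count permutations sending $U$ bijectively onto $V$.

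The key combinatorial step: for fixed $U, V \in \asnk{n}{k}$ (each of size $2^k$), the number of permutations $\pi \in \fper{n}$ with $\pi(U) = V$ is exactly $2^k! \cdot (2^n - 2^k)!$ — choose a bijection $U \to V$ in $2^k!$ ways and extend arbitrarily to a bijection of the complements in $(2^n - 2^k)!$ ways. Hence
$$
\sum_{\pi \in \fper{n}} |\subsp{k}{\pi}| = \sum_{U \in \asnk{n}{k}} \sum_{V \in \asnk{n}{k}} 2^k!\,(2^n - 2^k)! = |\asnk{n}{k}|^2 \cdot 2^k!\,(2^n - 2^k)!.
$$
Then I substitute $|\asnk{n}{k}| = 2^{n-k}\snkp{n}{k}$ from~(\ref{eq:snkpDefinition}), obtaining $2^{2(n-k)}(\snkp{n}{k})^2 \cdot 2^k!\,(2^n - 2^k)!$, and recognize this as $2^n!\,\sav{n}{k}$ by the definition~(\ref{eq:rhoDefinition}) of $\sav{n}{k}$, dividing and multiplying by $2^n!$.

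Actually there is a subtlety worth flagging as the main thing to get right: one must make sure the count $|\{(\pi,U)\}|$ genuinely equals the left-hand side with the correct multiplicity. Since $\subsp{k}{\pi}$ is defined as a set of affine subspaces $U$, each pair $(\pi, U)$ is counted once, and for a given $(\pi, U)$ the image $\pi(U)$ is a \emph{unique} set $V$, so when I expand $\sum_U \sum_V (\text{perms sending } U \text{ onto } V)$ I do not overcount: each $(\pi, U)$ with $\pi(U) \in \asnk{n}{k}$ contributes to exactly one term $(U, V=\pi(U))$. The edge cases $k = 0$ and $k = n$ should be checked to confirm the formula degenerates correctly (for $k=n$, $\subsp{n}{\pi} = \{\z{n}\}$ always, giving $2^n!$ on the left, and the right side gives $2^{0}\cdot 1 \cdot 2^n! \cdot 0!/2^n! \cdot 2^n! = 2^n!$, consistent). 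The whole argument is elementary; the only real care needed is the bookkeeping of the double count, so I would present it cleanly as: expand as a sum over ordered pairs of affine $k$-subspaces, count extensions of bijections, then simplify using~(\ref{eq:snkpDefinition}) and~(\ref{eq:rhoDefinition}).
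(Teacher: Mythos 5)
Your proposal is correct and follows essentially the same route as the paper: the paper also swaps the order of summation, fixes the source subspace $L$, counts the $2^{n-k}\snkp{n}{k}$ choices of image subspace, the $2^k!$ bijections onto it, and the $(2^n-2^k)!$ extensions to the complement, arriving at $\bigl(2^{n-k}\snkp{n}{k}\bigr)^2\, 2^k!\,(2^n-2^k)! = 2^n!\,\sav{n}{k}$. Your bookkeeping remark and edge-case check are fine but not needed beyond what the paper records.
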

\begin{proof}
Let $\delta_{\pi, L} = 1$ if $\pi(L) \in \asnk{n}{k}$ and $0$ otherwise, where $\pi \in \fper{n}$ and $L \in \asnk{n}{k}$. Let us rewrite the sum above in the following way:
\begin{multline}\label{eq:sumPi}
	\sum_{\pi \in \fper{n}} |\subsp{k}{\pi}| = \sum_{\pi \in \fper{n}} \sum_{L \in \asnk{n}{k}} \delta_{\pi, L} = \sum\limits_{L \in \asnk{n}{k}} \sum_{\pi \in \fper{n}} \delta_{\pi, L} \\= \sum_{L \in \asnk{n}{k}} |\{\pi \in \fper{n} : \pi(L) \in \asnk{n}{k}\}|.
\end{multline}
Let us choose any $L \in \asnk{n}{k}$ and calculate the number of distinct $\pi \in \fper{n}$ such that $\pi(L) \in \asnk{n}{k}$. First, we can choose any $k$-dimensional affine subspace of $\z{n}$ as $\pi(L)$. The same $\pi$ cannot have distinct $\pi(L)$, i.e. we can choose $\pi(L)$ in $2^{n - k} \snkp{n}{k}$ ways. Secondly, $\pi|_L$ can be chosen in $|\pi(L)|! = 2^k!$ ways since $\pi(L)$ is already defined. Each such $\pi|_L$ can be extended to the whole $\z{n}$ in $(2^n - 2^k)!$ ways since $\pi(\z{n}\setminus L) = \z{n} \setminus \pi(L)$. Finally, taking into account that the initial $L$ can be chosen in $2^{n - k} \snkp{n}{k}$ ways and~(\ref{eq:sumPi}), we obtain that
$$
	\sum_{\pi \in \fper{n}} |\subsp{k}{\pi}| = 2^{n - k} \cdot \snkp{n}{k} \cdot 2^{n - k} \cdot \snkp{n}{k} \cdot 2^k! \cdot (2^n - 2^k)!.
$$	
This completes the proof.
\end{proof}

A quite easy one is~\cite[Proposition~7]{Kolomeec2024}, we provide it in a little bit simplified form. 
\begin{lemma}[see~\cite{Kolomeec2024}]\label{lemma:averageSumLim}
	It holds that $\sav{n}{k} > \sav{n + 1}{k}$, where $n \geq k$. Also,
		$\lim_{n \to \infty} \sum_{k = 4}^{n - 1} \sav{n}{k} = 0.$
\end{lemma}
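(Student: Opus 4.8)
The goal of Lemma~\ref{lemma:averageSumLim} has two parts: monotonicity $\sav{n}{k} > \sav{n+1}{k}$ for fixed $k$, and the vanishing of the tail sum $\sum_{k=4}^{n-1}\sav{n}{k}$ as $n \to \infty$. The plan is to work directly from the closed form~(\ref{eq:rhoDefinition}), $\sav{n}{k} = 2^{2(n-k)}(\snkp{n}{k})^2\, 2^k!\, (2^n - 2^k)!\,/\,2^n!$, and turn it into something whose dependence on $n$ and $k$ is transparent. Using the product formula~(\ref{eq:snkpDefinition}) for $\snkp{n}{k}$ together with $2^n!/(2^n-2^k)! = (2^n)(2^n-1)\cdots(2^n-2^k+1)$, I would rewrite
$$
   \sav{n}{k} = 2^{2(n-k)}\,\frac{2^k!}{\bigl(\prod_{i=0}^{k-1}(2^k-2^i)\bigr)^2}\cdot \frac{\prod_{i=0}^{k-1}(2^n-2^i)^2}{\prod_{j=0}^{2^k-1}(2^n-j)}.
$$
Note the denominator factor-count matches: the bottom product has $2^k$ terms, the top has $2k$ genuine factors of the shape $(2^n - 2^i)$ but each squared, so $2k$ linear factors in $2^n$ up top versus $2^k$ down below; for $k \ge 4$, $2^k \gg 2k$, which is exactly why the tail decays. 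Collecting the part that depends on $n$, it has the form $C_k \cdot 2^{2n}\cdot \prod_{i=0}^{k-1}(2^n-2^i)^2 / \prod_{j=0}^{2^k-1}(2^n-j)$ with a purely $k$-dependent constant $C_k$.

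\textbf{Monotonicity.} For fixed $k$, I would show the ratio $\sav{n}{k}/\sav{n+1}{k} > 1$ directly. From the formula,
$$
   \frac{\sav{n}{k}}{\sav{n+1}{k}} = \frac{1}{4}\cdot\frac{\prod_{i=0}^{k-1}(2^n-2^i)^2}{\prod_{i=0}^{k-1}(2^{n+1}-2^i)^2}\cdot\frac{\prod_{j=0}^{2^k-1}(2^{n+1}-j)}{\prod_{j=0}^{2^k-1}(2^n-j)},
$$
where the first factor $1/4$ comes from $2^{2(n-k)}$ versus $2^{2(n+1-k)}$. Each factor $(2^n-2^i)/(2^{n+1}-2^i)$ is roughly $1/2$ but slightly larger, while each $(2^{n+1}-j)/(2^n-j)$ is roughly $2$ but slightly smaller; after multiplying, the leading powers of $2$ cancel ($4$ from $1/4$ times $(1/2)^{2k}$ against $2^{2^k}$ leaves $2^{2^k - 2k - 2}$, which for $k\ge 1$ could go either way — so the numerical comparison genuinely needs care rather than a crude bound), and one must argue the accumulated lower-order corrections tip the product strictly above $1$. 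I would instead compare termwise after pairing factors cleverly, e.g. pair $(2^{n+1}-2j)(2^{n+1}-2j-1)$ in the big numerator against $2(2^n - j)$ and $2(2^n-2^i)^2$ against $(2^{n+1}-2^i)^2$; the identity $(2^{n+1}-2j)(2^{n+1}-2j-1) = 2(2^n-j)(2^{n+1}-2j-1) > 2(2^n-j)\cdot(2^{n+1}-2j-1)$ and $(2^{n+1}-2j-1) > 2^n$ for $j$ in range should give each paired block $>$ the corresponding block, yielding the inequality. Alternatively — and this is probably cleanest — I would cite or reprove a probabilistic interpretation: $\sav{n}{k}$, up to the explicit constant in Lemma~\ref{lemma:sumPi}, equals $\mathbb{E}_\pi|\subsp{k}{\pi}|$ for random $\pi \in \fper{n}$, and one can set up a coupling between random permutations of $\z{n}$ and of $\z{n+1}$ showing the expected count strictly drops; but the direct algebraic route avoids introducing new machinery.

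\textbf{Tail sum.} For the limit, I would bound $\sav{n}{k}$ uniformly. Since $2^{n-1} \le 2^n - j$ for $0 \le j \le 2^{n-1}$ and $2^k \le 2^{n-1}$ when $k \le n-1$, each factor $1/(2^n-j) \le 2/2^n$ for all $2^k$ denominator factors, giving $\prod_j (2^n-j)^{-1} \le (2/2^n)^{2^k}\cdot(\text{a bounded correction})$ — more carefully, $2^n!/(2^n-2^k)! \ge (2^n - 2^k)^{2^k} \ge (2^{n-1})^{2^k}$ for $k \le n-1$. Meanwhile the numerator satisfies $2^{2(n-k)}(\snkp{n}{k})^2\,2^k! \le 2^{2(n-k)}\cdot 2^{2k(n-k)}\cdot 2^{k2^k}$ using $\snkp{n}{k} \le 2^{k(n-k)}$ and $2^k! \le 2^{k2^k}$. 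Combining, $\sav{n}{k} \le 2^{2(n-k)(k+1)}\cdot 2^{k2^k}\,/\,2^{(n-1)2^k} = 2^{2(n-k)(k+1) - (n-1-k)2^k}$. The exponent is $\le (n-1)\cdot 2(k+1) - (n-1-k)2^k$; for fixed $k \ge 4$ this is $\sim -(n)\,(2^k - 2k - 2) \to -\infty$, so $\sav{n}{k} \le 2^{-(2^k - 2k - 2)(n - c_k)}$ for suitable $c_k$, and for $k \ge 4$ the rate $2^k - 2k - 2 \ge 6 > 0$. Then $\sum_{k=4}^{n-1}\sav{n}{k} \le \sum_{k=4}^{n-1} 2^{-6(n - c_k)}$; showing $c_k$ grows slowly enough (it does — it is dominated by $k 2^k/(2^k-2k-2) = O(k)$) makes this a convergent-to-zero sum, e.g. bounded by $(n)\cdot 2^{-6n + O(\log n)} \to 0$, or by pulling out the monotonicity in $n$ from the first part to dominate each term by its value at $n = k+1$.

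\textbf{Main obstacle.} The monotonicity claim is the delicate part: because the leading powers of $2$ in $\sav{n}{k}/\sav{n+1}{k}$ do \emph{not} obviously favour "$>1$" (the net exponent $2^k - 2k - 2$ being positive only tells us the ratio grows with $k$, not that it exceeds $1$ for every $n,k$ including small cases like $k=2$ where one can check against~(\ref{eq:rhoDefinition2}): there $\sav{n}{2} = 2^{2n-3}/3 + 1/12 + 1/(2^{n+2}-12)$ is visibly strictly decreasing in $n$, a useful sanity check), the inequality must be extracted from the lower-order terms. I expect the cleanest resolution is the termwise pairing of factors described above, verified to be valid for all $n \ge k$ and all $k \ge 1$ — including a separate trivial check at $k=0,1$ where $\sav{n}{0}=1$ is constant (so the strict inequality actually \emph{fails} at $k=0$, meaning the lemma's hypothesis "$n \ge k$" with the strict inequality must implicitly exclude $k=0$, or the statement intends $k \ge 1$; I would note this and handle $k=1$ by direct computation, $\sav{n}{1} = 2^{2(n-1)}(2^n-1)^2\cdot 2\cdot(2^n-2)!/2^n! = 2^{2n-1}(2^n-1)/(2^n) \cdot$ a decreasing factor, confirming strict decrease). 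The tail-sum half is comparatively routine once the uniform bound $\sav{n}{k} = 2^{-\Omega(n)}$ with rate $\ge 6$ for $k \ge 4$ is in hand.
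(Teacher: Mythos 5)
The paper does not prove this lemma at all --- it is quoted from \cite{Kolomeec2024} ``in a little bit simplified form'' --- so your proposal has to be judged on its own, and it does not hold up. The decisive problem is the monotonicity half. Your own sanity checks are wrong: from~(\ref{eq:rhoDefinition2}), $\sav{n}{2}=\frac{2^{2n-3}}{3}+\frac{1}{12}+\frac{1}{2^{n+2}-12}$ is dominated by $\frac{2^{2n-3}}{3}$ and is therefore strictly \emph{increasing} in $n$, not ``visibly strictly decreasing''; likewise $\sav{n}{1}=2^{n-1}(2^n-1)$ and $\sav{n}{0}=2^n$ (not $1$) are increasing. So the inequality $\sav{n}{k}>\sav{n+1}{k}$ is simply false for $k\in\{0,1,2\}$, and no termwise pairing can prove it there. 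This is consistent with your own leading-order computation: the ratio $\sav{n}{k}/\sav{n+1}{k}$ has leading power $2^{2^k-2k-2}$, which is $<1$ for $k\le 2$, exactly $1$ for $k=3$, and at least $2^6$ for $k\ge 4$. The statement is thus true only for $k\ge 3$ (the only range in which the paper uses it, e.g.\ in Corollary~\ref{cor:AverageNumbUpperBound}), and the genuinely delicate case $k=3$ --- where the leading powers cancel exactly and the sign is decided by the lower-order factors, cf.~(\ref{eq:rhoDefinition3}) --- is precisely the one your sketch leaves unresolved.

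The tail-sum half also has a hole at the top of the range. Your bound replaces $2^k!$ by $2^{k2^k}=(2^k)^{2^k}$ and $2^n!/(2^n-2^k)!$ by $(2^{n-1})^{2^k}$, giving the exponent $2(n-k)(k+1)-(n-1-k)2^k$; at $k=n-1$ this equals $+2n$, so the bound is vacuous for the last terms of $\sum_{k=4}^{n-1}\sav{n}{k}$ --- discarding the Stirling factor $e^{-2^k}$ from $2^k!$ is exactly what you cannot afford once $2^k$ is comparable to $2^n$. The clean repair is to keep the factorials together: $2^k!\,(2^n-2^k)!/2^n!=\binom{2^n}{2^k}^{-1}$ and $\binom{2^n}{2^k}\ge (2^{n-k})^{2^k}$, which with $\snkp{n}{k}\le 2^{k(n-k)+k}$ (your intermediate $\snkp{n}{k}\le 2^{k(n-k)}$ is also false, e.g.\ $\snkp{n}{1}=2^n-1$) gives $\sav{n}{k}\le 2^{-(n-k)(2^k-2k-2)+2k}$; splitting the sum at $k=n/2$ then yields the limit. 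Alternatively, once monotonicity for $k\ge 4$ is established, dominated convergence with the summable majorant $\sav{k+1}{k}$ works; but your suggestion to ``dominate each term by its value at $n=k+1$'' alone only bounds the sum by a convergent series --- it does not by itself send it to $0$.
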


In addition to properties of permutations, we prove the one related to the second condition of Theorem~\ref{th:criterionMMFnew}.
\begin{lemma}\label{lemma:sumPhiH}
	Let $\pi: L \to \z{k}$ be invertible, where $L \in \asnk{n}{k}$. Let the set $\mathcal{H}_{\pi, \varphi}$ for $\varphi: \z{n} \to \z{}$ consist of all affine $H: L \to \z{k}$ such that the function $x \in L \mapsto \iprod{\pi(x)}{H(x)} \oplus \varphi(x)$ is affine as well. Then
	$$
		\sum_{\varphi : \z{n} \to \z{}} |\mathcal{H}_{\pi, \varphi}| = 2^{2^n - 2^k + (k + 1)^2}.
	$$
\end{lemma}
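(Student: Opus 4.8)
The plan is to interchange the order of summation and observe that, after doing so, the inner count is \emph{independent of the affine map $H$} (indeed of $\pi$), which is what makes the answer so clean. Concretely, I would write
\[
\sum_{\varphi:\z{n}\to\z{}} |\mathcal{H}_{\pi,\varphi}| = \sum_{H}\ \bigl|\{\,\varphi:\z{n}\to\z{}\ :\ x\in L\mapsto \iprod{\pi(x)}{H(x)}\oplus\varphi(x)\ \text{is affine}\,\}\bigr|,
\]
where $H$ ranges over all affine maps $L\to\z{k}$. Trying instead to fix $\varphi$ and count $H$ directly is awkward because the admissible $H$ genuinely depend on $\varphi$; swapping removes that dependence.

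First I would count the affine maps $L\to\z{k}$. Since $\dim L=k$, the set of affine functions $L\to\z{}$ is a linear space of dimension $k+1$: such a function is prescribed freely and uniquely on an affine basis $x_0,\,x_0\oplus e_1,\ldots,x_0\oplus e_k$ of $L$ (equivalently, it is the restriction of one of the $2^{n+1}$ affine functions on $\z{n}$, and exactly $2^{\,n-k}$ of those vanish on $L$). Hence there are $2^{k+1}$ affine functions $L\to\z{}$, and an affine map $L\to\z{k}$ is a $k$-tuple of them, giving $2^{k(k+1)}$ in total.

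Next, fix an affine $H:L\to\z{k}$ and count the admissible $\varphi$. Set $g_H(x)=\iprod{\pi(x)}{H(x)}$, a function $L\to\z{}$. The affinity condition involves only $\varphi|_L$ and says precisely that $\varphi|_L\oplus g_H$ is affine on $L$, i.e.\ $\varphi|_L$ lies in the coset of the space of affine functions on $L$ through $g_H$; that coset has size $2^{k+1}$. The values $\varphi(x)$ for $x\in\z{n}\setminus L$ are unconstrained, contributing a factor $2^{\,2^n-2^k}$. So the inner count equals $2^{\,2^n-2^k}\cdot 2^{k+1}$, the same for every $H$. Multiplying,
\[
\sum_{\varphi:\z{n}\to\z{}} |\mathcal{H}_{\pi,\varphi}| = 2^{k(k+1)}\cdot 2^{\,2^n-2^k+k+1} = 2^{\,2^n-2^k+(k+1)^2}.
\]

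The only point requiring any care — hardly an obstacle — is the dimension count $\dim\{\text{affine }L\to\z{}\}=k+1$, from which the coset size $2^{k+1}$ and the count $2^{k(k+1)}$ both follow; the rest is bookkeeping of mutually independent coordinates and free bits.
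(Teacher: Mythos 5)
Your proposal is correct and follows essentially the same route as the paper's proof: swap the order of summation, count the $2^{k^2+k}$ affine maps $H:L\to\z{k}$, and observe that for each fixed $H$ the admissible $\varphi$ are exactly those whose restriction to $L$ lies in a fixed coset of the $(k+1)$-dimensional space of affine functions on $L$, extended arbitrarily off $L$. The extra justifications you supply (the dimension count for affine functions on $L$ and the coset interpretation) are correct and only make explicit what the paper states more tersely.
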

\begin{proof}
	The  sum from the statement of the lemma is equal to
	$
		2^{2^n - 2^k} \sum_{\varphi|_L : L \to \z{}} |\mathcal{H}_{\pi, \varphi}|
	$
	since any $\varphi|_L$ can be extended to $\varphi$ in $2^{2^n - 2^k}$ ways.
	Next,
	$$
		\sum_{\varphi|_L : L \to \z{}} |\mathcal{H}_{\pi, \varphi}| = \sum_{\varphi|_L : L \to \z{}} \sum_{H} \delta_{\varphi, H} = \sum_{H} \sum_{\varphi|_L : L \to \z{}}  \delta_{\varphi, H},
	$$	 
	where $\delta_{\varphi, H} = 1$ if $x \in L \mapsto \iprod{\pi(x)}{H(x)} \oplus \varphi(x)$ is affine and $0$ otherwise.
	Next, it is not difficult to see that for each fixed pair $\pi, H$ the restriction $\delta_{\varphi, H} = 1$ is satisfied exactly for $2^{k + 1}$ distinct $\varphi|_L$ since $\varphi|_L$ is defined up to its affine part. The number of distinct affine $H: L \to \z{k}$ is $2^{k^2 + k}$. 
\end{proof}

\subsection{The expression for $\nearav{\MF{2n}}$}

Now we are ready to find the expression for $\nearav{\MF{2n}}$.
\begin{theorem}\label{th:NearAverage}
	$
	   \nearav{\MF{2n}} = \lbound{2n} + \sum_{k = 2}^{n} \sav{n}{k} \cdot 2^{(k + 1)^2 - 2^k}.
	$
\end{theorem}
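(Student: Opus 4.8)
The plan is to compute the sum $\sum_{f \in \MF{2n}} |\near{f}|$ by expressing $|\near{f_{\pi,\varphi}}|$ via the parametrization of Theorem~\ref{th:criterionMMFnew}, then swapping the order of summation. First I would write
$$
  \sum_{f \in \MF{2n}} |\near{f}| = \sum_{\pi \in \fper{n}} \sum_{\varphi: \z{n} \to \z{}} |\near{f_{\pi, \varphi}}|,
$$
and split $\near{f_{\pi,\varphi}}$ according to $k = \dim L$, where $L \in \subsp{k}{\pi}$. For $k \in \{0, 1\}$, Proposition~\ref{state:mcfarlandcase} tells us the total contribution over all such $L$ and $H$ is exactly $\lbound{2n}$, independently of $\pi$ and $\varphi$; this accounts for the leading $\lbound{2n}$ term after dividing by $|\MF{2n}| = 2^{2^n} 2^n!$ (since $\sum_{\pi,\varphi} 1 = |\MF{2n}|$). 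For $k \geq 2$, Theorem~\ref{th:criterionMMFnew} says the number of bent functions in $\near{f_{\pi,\varphi}}$ coming from a fixed $L \in \subsp{k}{\pi}$ equals the number of affine $H: L \to \z{k}$ such that $x \in L \mapsto \iprod{H(x)}{\perk{\pi}{I}(x)} \oplus \varphi(x)$ is affine, i.e. exactly $|\mathcal{H}_{\perk{\pi}{I}|_L, \varphi}|$ in the notation of Lemma~\ref{lemma:sumPhiH} (here I use that $\perk{\pi}{I}|_L$ is invertible because $I$ is an information set of $\linsp{\pi(L)}$).

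Next I would carry out the double summation for each fixed $k \geq 2$:
$$
  \sum_{\pi \in \fper{n}} \sum_{\varphi} \sum_{L \in \subsp{k}{\pi}} |\mathcal{H}_{\perk{\pi}{I}|_L, \varphi}|.
$$
The key point is that for each fixed $\pi$ and $L$, Lemma~\ref{lemma:sumPhiH} gives $\sum_{\varphi} |\mathcal{H}_{\perk{\pi}{I}|_L, \varphi}| = 2^{2^n - 2^k + (k+1)^2}$ — crucially this value depends only on $k$, not on the particular $\pi$, $L$, or $I$. Pulling this constant out, the remaining sum is $2^{2^n - 2^k + (k+1)^2} \sum_{\pi \in \fper{n}} |\subsp{k}{\pi}|$, which by Lemma~\ref{lemma:sumPi} equals $2^{2^n - 2^k + (k+1)^2} \cdot 2^n! \cdot \sav{n}{k}$. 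Dividing by $|\MF{2n}| = 2^{2^n} \cdot 2^n!$ collapses this to $\sav{n}{k} \cdot 2^{(k+1)^2 - 2^k}$, and summing over $k = 2, \ldots, n$ together with the $\lbound{2n}$ term from $k \leq 1$ gives the claimed formula.

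The main obstacle — and the place requiring the most care — is the bookkeeping that ensures we are counting each bent function in $\near{f_{\pi,\varphi}}$ exactly once in the sum $\sum_k \sum_L |\mathcal{H}|$. This is precisely the content of the last sentence of Theorem~\ref{th:criterionMMFnew}: distinct pairs $(L, H)$ give distinct bent functions, so $|\near{f_{\pi,\varphi}}|$ is genuinely the sum over $k$ and $L \in \subsp{k}{\pi}$ of $|\mathcal{H}_{\perk{\pi}{I}|_L, \varphi}|$ with no overcounting within a single $f$. (Note this is an internal count for one fixed $f$, so the cross-term coincidences studied in Theorem~\ref{th:coincidenceMMF} are irrelevant here — those matter only for $|\near{\MF{2n}}|$, not for $\nearav{\MF{2n}}$.) A secondary technical point is checking that the exponent arithmetic in Lemma~\ref{lemma:sumPhiH} matches: the factor $2^{2^n - 2^k}$ from extending $\varphi|_L$ to $\varphi$, times $2^{k+1}$ valid choices of $\varphi|_L$ per affine $H$, times $2^{k^2 + k}$ affine maps $H: L \to \z{k}$, indeed gives $2^{2^n - 2^k + (k+1)^2}$, and after division by $2^{2^n}$ leaves exactly $2^{(k+1)^2 - 2^k}$ as the per-$(\pi,L)$ average, which is then weighted by $\sav{n}{k}$ via Lemma~\ref{lemma:sumPi}.
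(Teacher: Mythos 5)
Your proposal is correct and follows essentially the same route as the paper's proof: decompose $|\near{f_{\pi,\varphi}}|$ by $k=\dim L$ via Theorem~\ref{th:criterionMMFnew} (with the $(L,H)$-distinctness clause handling overcounting within a single $f$), handle $k\le 1$ by Proposition~\ref{state:mcfarlandcase}, and for $k\ge 2$ swap the summation order and apply Lemma~\ref{lemma:sumPhiH} followed by Lemma~\ref{lemma:sumPi}. All the key observations, including why Theorem~\ref{th:coincidenceMMF} is irrelevant here, match the paper's argument.
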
  
\begin{proof}
    Theorem~\ref{th:criterionMMFnew} gives us that
  \begin{equation}\label{eq:expressionForSum}
    |\near{f_{\pi, \varphi}}| = \sum_{k = 0}^{n} \sum_{L \in \subsp{k}{\pi}} |\mathcal{H}_{\pi_L, \varphi}|,
  \end{equation}
    where $\pi_L = \perk{\pi}{\infomaps(\linsp{\pi(L)})}$ and $\mathcal{H}_{\pi_L, \varphi}$ is defined in Lemma~\ref{lemma:sumPhiH} (L\ref{lemma:sumPhiH}) according to the second condition of Theorem~\ref{th:criterionMMFnew}.
    Thus, 
    \begin{align}
        \sum_{f_{\pi, \varphi} \in \MF{2n}} \sum_{L \in \subsp{k}{\pi}} |\mathcal{H}_{\pi_L, \varphi}| &= \sum_{\pi \in \fper{n}} \sum_{\varphi: \z{n} \to \z{}} \sum_{L \in \subsp{k}{\pi}} |\mathcal{H}_{\pi_L, \varphi}|
        \nonumber \\ &=  \sum_{\pi \in \fper{n}} \sum_{L \in \subsp{k}{\pi}} \sum_{\varphi: \z{n} \to \z{}} |\mathcal{H}_{\pi_L, \varphi}| \nonumber \\
        &\stackrel{\text{L\ref{lemma:sumPhiH}}}{=} \sum_{\pi \in \fper{n}} \sum_{L \in \subsp{k}{\pi}} 2^{2^n - 2^k + (k + 1)^2} \nonumber \\
        &= 2^{2^n - 2^k + (k + 1)^2} \sum_{\pi \in \fper{n}} |\subsp{k}{\pi}| \nonumber \\
        &\stackrel{\text{L\ref{lemma:sumPi}}}{=} 2^{2^n} 2^n! \, 2^{(k + 1)^2 - 2^k} \sav{n}{k}. \label{eq:kCorrespondence}
    \end{align}
    According to~(\ref{eq:expressionForSum}), $k = 0$ and $k = 1$ are the cases of Proposition~\ref{state:mcfarlandcase} generating exactly $\lbound{2n}$ bent functions from $\MF{2n}$. Thus, (\ref{eq:expressionForSum}) and (\ref{eq:kCorrespondence}) provide that $\nearav{\MF{2n}}$ is equal to
    \begin{equation*}
       \sum_{f_{\pi, \varphi} \in \MF{2n}} \frac{|\near{f_{\pi, \varphi}}|}{|\MF{2n}|}  = \lbound{2n} + \sum_{k = 2}^{n} 2^{(k + 1)^2 - 2^k} \sav{n}{k}.
    \end{equation*}
    We also note that $k = 2$ is the case of  Theorem~\ref{theorem:criterionMMFdim2}.
\end{proof}
The equality~(\ref{eq:rhoDefinition2}) implies the following.
\begin{corollary}\label{cor:AverageNumbSimplified}
		 $\nearav{\MF{2n}} = \frac{10}{3} 2^{2n} - 2^n + \frac{8}{3} + \frac{8}{2^{n} - 3} + 2^{(n + 1)^2 - 2^n} + \sum_{k = 3}^{n - 1} 2^{(k + 1)^2 - 2^k} \sav{n}{k}$, where $n \geq 2$.
    Rounded values of the last sum can be found in Table~\ref{table:approximationSum}.
\end{corollary}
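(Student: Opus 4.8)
The plan is to substitute the explicit value~(\ref{eq:rhoDefinition2}) of $\sav{n}{2}$ into the formula of Theorem~\ref{th:NearAverage},
$$
    \nearav{\MF{2n}} = \lbound{2n} + \sum_{k = 2}^{n} \sav{n}{k} \cdot 2^{(k+1)^2 - 2^k},
$$
and to peel off the two extremal summands $k = 2$ and $k = n$, leaving the middle summands with $3 \le k \le n - 1$ untouched. I would first recall $\lbound{2n} = 2^{2n+1} - 2^n$ from Proposition~\ref{state:mcfarlandcase}.

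For the $k = n$ summand, the exponent is $(n+1)^2 - 2^n$ and the coefficient is $\sav{n}{n} = 1$: indeed, by~(\ref{eq:rhoDefinition}) we get $\sav{n}{n} = 2^{0}\,(\snkp{n}{n})^2\, 2^n!\cdot 0!\,/\,2^n! = 1$ since $\snkp{n}{n} = 1$ (the only $n$-dimensional linear subspace of $\z{n}$ is $\z{n}$ itself). Hence this summand equals $2^{(n+1)^2 - 2^n}$, already in the required shape.

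For the $k = 2$ summand the exponent is $(2+1)^2 - 2^2 = 5$, so the term equals $32\,\sav{n}{2}$, and~(\ref{eq:rhoDefinition2}) gives
$$
    32\,\sav{n}{2} = 32\Bigl( \frac{2^{2n-3}}{3} + \frac{1}{12} + \frac{1}{2^{n+2} - 12} \Bigr) = \frac{4}{3}\,2^{2n} + \frac{8}{3} + \frac{8}{2^{n} - 3}.
$$
Adding this to $\lbound{2n} = 2\cdot 2^{2n} - 2^n$ turns the leading coefficient into $2 + \tfrac{4}{3} = \tfrac{10}{3}$, while the remaining constant and fractional parts assemble into $-2^n + \tfrac{8}{3} + \tfrac{8}{2^n-3}$; together with the $k = n$ term and the untouched tail $\sum_{k=3}^{n-1} 2^{(k+1)^2 - 2^k}\sav{n}{k}$ this is exactly the claimed expression.

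Since everything reduces to elementary arithmetic, there is no genuine obstacle; the only point requiring a moment's care is the behaviour at small $n$, where the two extremal indices $k = 2$ and $k = n$ coincide ($n = 2$) or the middle sum is empty ($n = 2,3$), which I would verify directly against Theorem~\ref{th:NearAverage}. The rounded numerical values of the tail sum collected in Table~\ref{table:approximationSum} then follow by evaluating $\sav{n}{k}$ from~(\ref{eq:rhoDefinition}) (using~(\ref{eq:rhoDefinition3}) for $k = 3$) term by term.
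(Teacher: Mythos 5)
Your derivation is exactly the paper's (unstated) proof: the corollary is obtained by substituting~(\ref{eq:rhoDefinition2}) into Theorem~\ref{th:NearAverage}, peeling off the $k=2$ and $k=n$ summands (using $\sav{n}{n}=1$ and $2^{(2+1)^2-2^2}=32$), and your arithmetic, $32\sav{n}{2}=\frac{4}{3}2^{2n}+\frac{8}{3}+\frac{8}{2^n-3}$, checks out. The small-$n$ verification you deferred is worth actually carrying out: for $n=2$ the indices $k=2$ and $k=n$ coincide, so the displayed formula counts the single summand $2^5\sav{2}{2}=32$ twice (it yields $92$, whereas Theorem~\ref{th:NearAverage} gives $\lbound{4}+32=60$), so the identity in fact requires $n\geq 3$ --- a slip in the corollary's stated range rather than in your argument.
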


\begin{table}[!t]
\renewcommand{\arraystretch}{1.3}
\centering
\begin{tabular}{c|cc|c}
\hline
$2n$ & $\sav{n}{3} \cdot 2^8$ & $\sav{n}{4} \cdot 2^9$ & $\sum_{k = 3}^{n - 1} \sav{n}{k} \cdot 2^{(k + 1)^2 - 2^k}$ \\
\hline
8 & 17.902098 & 512 & 17.9020979020979034 \\
10 & 9.355732 & 0.0032743174336464 & 9.3590067076487955 \\
12 & 7.203453 & 0.0000071066250978 & 7.2034599713804699 \\
14 & 6.394514 & 0.0000000489712760 & 6.3945141111150132 \\
16 & 6.040081 & 0.0000000005242934 & 6.0400809074314523 \\
18 & 5.873799 & 0.0000000000068280 & 5.8737987726531991 \\
20 & 5.793219 & 0.0000000000000976 & 5.7932190779699999 \\
22 & 5.753549 & 0.0000000000000015 & 5.7535493917318199 \\
24 & 5.733867 & 0.0000000000000000 & 5.7338671451801089 \\
\hline
\end{tabular}
\caption{The values for estimating $\nearav{\MF{2n}}$ and $|\near{\MF{2n}}|$}
\label{table:approximationSum}
\end{table}

\begin{corollary}\label{cor:AverageNumbUpperBound}
	$
		 \nearav{\MF{2n}} < \frac{10}{3} 2^{2n} - 2^n + 29
	$ 
    for $n \geq 5$. 
\end{corollary}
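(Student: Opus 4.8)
\textbf{Proof plan for Corollary~\ref{cor:AverageNumbUpperBound}.} The plan is to start from the exact expression for $\nearav{\MF{2n}}$ given in Corollary~\ref{cor:AverageNumbSimplified}, namely
\[
    \nearav{\MF{2n}} = \tfrac{10}{3} 2^{2n} - 2^n + \tfrac{8}{3} + \tfrac{8}{2^{n} - 3} + 2^{(n + 1)^2 - 2^n} + \sum_{k = 3}^{n - 1} 2^{(k + 1)^2 - 2^k} \sav{n}{k},
\]
and to show that for $n \geq 5$ the sum of the ``small'' terms $\tfrac{8}{3} + \tfrac{8}{2^n - 3} + 2^{(n+1)^2 - 2^n} + \sum_{k=3}^{n-1} 2^{(k+1)^2 - 2^k}\sav{n}{k}$ is strictly less than $29$. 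So the whole inequality reduces to bounding three pieces: the constant $\tfrac{8}{3}$, the harmless $\tfrac{8}{2^n-3}$ (already $< 1$ for $n \ge 5$), the $k=n$-type term $2^{(n+1)^2 - 2^n}$, and the residual sum $\sum_{k=3}^{n-1} 2^{(k+1)^2 - 2^k}\sav{n}{k}$.

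First I would dispatch the term $2^{(n+1)^2 - 2^n}$: the exponent $(n+1)^2 - 2^n$ is decreasing in $n$ for $n \ge 5$ and at $n=5$ equals $36 - 32 = 4$, so this term is at most $2^4 = 16$ for $n = 5$ and tends to $0$ afterwards; in fact for $n \ge 6$ it is already below $1$. (One must double-check that $n=5$ is where the bound $29$ is tightest; the table in the paper is for $2n \ge 8$, i.e. $n \ge 4$, so the $n=5$ column — $2n=10$ — is the relevant one, where $2^{(n+1)^2-2^n} = 2^4 = 16$.) Next, for the residual sum I would invoke the monotonicity $\sav{n}{k} > \sav{n+1}{k}$ and $\lim_{n\to\infty}\sum_{k=4}^{n-1}\sav{n}{k} = 0$ from Lemma~\ref{lemma:averageSumLim}; together with the explicit formula~(\ref{eq:rhoDefinition3}) for $\sav{n}{3}$ this lets me bound $\sum_{k=3}^{n-1} 2^{(k+1)^2-2^k}\sav{n}{k}$ by its value at the smallest admissible $n$ (here $n=5$), using that each summand $2^{(k+1)^2 - 2^k}\sav{n}{k}$ is decreasing in $n$ for fixed $k$, which follows from $\sav{n}{k}$ being decreasing in $n$. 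The numerical value of $\sum_{k=3}^{n-1} 2^{(k+1)^2-2^k}\sav{n}{k}$ at $2n = 10$ is, by the table, about $9.36$, and it only shrinks for larger $n$.

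Assembling: for $n \ge 5$ we have $\tfrac{8}{3} + \tfrac{8}{2^n - 3} + 2^{(n+1)^2 - 2^n} + \sum_{k=3}^{n-1} 2^{(k+1)^2-2^k}\sav{n}{k} < 2.67 + 0.28 + 16 + 9.36 < 29$, hence $\nearav{\MF{2n}} < \tfrac{10}{3}2^{2n} - 2^n + 29$. For $n \ge 6$ the slack is enormous (the $2^{(n+1)^2-2^n}$ term collapses and the residual sum drops below $8$), so the only delicate case is $n=5$, which is verified directly from the table. The main obstacle is purely bookkeeping: making the monotonicity-in-$n$ argument for each fixed-$k$ summand fully rigorous so that the $n=5$ values genuinely dominate, rather than just citing the table; this is where I would be careful to cite Lemma~\ref{lemma:averageSumLim} for $k \ge 4$ and use the closed form~(\ref{eq:rhoDefinition3}) for $k = 3$ to confirm $2^{(k+1)^2 - 2^k}\sav{n}{k}$ is decreasing in $n$.
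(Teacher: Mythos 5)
Your proposal is correct and follows essentially the same route as the paper: both start from Corollary~\ref{cor:AverageNumbSimplified} and bound the three residual pieces by $\tfrac{8}{3}+\tfrac{8}{2^n-3}<3$, $2^{(n+1)^2-2^n}\le 16$ for $n\ge 5$, and $\sum_{k=3}^{n-1}2^{(k+1)^2-2^k}\sav{n}{k}<10$ via Table~\ref{table:approximationSum} and Lemma~\ref{lemma:averageSumLim}. Your extra care in handling the $k=3$ term through~(\ref{eq:rhoDefinition3}) and the $k\ge 4$ tail through Lemma~\ref{lemma:averageSumLim} only makes the bookkeeping more explicit than the paper's one-line justification.
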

\begin{proof}
    By Corollary~\ref{cor:AverageNumbSimplified}: $\frac{8}{3} + \frac{8}{2^n - 3} < 3$ and $2^{(n + 1)^2 - 2^n} \leq 16$ if $n \geq 5$. The sum $\sum_{k = 3}^{n - 1} 2^{(k + 1)^2 - 2^k} \sav{n}{k}$ is less than $10$, see Table~\ref{table:approximationSum} and Lemma~\ref{lemma:averageSumLim}.
\end{proof}
    
\begin{corollary}\label{cor:NearAverageAsympt}
    $
        \nearav{\MF{2n}} = \frac{10}{3} 2^{2n} - 2^n + \frac{176}{21} + o(1).
    $ 
    It can also be written as
    $
        \nearav{\MF{2n}} = \lbound{2n} + \frac{4}{3} 2^{2n} + \frac{176}{21} + o(1)$ or $\nearav{\MF{2n}} = \frac{5}{3} \lbound{2n} + \frac{2}{3} 2^n +\frac{176}{21} + o(1).
    $
\end{corollary}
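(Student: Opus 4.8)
The plan is to take the closed form for $\nearav{\MF{2n}}$ furnished by Corollary~\ref{cor:AverageNumbSimplified},
$$
   \nearav{\MF{2n}} = \frac{10}{3} 2^{2n} - 2^n + \frac{8}{3} + \frac{8}{2^{n} - 3} + 2^{(n + 1)^2 - 2^n} + \sum_{k = 3}^{n - 1} 2^{(k + 1)^2 - 2^k} \sav{n}{k},
$$
and to let $n \to \infty$ term by term. The part $\frac{10}{3}2^{2n} - 2^n$ carries all the growth; the summand $\frac{8}{2^n - 3}$ is manifestly $o(1)$, and since $(n+1)^2 - 2^n \to -\infty$ the summand $2^{(n+1)^2 - 2^n}$ is $o(1)$ as well. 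So the whole question reduces to computing $\lim_{n\to\infty}\sum_{k=3}^{n-1} 2^{(k+1)^2 - 2^k} \sav{n}{k}$.

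First I would dispose of the tail $k \geq 4$. The exponent $(k+1)^2 - 2^k$ attains its maximum over $k \geq 3$ at $k = 4$, where it equals $9$, so $2^{(k+1)^2 - 2^k} \leq 2^9$ for all $k \geq 4$ (indeed for all $k \geq 3$). Hence $\sum_{k=4}^{n-1} 2^{(k+1)^2 - 2^k}\sav{n}{k} \leq 2^9 \sum_{k=4}^{n-1}\sav{n}{k}$, which tends to $0$ by Lemma~\ref{lemma:averageSumLim}. Only the $k = 3$ term survives, and by~(\ref{eq:rhoDefinition3}) the ratio $\frac{2^n(2^n-1)(2^n-2)(2^n-4)}{(2^n-3)(2^n-5)(2^n-6)(2^n-7)} \to 1$, so $\sav{n}{3} \to \frac{5}{224}$ and $2^8\sav{n}{3} \to \frac{2^8 \cdot 5}{224} = \frac{40}{7}$.

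Collecting the surviving constants gives $\nearav{\MF{2n}} = \frac{10}{3} 2^{2n} - 2^n + \big(\frac{8}{3} + \frac{40}{7}\big) + o(1)$, and since $\frac{8}{3} + \frac{40}{7} = \frac{56 + 120}{21} = \frac{176}{21}$ the first formula follows. The two alternative expressions are then obtained by elementary algebra from $\lbound{2n} = 2^{2n+1} - 2^n$: writing $\frac{10}{3}2^{2n} - 2^n = (2^{2n+1} - 2^n) + \frac{4}{3}2^{2n} = \lbound{2n} + \frac{4}{3}2^{2n}$ yields the second form, and $\frac{10}{3}2^{2n} - 2^n = \frac{5}{3}(2^{2n+1} - 2^n) + \frac{2}{3}2^n = \frac{5}{3}\lbound{2n} + \frac{2}{3}2^n$ yields the third. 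The argument is essentially routine; the only step needing a moment's care is the uniform bound on the coefficients $2^{(k+1)^2 - 2^k}$ that lets Lemma~\ref{lemma:averageSumLim} kill the tail.
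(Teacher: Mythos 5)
Your proof is correct and follows essentially the same route as the paper's: both isolate the $k=3$ term (giving $2^8\cdot\tfrac{5}{224}=\tfrac{40}{7}$), bound the tail $k\ge 4$ by $2^9\sum_{k=4}^{n-1}\sav{n}{k}$ and kill it with Lemma~\ref{lemma:averageSumLim}, and then combine $\tfrac{8}{3}+\tfrac{40}{7}=\tfrac{176}{21}$. The only cosmetic difference is that you start from Corollary~\ref{cor:AverageNumbSimplified} whereas the paper works directly from Theorem~\ref{th:NearAverage} together with~(\ref{eq:rhoDefinition2}); the content is identical.
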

\begin{proof}
    Let us note that
    $$
        \sum_{k = 4}^{n - 1}  2^{(k + 1)^2 - 2^k} \sav{n}{k} \leq 2^9 \sum_{k = 4}^{n - 1} \sav{n}{k},
    $$
    since $2^{(k + 1)^2 - 2^k}$ is equal to $2^9$ for $k = 4$, $2^{4}$ for $k = 5$ and is less than $1$ for $k \geq 6$. Thus, Lemma~\ref{lemma:averageSumLim} guarantees that 
    \begin{equation}\label{eq:partsFrom4ton}
      \sum_{k = 4}^{n - 1} 2^{(k + 1)^2 - 2^k} \sav{n}{k} = o(1).
    \end{equation}
    By Theorem~\ref{th:NearAverage}, the equality above, (\ref{eq:rhoDefinition2}) and~(\ref{eq:rhoDefinition3}), $\nearav{\MF{2n}}$ is 
    \begin{multline*}
         \lbound{2n} + 2^5 \sav{n}{2} + 2^8 \sav{n}{3} + \sum_{k = 4}^{n - 1} 2^{(k + 1)^2 - 2^k} \sav{n}{k} + 2^{(n + 1)^2 - 2^n} \\
        = 2^{2n + 1} - 2^n + 2^5 \big(\frac{2^{2n - 3}}{3} + \frac{1}{12} \big) + 2^8 \cdot \frac{5}{224} + o(1), 
    \end{multline*}
    which is equal to $\frac{10}{3} 2^{2n} - 2^n + \frac{8}{3} + \frac{40}{7} + o(1)$ as well as to all equalities from the statement of the corollary.
\end{proof}

Thus, a random $f \in \MF{2n}$, which is constructed using $2^n$ distinct (up to adding a constant) affine functions in $n$ variables, is affine on approximately $\frac{5}{3} \lbound{2n}$ or $\frac{10}{3} 2^{2n}$ distinct $U \in \asnk{2n}{n}$. Taking into account the lower bound $|\near{f}| \geq \lbound{2n}$ (see Proposition~\ref{state:mcfarlandcase}), 
the expected value of $|\near{f}|$ is asymptotically $\frac{5}{3}$ times greater than the minimum possible value of $|\near{f}|$. 
Indeed, there are bent functions $f \in \MF{2n}$ with $|\near{f}| = \lbound{2n} + o(\lbound{2n})$~\cite{BykovKolomeec2025}. Moreover, the bound $\lbound{2n}$ is accurate for some $n$~\cite{BykovKolomeec2023, BykovKolomeec2025}. As we will show in Section~\ref{sec:nearestAverageMFC}, the situation is exactly the same for $f \in \MFC{2n}$ due to $\nearav{\MFC{2n}} = \nearav{\MF{2n}} - o(1)$ (see Theorem~\ref{th:MFCompletedNearAverage}).

Also, we point out that the number of variables here is $m = 2n$. Hence, the construction~(\ref{eq:subspaceContruction}) for some random $f \in \MF{2n}$ generates slightly more bent functions than the trivial one that just adds to the given bent function one of $2^{2n + 1}$ affine functions. At the same time, it can be considered as a generalization of the construction~(\ref{eq:subspaceContruction}) for affine subspaces of $\z{2n}$ of arbitrary dimension~\cite{Carlet1994}. 
It would be interesting to know how many bent functions this generalized construction generates, since its properties are also connected with the affinity~\cite{Kolomeec2021}.

\section{The cardinalities of $\near{\MF{2n}}$ and $\MFSP{2n}$}\label{sec:nearestNumber}

Here we find the explicit expressions for $|\near{\MF{2n}}|$ and $|\MFSP{2n}|$ using the results of Sections~\ref{sec:nearestToMF} and \ref{sec:nearestAverage}. 

\begin{theorem}\label{th:NearMMFNumb}
	$|\near{\MF{2}}| = 0$ and for $n \geq 2$ the following holds:
	$$
		|\near{\MF{2n}}| = \big(\frac{4}{3} \sav{n}{2} + \sum_{k = 3}^{n} \sav{n}{k} \cdot 2^{(k + 1)^2 - 2^k} \big)|\MF{2n}|.
	$$
    Moreover, the case of Theorem~\ref{theorem:criterionMMFdim2} generates $\frac{4}{3} \sav{n}{2} |\MF{2n}|$ bent functions from $\near{\MF{2n}}$.
\end{theorem}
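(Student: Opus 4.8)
The plan is a weighted double count, using what is already established. On one hand, the proof of Theorem~\ref{th:NearAverage} evaluates the total multiplicity sum with the contribution of each subspace dimension isolated: by Theorem~\ref{th:criterionMMFnew} the admissible pairs $(L,H)$ for a fixed $f_{\pi,\varphi}\in\MF{2n}$ are in bijection with $\near{f_{\pi,\varphi}}$, and by Proposition~\ref{state:mcfarlandcase} the corresponding bent function lies outside $\MF{2n}$ exactly when $\dim L\geq 2$. Hence, for each $k\geq 2$, identity~(\ref{eq:kCorrespondence}) (together with $|\MF{2n}|=2^{2^n}\,2^n!$) shows that the number of triples $(f_{\pi,\varphi},L,H)$ with $f_{\pi,\varphi}\in\MF{2n}$, $\dim L=k$, and $(L,H)$ admissible for $f_{\pi,\varphi}$ equals
$$
|\MF{2n}|\cdot 2^{(k+1)^2-2^k}\sav{n}{k}.
$$
Summing over $k$ and reordering, this quantity also equals $\sum_{g\in\near{\MF{2n}}}|\{f_{\pi,\varphi}\in\MF{2n}:g\in\near{f_{\pi,\varphi}}\}|$, where I use that for a fixed $f_{\pi,\varphi}$ distinct $(L,H)$ give distinct bent functions.

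On the other hand, I would read off the multiplicity of each $g\in\near{\MF{2n}}$ from the coincidence results. By Lemma~\ref{lemma:piPhiOutsideL} any two representations of a fixed such $g$ share the same subspace $L$, so $k=\dim L$ is an invariant of $g$ and $\near{\MF{2n}}$ is the disjoint union over $k\in\{2,\dots,n\}$ of the sets of $g$ with $\dim L=k$; call their sizes $N_k$. Theorem~\ref{th:coincidenceMMF} then gives: a $g$ with $\dim L=2$ lies in $\near{f_{\pi',\varphi'}}$ for exactly $24$ functions $f_{\pi',\varphi'}$, with a unique representing pair $(L',H')$ for each of them; a $g$ with $\dim L\geq 3$ lies in $\near{f_{\pi',\varphi'}}$ for exactly one $f_{\pi',\varphi'}$. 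Matching with the first side yields $24\,N_2=|\MF{2n}|\cdot 2^{5}\sav{n}{2}$, i.e. $N_2=\frac{4}{3}\sav{n}{2}|\MF{2n}|$, and $N_k=|\MF{2n}|\cdot 2^{(k+1)^2-2^k}\sav{n}{k}$ for $k\geq 3$; summing $|\near{\MF{2n}}|=\sum_{k=2}^{n}N_k$ gives the claimed formula. The ``moreover'' part is immediate, since $\dim L=2$ is precisely the subcase of Theorem~\ref{theorem:criterionMMFdim2}, which therefore generates exactly these $N_2=\frac{4}{3}\sav{n}{2}|\MF{2n}|$ functions. The degenerate case $|\near{\MF{2}}|=0$ corresponds to $n=1$, where $\asnk{1}{k}=\emptyset$ for $k\geq 2$, so by Proposition~\ref{state:mcfarlandcase} every element of $\near{f}$ with $f\in\MF{2}$ already lies in $\MF{2}$.

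There is no hard computation here; the real work is the multiplicity bookkeeping, which I expect to be the only delicate point: one must verify that (i) within a single $f_{\pi,\varphi}$ the map $(L,H)\mapsto\nbent{\pi}{\varphi}{L}{H}$ is injective, so the per-dimension totals from the proof of Theorem~\ref{th:NearAverage} count distinct bent functions; (ii) $\dim L$ is genuinely an invariant of $g\in\near{\MF{2n}}$, so the partition into the $N_k$ does not overcount; and (iii) the ``$24$'' of Theorem~\ref{th:coincidenceMMF} is the exact number of $f_{\pi',\varphi'}$ hitting a given $g$, each through a unique $(L',H')$, so it really is the multiplicity of $g$ in $\sum_{f}|\near{f}\setminus\MF{2n}|$. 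All three are provided by Theorems~\ref{th:criterionMMFnew} and~\ref{th:coincidenceMMF} and Lemma~\ref{lemma:piPhiOutsideL}, after which the identity is arithmetic.
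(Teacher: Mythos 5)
Your proposal is correct and is essentially the paper's own argument: both count the triples $(f_{\pi,\varphi},L,H)$ per dimension $k$ via~(\ref{eq:kCorrespondence}), discard $k\leq 1$ by Proposition~\ref{state:mcfarlandcase}, and divide by the multiplicity ($24$ for $k=2$, $1$ for $k\geq 3$) supplied by Theorem~\ref{th:coincidenceMMF} and Lemma~\ref{lemma:piPhiOutsideL}. The only difference is presentational — you make the partition into the $N_k$ and the invariance of $\dim L$ explicit, which the paper leaves implicit.
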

\begin{proof}
    Similarly to the proof of Theorem~\ref{th:NearAverage}, (\ref{eq:expressionForSum}), Proposition~\ref{state:mcfarlandcase} and Theorem~\ref{th:coincidenceMMF} give us that $|\near{\MF{2n}}|$ is equal to
    \begin{equation*}
        \frac{1}{24} \sum_{f_{\pi, \varphi} \in \MF{2n}} \sum_{L \in \subsp{2}{\pi}} |\mathcal{H}_{\pi_L, \varphi}|  +   \sum_{k = 3}^{n} \sum_{f_{\pi, \varphi} \in \MF{2n}} \sum_{L \in \subsp{k}{\pi}} |\mathcal{H}_{\pi_L, \varphi}|.
    \end{equation*}
    Indeed, the functions generated in the case of $k \in \{0, 1\}$ belong to $\MF{2n}$. The case of $k \geq 3$ generates the functions belonging to only one $\near{f_{\pi, \varphi}}$, see Theorem~\ref{th:coincidenceMMF}. Finally, if $k = 2$, each function is calculated exactly $24$ times in the sum $\sum_{f_{\pi, \varphi} \in \MF{2n}} \sum_{L \in \subsp{2}{\pi}} |\mathcal{H}_{\pi_L, \varphi}|$.
	The equality~(\ref{eq:kCorrespondence}) completes the proof.
\end{proof}

\begin{corollary}\label{cor:NearSimplified}
		$\frac{|\near{\MF{2n}}|}{|\MF{2n}|} = \frac{1}{18}2^{2n} + \frac{1}{9} + \frac{1}{3 \cdot 2^{n} - 9} + 2^{(n + 1)^2 - 2^n} + \sum_{k = 3}^{n - 1} 2^{(k + 1)^2 - 2^k} \sav{n}{k}$, where $n \geq 2$. See also Table~\ref{table:approximationSum}.
\end{corollary}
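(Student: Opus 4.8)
The plan is to obtain this as a purely arithmetic rearrangement of Theorem~\ref{th:NearMMFNumb}, entirely parallel to the derivation of Corollary~\ref{cor:AverageNumbSimplified} from Theorem~\ref{th:NearAverage}. Theorem~\ref{th:NearMMFNumb} already gives $|\near{\MF{2n}}| = \bigl(\tfrac{4}{3}\sav{n}{2} + \sum_{k=3}^{n}\sav{n}{k}\cdot 2^{(k+1)^2 - 2^k}\bigr)|\MF{2n}|$, so after dividing both sides by $|\MF{2n}|$ the only work left is to put the bracketed quantity into the displayed closed form.

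First I would expand $\tfrac{4}{3}\sav{n}{2}$ by substituting the explicit value~\eqref{eq:rhoDefinition2}, which is valid for $n\geq 2$. This yields $\tfrac{4}{3}\cdot\tfrac{2^{2n-3}}{3} + \tfrac{4}{3}\cdot\tfrac{1}{12} + \tfrac{4}{3}\cdot\tfrac{1}{2^{n+2}-12}$, and three elementary identities collapse these to $\tfrac{1}{18}2^{2n}$, $\tfrac{1}{9}$ and $\tfrac{1}{3\cdot 2^{n}-9}$ respectively; for the last one it suffices to note $2^{n+2}-12 = 4(2^{n}-3)$, so $\tfrac{4}{3(2^{n+2}-12)} = \tfrac{1}{3(2^{n}-3)}$.

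Next I would split off the top summand $k=n$ from $\sum_{k=3}^{n}\sav{n}{k}\cdot 2^{(k+1)^2-2^k}$. Since there is a unique $n$-dimensional linear subspace of $\z{n}$ we have $\snkp{n}{n}=1$, and then definition~\eqref{eq:rhoDefinition} gives $\sav{n}{n} = 2^{0}\cdot 1\cdot 2^{n}!\cdot 0!\,/\,2^{n}! = 1$; hence the $k=n$ term equals $2^{(n+1)^2-2^n}$ and what remains is $\sum_{k=3}^{n-1}2^{(k+1)^2-2^k}\sav{n}{k}$. Assembling the four displayed constituents reproduces the claimed formula for $n\geq 3$, and for $n=2$ the sum $\sum_{k=3}^{n}$ in Theorem~\ref{th:NearMMFNumb} is empty while~\eqref{eq:rhoDefinition2} gives $\sav{2}{2}=1$, so the ratio equals $\tfrac{4}{3}$. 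There is essentially no obstacle here: the statement is bookkeeping on top of Theorem~\ref{th:NearMMFNumb}, and the only care needed is tracking which summand of $\sum_{k=3}^{n}$ is displayed separately (the $k=n$ one, via $\sav{n}{n}=1$) versus kept inside the residual sum, together with the three fraction identities coming from~\eqref{eq:rhoDefinition2}.
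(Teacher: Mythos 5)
Your proposal is correct and is exactly the bookkeeping the paper leaves implicit: divide Theorem~\ref{th:NearMMFNumb} by $|\MF{2n}|$, substitute~(\ref{eq:rhoDefinition2}) for $\sav{n}{2}$ (all three fraction identities check, e.g.\ $2^{n+2}-12=4(2^{n}-3)$), and peel off the $k=n$ summand via $\snkp{n}{n}=1$, hence $\sav{n}{n}=1$. One remark: your own computation shows the assembled display is only valid for $n\ge 3$; at $n=2$ the $k=2$ and $k=n$ terms coincide, so the corollary's formula evaluated literally at $n=2$ gives $\tfrac{4}{3}+2^{5}$ rather than the true ratio $\tfrac{4}{3}=512/384$, meaning the stated range $n\ge 2$ should really be $n\ge 3$ (the same issue affects Corollary~\ref{cor:AverageNumbSimplified}) — you were right to treat $n=2$ separately.
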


\begin{corollary}\label{cor:NearMMFAsympt}
	$
		|\near{\MF{2n}}| = \big(\frac{1}{18} 2^{2n} +  \frac{367}{63}\big)|\MF{2n}| + o(|\MF{2n}|).
	$
\end{corollary}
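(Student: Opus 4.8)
The plan is to derive Corollary~\ref{cor:NearMMFAsympt} directly from Theorem~\ref{th:NearMMFNumb} by an asymptotic expansion of each summand, exactly paralleling the argument already used for Corollary~\ref{cor:NearAverageAsympt}. Starting from
$$
  \frac{|\near{\MF{2n}}|}{|\MF{2n}|} = \frac{4}{3} \sav{n}{2} + 2^8 \sav{n}{3} + \sum_{k = 4}^{n - 1} 2^{(k+1)^2 - 2^k} \sav{n}{k} + 2^{(n+1)^2 - 2^n},
$$
I would treat the four groups of terms separately. The tail terms are the easy part: $2^{(n+1)^2 - 2^n} = o(1)$ since $2^n$ eventually dominates $(n+1)^2$, and the middle sum $\sum_{k=4}^{n-1} 2^{(k+1)^2 - 2^k}\sav{n}{k} = o(1)$ is precisely equation~(\ref{eq:partsFrom4ton}), established in the proof of Corollary~\ref{cor:NearAverageAsympt} via Lemma~\ref{lemma:averageSumLim} (bounding $2^{(k+1)^2-2^k} \le 2^9$ for $k \ge 4$ and using that $\sum_{k=4}^{n-1}\sav{n}{k} \to 0$).

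Next I would evaluate the two leading contributions using the closed forms~(\ref{eq:rhoDefinition2}) and~(\ref{eq:rhoDefinition3}). For the $k=2$ term, $\frac{4}{3}\sav{n}{2} = \frac{4}{3}\big(\frac{2^{2n-3}}{3} + \frac{1}{12} + \frac{1}{2^{n+2}-12}\big) = \frac{1}{18}2^{2n} + \frac{1}{9} + o(1)$, since $\frac{4}{3}\cdot\frac{1}{2^{n+2}-12} \to 0$. For the $k=3$ term, from~(\ref{eq:rhoDefinition3}) the ratio $\frac{2^n(2^n-1)(2^n-2)(2^n-4)}{(2^n-3)(2^n-5)(2^n-6)(2^n-7)} \to 1$ as $n \to \infty$, so $2^8 \sav{n}{3} \to 2^8 \cdot \frac{5}{224} = \frac{1280}{224} = \frac{40}{7}$. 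Collecting constants: $\frac{1}{9} + \frac{40}{7} = \frac{7 + 360}{63} = \frac{367}{63}$, which gives the claimed $\big(\frac{1}{18}2^{2n} + \frac{367}{63}\big)|\MF{2n}| + o(|\MF{2n}|)$. It is worth noting this matches the sanity check against Corollary~\ref{cor:NearAverageAsympt}: there the $k=2$ coefficient was $2^5\sav{n}{2}$ contributing $\frac{8}{3}$, here it is $\frac{4}{3}\sav{n}{2}$ contributing $\frac{1}{9}$, and indeed $\frac{1}{9} + \frac{40}{7} = \frac{367}{63}$ while $\frac{8}{3} + \frac{40}{7} = \frac{176}{21} = \frac{528}{63}$, and the difference $\frac{528-367}{63}$ equals exactly the drop $\frac{8}{3} - \frac{1}{9}$ as it must.

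There is essentially no hard obstacle here; the only point requiring a little care is confirming the error terms are genuinely $o(1)$ rather than merely bounded — but this is already handled by Lemma~\ref{lemma:averageSumLim} for the $k \ge 4$ portion and is elementary for the finitely many remaining rational-function corrections (each of the form $O(2^{-n})$). A cleaner route still, which I would likely adopt to keep the proof to one line, is simply to observe that Theorem~\ref{th:NearMMFNumb} and Corollary~\ref{cor:NearAverageAsympt} together force the result: comparing $\frac{|\near{\MF{2n}}|}{|\MF{2n}|}$ with $\nearav{\MF{2n}}$, the two differ only in the $k=2$ summand (coefficient $\frac{4}{3}\sav{n}{2}$ versus $2^5\sav{n}{2}$), so
$$
  \frac{|\near{\MF{2n}}|}{|\MF{2n}|} = \nearav{\MF{2n}} - \Big(2^5 - \tfrac{4}{3}\Big)\sav{n}{2} - \lbound{2n} + \text{(the } k \ge 3 \text{ terms, unchanged)},
$$
and since $\big(2^5 - \frac{4}{3}\big)\sav{n}{2} = \frac{92}{3}\cdot\big(\frac{2^{2n-3}}{3} + \frac{1}{12}\big) + o(1)$ while $\lbound{2n} = 2^{2n+1} - 2^n$, substituting the asymptotic form of $\nearav{\MF{2n}}$ from Corollary~\ref{cor:NearAverageAsympt} and simplifying yields $\frac{1}{18}2^{2n} + \frac{367}{63} + o(1)$ directly.
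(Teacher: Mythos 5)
Your proposal is correct and follows essentially the same route as the paper: the paper likewise takes the explicit formula (in the form of Corollary~\ref{cor:NearSimplified}), disposes of the $k\ge 4$ tail and the $2^{(n+1)^2-2^n}$ term via~(\ref{eq:partsFrom4ton}), and evaluates the $k=2$ and $k=3$ contributions from~(\ref{eq:rhoDefinition2}) and~(\ref{eq:rhoDefinition3}) to get $\frac{1}{18}2^{2n}+\frac{1}{9}+\frac{40}{7}+o(1)$. Your arithmetic, including the cross-check against Corollary~\ref{cor:NearAverageAsympt}, is sound; the only cosmetic issue is that in your alternative one-line route the phrase ``$+$ (the $k\ge 3$ terms, unchanged)'' should be a parenthetical remark rather than an added summand, since those terms are already contained in $\nearav{\MF{2n}}$.
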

\begin{proof}
    Since~(\ref{eq:partsFrom4ton}) holds, Corollary~\ref{cor:NearSimplified} and (\ref{eq:rhoDefinition3}) give us that 
    $$
        \frac{|\near{\MF{2n}}|}{|\MF{2n}|} = \frac{2^{2n}}{18} + \frac{1}{9} + 2^8 \sav{n}{3} + o(1) = \frac{2^{2n}}{18} + \frac{1}{9} + \frac{40}{7} + o(1)
    $$
    that is exactly what we need.
\end{proof}
\begin{remark}
    Note that $|\near{\MF{2n}}| > \big(\frac{1}{18} 2^{2n} + \frac{367}{63}\big)|\MF{2n}|$ for $n \geq 2$. Also, $(\frac{1}{18}2^{2n} + \frac{1}{9} + \frac{1}{3 \cdot 2^{n} - 9}) |\MF{2n}|$ bent functions can be generated using Theorem~\ref{theorem:criterionMMFdim2}.
\end{remark}
\begin{remark}\label{remark:MFSPPower}
    Since $|\MFSP{2n}| = |\near{\MF{2n}}| + |\MF{2n}|$, Theorem~\ref{th:NearMMFNumb} and Corollaries~\ref{cor:NearSimplified}, \ref{cor:NearMMFAsympt} give us expressions for $|\MFSP{2n}|$ as well. For instance, $|\MFSP{2n}| = \big(\frac{1}{18} 2^{2n} +  \frac{430}{63}\big)|\MF{2n}| + o(|\MF{2n}|)$.
\end{remark}

Table~\ref{table:NMFandMFSPPower} provides $|\near{\MF{2n}}|$ and $|\MFSP{2n}|$ for small $n$. Interestingly, $\MFSP{2} = \bentset{2}$, $\MFSP{4} = \bentset{4}$ and almost half of bent functions from $\bentset{6}$ belong to $\MFSP{6}$.
\begin{table}[!t]
\renewcommand{\arraystretch}{1.3}
\centering
\begin{tabular}{ccccc}
\hline
$2n$ & $|\MF{2n}|$ & $|\near{\MF{2n}}|$ & $|\MFSP{2n}|$ & $|\bentset{2n}|$\\
\hline
2 & 8 & 0 & 8 & 8\\
4 & 384 & 512 & 896 & 896\\
6 & $\approx 2^{23.299}$ & $\approx 2^{31.320}$ & $\approx 2^{31.326}$ & $\approx 2^{32.337}$\\
8 & $\approx 2^{60.250}$ & $\approx 2^{69.338}$ & $\approx 2^{69.341}$ & $\approx 2^{106.291}$\\
\hline
\end{tabular}
\caption{The cardinality of $|\near{\MF{2n}}|$ and $|\MFSP{2n}|$}
\label{table:NMFandMFSPPower}
\end{table}

%

\section{An approach to describe $\MFC{2n}$}\label{sec:mfcdescr}

To estimate the cardinalities of $\MFC{2n}$ and $\near{\MFC{2n}}$, we need a convenient representation of $\MFC{2n}$. Let us start with the property detecting functions from it, see, for instance,~\cite[Lemma 33]{CanteautEtAl2006}. It is also given in~\cite{Dillon1974} in terms of derivatives.
\begin{lemma}[see~\cite{CanteautEtAl2006}]\label{lemma:mfcdescription}
	A function $f: \z{2n} \to \z{}$ belongs to $\MFC{2n}$ $\iff$ there exists some $U \in \snk{2n}{n}$ such that $f$ is affine on each coset of $U$.
\end{lemma}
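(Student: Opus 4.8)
The plan is to prove both directions of the equivalence directly from the definition of $\MF{2n}$ and EA-equivalence, using the observation recorded in Section~\ref{sec:definitions} that every $f_{\pi,\varphi} \in \MF{2n}$ is affine on each coset of the fixed subspace $\xspace{n} = \z{n} \times \{0\}$, together with the fact that it suffices to consider linear transformations of the coordinates.

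For the ($\Rightarrow$) direction, suppose $f \in \MFC{2n}$. By definition there exist $g = f_{\pi,\varphi} \in \MF{2n}$, an invertible linear $A: \z{2n} \to \z{2n}$, $a \in \z{2n}$ and an affine $h: \z{2n} \to \z{}$ such that $f = g \circ (A \oplus a) \oplus h$; as noted in the preliminaries we may in fact take $a = 0$ and absorb the shift into $h$, so $f(z) = g(zA) \oplus h(z)$. Since $g$ is affine on every coset of $\xspace{n}$, it is affine on every coset of any subspace contained in $\xspace{n}$; applying $A^{-1}$, the function $z \mapsto g(zA)$ is affine on every coset of $U := \xspace{n} A^{-1} \in \snk{2n}{n}$. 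Adding the affine function $h$ preserves affinity on each coset, so $f$ is affine on each coset of $U$, which is what we want.

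For the ($\Leftarrow$) direction, suppose $U \in \snk{2n}{n}$ is such that $f$ is affine on each coset of $U$. First I would pick an invertible linear map $A$ with $\xspace{n} A^{-1} = U$ — equivalently $\xspace{n} = U A$ — which exists because any two $n$-dimensional linear subspaces of $\z{2n}$ are related by an invertible linear transformation. Then $g(z) := f(zA)$ is affine on each coset of $\xspace{n}$, i.e., writing $z = (x,y)$ with $x, y \in \z{n}$, for each fixed $y$ the map $x \mapsto g(x,y)$ is affine; hence $g(x,y) = \iprod{x}{\psi(y)} \oplus \varphi(y)$ for some $\psi: \z{n} \to \z{n}$ and $\varphi: \z{n} \to \z{}$. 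The remaining point is that $g$ (hence $f$) is bent: this forces $\psi$ to be a permutation, by the standard fact that $f_{\psi,\varphi}$ is bent iff $\psi \in \fper{n}$ (the Walsh coefficients of $f_{\psi,\varphi}$ at $(u,v)$ have absolute value $2^n |\{y : \psi(y) = u\}|^{1/2}$-type behaviour, vanishing unless $\psi$ hits $u$ exactly once). Thus $g = f_{\psi,\varphi} \in \MF{2n}$ and $f = g \circ A^{-1} \in \MFC{2n}$.

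The only mildly delicate step is the last one — arguing that bentness of a function which is affine on each coset of $\xspace{n}$ forces the associated $\psi$ to be a permutation — but this is entirely standard (it is essentially the original characterization of the Maiorana--McFarland class, cf.~\cite{McFarland1973, Dillon1974}) and requires only a one-line Walsh-transform computation; everything else is bookkeeping about linear transformations taking one $n$-dimensional subspace to another and the fact that adding affine functions and composing with affine coordinate changes both preserve the property "affine on each coset of a subspace". Since the statement is attributed to~\cite{CanteautEtAl2006, Dillon1974}, the author will presumably give only a brief argument along these lines.
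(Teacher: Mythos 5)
Your proof is correct and follows the standard route; the paper itself gives no proof of this lemma (it only cites~\cite{CanteautEtAl2006}), but your argument is precisely the one the paper deploys later inside the proof of Lemma~\ref{lemma:mfdescrequivalence}: an invertible linear change of coordinates carrying $U$ to $\xspace{n}$, the observation that a function affine on every coset of $\xspace{n}$ has the form $(x,y)\mapsto\iprod{x}{\psi(y)}\oplus\varphi(y)$, and the standard Walsh-transform fact that bentness forces $\psi$ to be a permutation. Two small remarks: the lemma implicitly assumes $f$ is bent --- without that hypothesis the ``$\Leftarrow$'' direction is false (e.g.\ for $f\equiv 0$), and your proof correctly invokes bentness at exactly that point --- and with your normalization $\xspace{n}A^{-1}=U$ you should set $g=f\circ A^{-1}$ (equivalently, choose $A$ with $\xspace{n}A=U$), since in $g(c\oplus w)=f(cA\oplus wA)$ one needs $wA$ to range over $U$ as $w$ ranges over $\xspace{n}$.
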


We introduce the following denotations. 
\begin{itemize}
\item Let $U_1, \ldots, U_{\snkp{2n}{n}}$ be all elements of $\snk{2n}{n}$, $U_1 = \xspace{n}$. 
\item Let $\MF{U_i}$ consist of all $f \in \bentset{2n}$ which are affine on each coset of $U_i$, $1 \leq i \leq \snkp{2n}{n}$.
\end{itemize}
According to Lemma~\ref{lemma:mfcdescription}, 
\begin{equation}\label{eq:mfcasunion}
	\MFC{2n} = \bigcup_{i = 1}^{\snkp{2n}{n}} \MF{U_i}.
\end{equation}
Moreover, these sets are connected through affine equivalence.
\begin{lemma}\label{lemma:mfdescrequivalence}
    Let linear $A: \z{2n} \to \z{2n}$ be invertible and $A^{-1}(U_1) = U_i$ for some $i$, $1 \leq i \leq \snkp{2n}{n}$. Then 
    $$
        \MF{U_i} = \MF{2n} \circ A = \{ f \circ A : f \in \MF{2n}\} 
    $$
    and $\near{f} \circ A = \near{f \circ A}$.
    In particular, $|\MF{U_i}| = |\MF{2n}|$ for any $i$ and $|\near{f}| = |\near{f \circ A}|$.
\end{lemma}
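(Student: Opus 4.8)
The plan is to verify the two claimed identities directly from the definitions, using Lemma~\ref{lemma:mfcdescription} (or rather the characterization behind it) together with the fact that both $\MF{U_i}$ and the near-sets are defined purely in terms of affinity on cosets of $n$-dimensional linear subspaces, a notion that transforms covariantly under linear coordinate changes.

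First I would establish $\MF{U_i} = \MF{2n}\circ A$. Recall $\MF{2n} = \MF{U_1}$ where $U_1 = \xspace{n}$: indeed $f_{\pi,\varphi}$ is affine on each coset of $\xspace{n}$ by the remark in Section~\ref{sec:definitions}, and conversely any $f \in \bentset{2n}$ affine on each coset of $\xspace{n}$ has the form $f(x,y) = \iprod{x}{\pi(y)} \oplus \varphi(y)$ for some $\pi$, which is a permutation precisely because $f$ is bent (this is the classical characterization, cf. Lemma~\ref{lemma:mfcdescription} applied with $U = \xspace{n}$). Now take $f \in \bentset{2n}$. Since $A$ is linear and invertible, $\bentset{2n}$ is closed under $g \mapsto g\circ A$, and $g = f\circ A$ is affine on a coset $c \oplus L$ of a subspace $L$ if and only if $f = g \circ A^{-1}$ is affine on $A(c\oplus L) = A(c) \oplus A(L)$, a coset of $A(L)$. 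Applying this with $L = U_i$ and using $A(U_i) = U_1$: $f\circ A$ is affine on every coset of $U_i$ $\iff$ $f$ is affine on every coset of $U_1$ $\iff$ $f \in \MF{2n}$. Hence $g \in \MF{U_i} \iff g = f\circ A$ for some $f \in \MF{2n}$, which is the desired equality. In particular $|\MF{U_i}| = |\MF{2n}|$ because $g \mapsto g\circ A$ is a bijection on $\bentset{2n}$.

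Next I would prove $\near{f}\circ A = \near{f\circ A}$ for $f \in \MFC{2n}$. By definition $\near{f} = \{ f \oplus \charfunc{U} : U \in \asnk{2n}{n},\ f|_U \text{ affine}\}$, and one checks $(f\oplus\charfunc{U})\circ A = (f\circ A) \oplus \charfunc{A^{-1}(U)}$ since $\charfunc{U}\circ A = \charfunc{A^{-1}(U)}$; moreover $A^{-1}$ is a bijection $\asnk{2n}{n} \to \asnk{2n}{n}$, and $f|_U$ is affine $\iff$ $(f\circ A)|_{A^{-1}(U)}$ is affine by the same covariance observation as above. Therefore $g \mapsto g\circ A$ maps $\near{f}$ bijectively onto $\near{f\circ A}$, giving both the set equality and $|\near{f}| = |\near{f\circ A}|$. (One may also invoke Proposition~\ref{th:BasicCriterion} to phrase the membership condition, but the elementary argument suffices.)

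I do not anticipate a genuine obstacle here; the lemma is essentially a bookkeeping statement about how affinity-on-cosets behaves under linear substitution. The only point requiring a little care is the characterization $\MF{2n} = \MF{U_1}$ — specifically, that a bent function affine on every coset of $\xspace{n}$ automatically has an \emph{invertible} $\pi$ — but this is standard and follows from the bent condition (equivalently it is the $U = \xspace{n}$ instance of Lemma~\ref{lemma:mfcdescription} combined with the definition of $\MF{2n}$). Everything else is a direct unwinding of definitions and the observation that $g \mapsto g \circ A$ is an involution-like bijection commuting with the operations $\oplus \charfunc{U}$ in the stated way.
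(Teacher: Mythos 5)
Your proposal is correct and follows essentially the same route as the paper: both arguments reduce to the covariance of affinity-on-cosets under the linear substitution $g \mapsto g \circ A$, together with the classical fact that a bent function affine on every coset of $\xspace{n}$ must be of the form $\iprod{x}{\pi(y)} \oplus \varphi(y)$ with $\pi$ invertible. The only cosmetic difference is that for $\near{f} \circ A = \near{f \circ A}$ the paper phrases membership via bentness of $f \oplus \charfunc{U}$ while you phrase it via affinity of $f|_U$; these are equivalent by Proposition~\ref{th:BasicCriterion}.
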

\begin{proof}
	It is clear that $\MF{2n} \circ A \subseteq \MF{U_i}$. Indeed, any function from $\MF{2n} \circ A$ is affine on each coset of $U_i$ since any $f(A(x))$ is affine on each coset of $A^{-1}(U_1) = U_i$, where $f \in \MF{2n}$. Similarly, any bent function $f \in \MF{U_i} \circ A^{-1}$ is affine on each coset of $U_1$, i.e. $f$ can be represented in the following form:
	$$
		f(x, y) = \iprod{x}{\sigma(y)} \oplus \psi(y), 
	$$
	where $x, y \in \z{n}$, $\sigma: \z{n} \to \z{n}$ and $\psi: \z{n} \to \z{}$. But it is well known that $f$ cannot be a bent function if $\sigma$ is not invertible, see, for instance,~\cite{CarletBook}. Hence, $\MF{U_i} = \MF{2n} \circ A$. 

    Let $U \in \asnk{2n}{n}$ and $f \in \MF{2n}$. It is clear that 
    $f \oplus \charfunc{U} \in \bentset{2n}$ $\iff$ the function
    \begin{equation}\label{eq:global:nearestAEq}
        f(A(x)) \oplus \charfunc{U}(A(x)) = f(A(x)) \oplus \charfunc{A^{-1}(U)}(x)
    \end{equation}
    is a bent function. This implies $\near{f \circ A} = \near{f} \circ A$. 
\end{proof}
Thus, the following upper bound for $|\MFC{2n}|$ is correct:
\begin{equation}\label{eq:mfcupperbound}
	|\MFC{2n}| \leq \snkp{2n}{n}|\MF{2n}|.
\end{equation}
Moreover, we will prove that it is asymptotically tight (Section~\ref{sec:BoundsMFC}). This bound is much more precise than the trivial 
$$
		|\MFC{2n}| \leq (2^{2n} - 2^0) \cdot \ldots \cdot (2^{2n} - 2^{2n - 1})|\MF{2n}|.
$$

We note that the functions belonging to only one of $\MF{U_1}, \ldots, \MF{U_{\snkp{2n}{n}}}$ will be mainly taken into account in Sections~\ref{sec:BoundsMFC} and \ref{sec:nearestAverageMFC}. See, for instance,  Propositions~\ref{state:MFUnique},  \ref{state:mcboundasympt} and Lemma~\ref{lemma:twoCosetSeries} that will allow us to omit other ones. Such functions are said to have \textit{the unique $\mathcal{M}$-subspace}. The set of them for some $f$ in $2n$ variables is defined as 
$$
	\mathcal{M}(f) = \{ U_i : f \in \MF{U_i}, 1 \leq i \leq \snkp{2n}{n} \}. 
$$
They were investigated, for instance, in~\cite{PolujanThesis2021, PasalicEtAl2024}. 

We will use the following criterion for detecting if $f \in \MF{2n} \cap \MF{U_i}$ (equivalently, $U_i \in \mathcal{M}(f)$), which is the direct consequence of Theorem~\ref{th:criterionMMFnew}. 
\begin{corollary}\label{cor:fullCosetSerie}
	Let $f_{\pi, \varphi} \in \MF{2n}$ and $U = \sconstr{n}{L}{R}{H}{\infomaps}$, where $L \in \snk{n}{k}$, $R \in \snk{n}{n - k}$ and $H: L \to \z{k}$ is linear. 
	Then $f_{\pi, \varphi} \in \MF{U}$ $\iff$ the following conditions are satisfied:
	\begin{enumerate}
		\item $\pi|_{a \oplus L}$ is affine and $\linsp{\pi(a \oplus L)} = R^{\bot}$ for any $a \in \z{n}$,
	    \item $x \in a \oplus L \mapsto \iprod{H(a \oplus x)}{\perk{\pi}{I}(x)} \oplus \varphi(x)$ is affine for any $a \in \z{n}$, where $I = \infomaps(\linsp{\pi(L)})$.
	\end{enumerate}
\end{corollary}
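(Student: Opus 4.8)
The plan is to reduce the statement to Theorem~\ref{th:criterionMMFnew} applied to every coset of $U$ at once. Note first that, since $L$ and $H$ are linear, $U = \sconstr{n}{L}{R}{H}{\infomaps}$ is a linear $n$-dimensional subspace of $\z{2n}$ by Proposition~\ref{state:subspaceRepresentation}, so $\MF{U}$ is well defined; and because $f_{\pi,\varphi}$ is already bent and each coset $W$ of $U$ lies in $\asnk{2n}{n}$, Proposition~\ref{th:BasicCriterion} gives $f_{\pi,\varphi} \in \MF{U}$ $\iff$ $f_{\pi,\varphi} \oplus \charfunc{W} \in \bentset{2n}$ for every coset $W$ of $U$, i.e. every such $W$, written in the unique form of Proposition~\ref{state:subspaceRepresentation}, must satisfy the conditions of Theorem~\ref{th:criterionMMFnew}. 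So the first step is to enumerate the cosets of $U$ in the $\sconstr{n}{\cdot}{\cdot}{\cdot}{\infomaps}$-representation. Fixing a linear extension $A : \z{n} \to \z{k}$ of $H$, the computation~(\ref{eq:affineSubspConstr}) shows that the $2^n$ cosets of $U$ are precisely the subspaces $W_{a,c} = \sconstr{n}{a \oplus L}{R}{A|_{a \oplus L} \oplus c}{\infomaps}$, where $a \oplus L$ ranges over the cosets of $L$ in $\z{n}$ and $c$ over $\z{k}$; since $R \in \snk{n}{n-k}$ is fixed, this is already the Proposition~\ref{state:subspaceRepresentation}-representation of $W_{a,c}$.

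Then I would apply Theorem~\ref{th:criterionMMFnew} to $W_{a,c}$: $f_{\pi,\varphi}$ is affine on $W_{a,c}$ iff $a \oplus L \in \subsp{k}{\pi}$, $R = \linsp{\pi(a \oplus L)}^{\bot}$, and $x \in a \oplus L \mapsto \iprod{A(x) \oplus c}{\perk{\pi}{I_a}(x)} \oplus \varphi(x)$ is affine, where $I_a = \infomaps(\linsp{\pi(a \oplus L)})$. Two simplifications then finish the argument. First, requiring $R = \linsp{\pi(a \oplus L)}^{\bot}$ for every $a$ forces $\linsp{\pi(a \oplus L)} = R^{\bot}$ for all $a$, so $I_a = \infomaps(R^{\bot}) = I$ is independent of $a$ and agrees with the $I$ of the statement (taken at $a \in L$). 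Second, for fixed $a$ the affinity conditions over $c \in \z{k}$ differ from the $c = 0$ instance only by the term $\iprod{c}{\perk{\pi}{I}(x)}$; because $I$ is an information set of $\linsp{\pi(a \oplus L)} = R^{\bot}$, the restriction $\perk{\pi}{I}|_{a \oplus L}$ is a bijection onto $\z{k}$, so "affine for all $c$" is equivalent to "$\perk{\pi}{I}|_{a \oplus L}$ affine", which — using that on $\pi(a \oplus L)$ the coordinates outside $I$ are affine functions of the $I$-coordinates — is equivalent to "$\pi|_{a \oplus L}$ affine". Writing $A(x) = H(a \oplus x) \oplus A(a)$ for $x \in a \oplus L$ and absorbing the now-affine term $\iprod{A(a)}{\perk{\pi}{I}(x)}$, the $c = 0$ instance becomes exactly condition~2 of the statement. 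Since "$\pi|_{a \oplus L}$ affine" already makes $\pi(a \oplus L)$ a $k$-dimensional affine subspace, i.e. $a\oplus L \in \subsp{k}{\pi}$, the conjunction of all surviving requirements over all $a \in \z{n}$ is precisely conditions~1 and~2.

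I expect the main obstacle to be the bookkeeping around information sets: one must verify that $I_a$ is genuinely coset-independent once condition~1 holds, that the equivalence between "$\pi$ affine on $a \oplus L$" and "$\perk{\pi}{I}$ affine on $a \oplus L$" is legitimate (this rests on the standard property of information sets, that on $\pi(a \oplus L)$ the $\overline{I}$-coordinates are affine in the $I$-coordinates, so affinity of the $I$-part forces affinity of the whole, the converse being trivial), and that the pairs $(a \oplus L, c)$ index each of the $2^n$ cosets of $U$ exactly once. Everything else is a direct unwinding of Theorem~\ref{th:criterionMMFnew} and Proposition~\ref{state:subspaceRepresentation}, and the two simplifications above are short once these points are in place.
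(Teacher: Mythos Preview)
Your proposal is correct and follows essentially the same route as the paper: enumerate the cosets of $U$ by pairs $(a\oplus L,\,c)$, apply Theorem~\ref{th:criterionMMFnew} to each, observe that varying $c$ forces $\perk{\pi}{I}$ (hence $\pi$, via the information-set property) to be affine on $a\oplus L$, and then reduce to the $c=0$ instance. The only cosmetic difference is that the paper parametrises the cosets directly as $\{(\coords{H}{I}(a\oplus y)\oplus\coords{b}{I}\oplus z,\,y)\}$ and works with $\xi_b(x)=\iprod{H(x\oplus a)\oplus b}{\perk{\pi}{I}(x)}\oplus\varphi(x)$, avoiding the linear extension $A$ (and the extra $\iprod{A(a)}{\perk{\pi}{I}(x)}$ term you then have to reabsorb); your detour through $A$ is harmless but unnecessary. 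One phrasing to tighten: ``affine for all $c$'' is equivalent to ``$\perk{\pi}{I}|_{a\oplus L}$ affine \emph{and} the $c=0$ instance affine'', not just the first clause---you handle both pieces, but the sentence as written overstates the equivalence.
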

\begin{proof}
	Let $a \in \z{n}$ and $b \in \z{k}$. It is clear that any coset of $U$ is represented as $\{(\coords{H}{I}(a \oplus y) \oplus \coords{b}{I} \oplus z, y) : y \in a \oplus L, z \in R\}$.
	Also, let $\xi_b: x \in a \oplus L \mapsto \iprod{H(x \oplus a) \oplus b}{\perk{\pi}{I}(x)} \oplus \varphi(x)$. According to Theorem~\ref{th:criterionMMFnew}, $f_{\pi, \varphi} \in \MF{U}$ $\iff$ $\linsp{\pi(a \oplus L)} = R^{\bot}$, and $\xi_b$ is affine for each $a$ and $b$.  
	
	Let $\xi_{b}$ be affine for each $b$. Then $\xi_{b} \oplus \xi_{0}$ is affine as well, 
	$$
		\xi_{b}(x) \oplus \xi_{0}(x) = \iprod{b}{\perk{\pi}{I}(x)}, \ x \in a \oplus L.
	$$
	Hence, $\perk{\pi}{I}$ is affine on $a \oplus L$.  But $I$ is an information set of $\linsp{\pi(a \oplus L)} = \linsp{\pi(L)} = R^{\bot}$ and, therefore, of $\pi(a \oplus L) \in \asnk{n}{k}$. This means that $\perk{y}{\overline{I}} = B(\perk{y}{I})$ for all $y \in \pi(a \oplus L)$, where $B: \z{k} \to \z{n - k}$ is some affine function. Consequently, $\pi|_{a \oplus L}$ must be affine. 
    
	At the same time, the affinity of $\pi|_{a \oplus L}$ implies that $\xi_{b}(x)$ is affine for each $b$ $\iff$ $\xi_{0}$ is affine. Thus, $f_{\pi, \varphi} \in \MF{U}$ $\iff$ $\linsp{\pi(a \oplus L)} = R^{\bot}$, and $\pi|_{a \oplus L}$, $\xi_{0}$ are affine. 
\end{proof}

\section{Bounds for $|\MFC{2n}|$}\label{sec:BoundsMFC}

In this section we propose bounds for $|\MFC{2n}|$ and prove that they are asymptotically tight. 
We start with the main advantage of Corollary~\ref{cor:fullCosetSerie}, which is the possibility to construct all functions from $\MF{2n} \cap \MF{U}$ for any $U \in \asnk{2n}{n}$. 

\begin{lemma}\label{lemma:MFUintersection}
	Let $U \in \snk{2n}{n}$ and $\dim (U \cap \xspace{n}) = k < n$. Then
	$$
		|\MF{2n} \cap \MF{U}| = 2^k! \, 2^{(2n - 2k + 1)2^k} \prod_{i = 0}^{n - k - 1}(2^{n - k} - 2^i)^{2^k} .
	$$
\end{lemma}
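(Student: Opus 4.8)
The plan is to apply Corollary~\ref{cor:fullCosetSerie} directly. Fix $U \in \snk{2n}{n}$ with $\dim(U \cap \xspace{n}) = k < n$. By Proposition~\ref{state:subspaceRepresentation} we can write $U = \sconstr{n}{L}{R}{H}{\infomaps}$ with $L \in \snk{n}{k}$, $R = \linsp{U} \cap \xspace{n} \in \snk{n}{n-k}$ and linear $H: L \to \z{k}$; moreover $R$ is determined by $U$. Corollary~\ref{cor:fullCosetSerie} then says $f_{\pi,\varphi} \in \MF{2n} \cap \MF{U}$ if and only if (1) $\pi$ is affine on every coset of $L$ and $\linsp{\pi(a\oplus L)} = R^{\bot}$ for all $a$, and (2) for every coset $a \oplus L$ the function $x \mapsto \iprod{H(a\oplus x)}{\perk{\pi}{I}(x)} \oplus \varphi(x)$ is affine, where $I = \infomaps(\linsp{\pi(L)})$. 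So the count factors into: (i) the number of invertible $\pi$ satisfying (1), times (ii) for each such $\pi$, the number of $\varphi: \z{n} \to \z{}$ making (2) hold. The key observation is that (ii) will be the same for every valid $\pi$, so the total is a product.

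First I would count the permutations $\pi$ satisfying condition (1). Partition $\z{n}$ into its $2^{n-k}$ cosets of $L$; since $\pi$ must map each coset $a \oplus L$ onto an affine subspace with linear part $R^{\bot}$, i.e. onto a coset of $R^{\bot}$, and $\pi$ is a bijection, $\pi$ must permute the $2^{n-k}$ cosets of $L$ among the $2^{n-k}$ cosets of $R^{\bot}$ (note $\dim R^{\bot} = k = \dim L$, and the cosets of $R^{\bot}$ partition $\z{n}$ as well). I count this in stages: choose the induced bijection on the $2^{n-k}$ cosets ($(2^{n-k})! = 2^{n-k}!$ ways — wait, this should be phrased carefully), then on each coset $a \oplus L$ choose an affine bijection onto the target coset of $R^{\bot}$. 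An affine bijection between two $k$-dimensional affine spaces is determined by an invertible linear map $\z{}^k \to \z{}^k$ (there are $\prod_{i=0}^{k-1}(2^k - 2^i)$ of these) together with a choice of image of the base point — but the base point's image is forced once we fix the target coset and a translation, giving an extra factor $2^k$ per coset? I would sort out the exact bookkeeping so that, over all $2^{n-k}$ cosets, the affine-bijection factor comes out to $\prod_{i=0}^{n-k-1}(2^{n-k} - 2^i) \cdot (2^k)!^{\,?}$ — the stated answer tells me the permutation part contributes $\prod_{i=0}^{n-k-1}(2^{n-k}-2^i)^{2^k}$, which is exactly $|\mathrm{GL}_{n-k}(\z{})|$ raised to the $2^k$-th power, so I expect the natural parametrization to be: $\pi$ restricted to $L$-cosets is determined by $2^k$ independent invertible linear maps on the quotient space $\z{n}/L \cong \z{n-k}$, one for each point of $L$ — this is the content of $\pi|_{a\oplus L}$ being affine with a fixed linear part direction. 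Reconciling this picture with the Corollary is the step requiring the most care.

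Next, with $\pi$ fixed and satisfying (1), I count the $\varphi$ satisfying (2). For each coset $a \oplus L$, condition (2) is the affinity of a function $\z{}^k \cong L \to \z{}$; since $\perk{\pi}{I}(x)$ is affine on $a \oplus L$ and $H$ is linear, the map $x \mapsto \iprod{H(a\oplus x)}{\perk{\pi}{I}(x)}$ is a fixed quadratic-or-lower function on the coset, and $\varphi|_{a\oplus L}$ must cancel its non-affine part. The number of $\varphi|_{a\oplus L}$ achieving this is either $0$ or $2^{k+1}$ (the dimension of the space of affine functions on a $k$-dimensional affine space), and I must check it is always $2^{k+1}$, i.e. that the obstruction function is already affine or that we have freedom to absorb it — actually $\varphi|_{a\oplus L}$ can be \emph{any} function minus a forced correction, so the count is exactly $2^{k+1}$ whenever condition (1) holds, independent of the coset and of $\pi$. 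Over all $2^{n-k}$ cosets this gives $2^{(k+1)2^{n-k}}$ choices for $\varphi$. Here I should double-check against the target exponent $(2n-2k+1)2^k$: something is off in my cosetization, so I would recount with the roles of $L$ (dimension $k$) and its $2^{n-k}$ cosets kept straight — the $2^k!$ out front strongly suggests that it is $L$ itself (of size $2^k$) on which $\pi$ and $\varphi$ are chosen coset-by-coset, i.e. the relevant partition of $\z{n}$ has $2^{k}$... no: $\dim(U\cap\xspace{n})=k$ means $R$ has dimension $n-k$, $L$ has dimension $k$, there are $2^{n-k}$ cosets of $L$. The factor $2^k!$ is then the number of ways to biject the point set of one $k$-dimensional fibre... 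Rather than guess, I would carefully set up: $\pi$ sends the $2^{n-k}$ $L$-cosets bijectively to the $2^{n-k}$ $R^{\bot}$-cosets (giving $2^{n-k}!$ — but the stated formula has no such factorial in $n-k$, it has $2^k!$, so in fact it must be that $\pi$ permutes something of size $2^k$), which forces me to re-examine whether condition (1) pins down $\linsp{\pi(L)}$ itself (not just its orthogonal) — and indeed $\linsp{\pi(L)} = R^{\bot}$ is fixed, so all cosets map to cosets of the \emph{same} space $R^{\bot}$, hence the induced quotient map $\z{n}/L \to \z{n}/R^{\bot}$ is a bijection of $2^{n-k}$-element sets; the resolution of the discrepancy with $2^k!$ is the real crux, and I would resolve it by tracking that $\pi$ must also respect that $\z{n}/L$ and $\z{n}/R^\bot$ are $\z{}$-vector spaces only up to the affine structure, combined with recounting the $\varphi$ and affine-bijection contributions until the exponents $(2n-2k+1)2^k$ and the base $(2^{n-k}-2^i)^{2^k}$ emerge; then multiply the three factors and the proof is complete.
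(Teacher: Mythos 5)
Your overall strategy is the paper's: apply Corollary~\ref{cor:fullCosetSerie} and count, coset by coset, first the admissible $\pi$ and then the admissible $\varphi$. But there is a genuine gap, and it is precisely the point you flag as ``the real crux'' and then leave unresolved: your dimension bookkeeping is inverted at the outset. By Proposition~\ref{state:subspaceRepresentation}, $\linsp{\sconstr{n}{L}{R}{H}{\infomaps}} \cap \xspace{n} = R$, so the hypothesis $\dim(U \cap \xspace{n}) = k$ forces $R \in \snk{n}{k}$ and hence $L \in \snk{n}{n-k}$ with $H : L \to \z{n-k}$ --- the opposite of what you wrote. Once this is fixed, everything falls into place with no mystery: $L$ has exactly $2^k$ cosets, each of dimension $n-k$, and $\pi$ must map them bijectively onto the $2^k$ cosets of $R^{\bot}$ (which also has dimension $n-k$). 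The induced map on cosets is an \emph{arbitrary} bijection of two $2^k$-element sets (there is no quotient-linearity constraint, contrary to your speculation), giving the factor $2^k!$; on each coset $\pi$ is an affine bijection between two $(n-k)$-dimensional affine spaces, giving $2^{n-k}\prod_{i=0}^{n-k-1}(2^{n-k}-2^i)$ choices per coset; and $\varphi$ restricted to each coset is determined up to an affine function on an $(n-k)$-dimensional space, giving $2^{n-k+1}$ choices per coset. Raising the per-coset counts to the power $2^k$ yields $2^{(n-k)2^k}\cdot 2^{(n-k+1)2^k} = 2^{(2n-2k+1)2^k}$ together with $\prod_{i=0}^{n-k-1}(2^{n-k}-2^i)^{2^k}$, which is the claimed formula.

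As written, your proposal does not prove the lemma: it derives $2^{n-k}!$ for the coset permutation and $2^{(k+1)2^{n-k}}$ for the $\varphi$-count, observes that these contradict the target, and defers the reconciliation to ``recounting until the exponents emerge.'' That reconciliation is the entire content of the computation, so it cannot be left open; the single missing observation that closes it is $R = U \cap \xspace{n}$, $\dim R = k$, $\dim L = n-k$.
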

\begin{proof}
	Let us construct all suitable $f_{\pi, \varphi} \in \MF{2n}$. 
    We can represent $U$ as $\sconstr{n}{L}{R}{H}{\infomaps}$, where $R = U \cap \xspace{n}$, $L \in \snk{n}{n - k}$ and $H: L \to \z{n - k}$ is linear, see Proposition~\ref{state:subspaceRepresentation}.
    Let $a_1 \oplus L, \ldots, a_{2^{k}} \oplus L$ and $b_1 \oplus R^{\bot}, \ldots, b_{2^{k}} \oplus R^{\bot}$ be all distinct cosets of $L$ and $R^{\bot}$, $a_1, \ldots, a_{2^k}, b_1, \ldots, b_{2^k} \in \z{n}$. 
	
	According to the conditions of Corollary~\ref{cor:fullCosetSerie}, $f_{\pi, \varphi} \in \MF{U}$ $\iff$ we choose a permutation $\pi$ and $\varphi$ in the following way.
		
    1. $\pi$ transforms $a_1 \oplus L, \ldots, a_{2^k} \oplus L$ to $b_1 \oplus R^{\bot}, \ldots, b_{2^{k}} \oplus R^{\bot}$ in any order. Since $\pi$ must be invertible, we have $2^k!$ possibilities to choose $\pi(a_1 \oplus L), \ldots, \pi(a_{2^k} \oplus L)$.
		
    2. $\pi$ is affine on each of $a_1 \oplus L, \ldots, a_{2^k} \oplus L$ and invertible. The number of such transformations is $2^{n - k}(2^{n - k} - 2^0) \cdot \ldots \cdot (2^{n- k} - 2^{n - k - 1})$ for each of $2^k$ cosets.
    Overall, we can choose $\pi$ in $2^k! \, 2^{(n - k) 2^k} \prod_{i = 0}^{n - k - 1}(2^{n - k} - 2^i)^{2^k}$ ways to satisfy the first condition of Corollary~\ref{cor:fullCosetSerie}.
		
    3. For each $1 \leq i \leq 2^k$, the second condition of Corollary~\ref{cor:fullCosetSerie} is satisfied $\iff$ the given $H$ and already chosen $\pi$ determine $\varphi|_{a_i \oplus L}$ up to an affine function, i.e. there are $2^{n - k + 1}$ ways to choose $\varphi|_{a_i \oplus L}$ and $2^{(n - k + 1)2^k}$ ways to choose $\varphi$.
    The result is the product of ways to choose $\pi$ and then $\varphi$. 
\end{proof}

\begin{proposition}\label{state:MFUnique}
	There are at least $|\MF{2n}| - \mfcbound{2n}$ bent functions $f \in \MF{2n}$ with $|\mathcal{M}(f)| = 1$, 
	where $\mfcbound{2n}$ is equal to 
	\begin{equation}\label{eq:mfcboundex}
		\sum_{k = 0}^{n - 1} {(\snkp{n}{k})}^2 2^{(n - k)^2} 2^k! \, 2^{(2n - 2k + 1)2^k} \prod_{i = 0}^{n - k - 1}(2^{n - k} - 2^i)^{2^k}.
	\end{equation}
\end{proposition}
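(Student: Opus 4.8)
The plan is to count, via inclusion-exclusion over the family $\MF{U_1}, \ldots, \MF{U_{\snkp{2n}{n}}}$, the functions in $\MF{2n}$ that lie in some $\MF{U_i}$ other than the "standard" choice, and to bound this by a single sum indexed by $k = \dim(U_i \cap \xspace{n})$. Since $\MF{2n} = \MF{U_1}$ (the case $U_1 = \xspace{n}$, whose information structure recovers exactly the Maiorana--McFarland form), a function $f \in \MF{2n}$ has $|\mathcal{M}(f)| \geq 2$ if and only if $f \in \MF{U_i}$ for some $i \neq 1$. Hence the number of $f \in \MF{2n}$ with $|\mathcal{M}(f)| \geq 2$ is at most $\sum_{i = 2}^{\snkp{2n}{n}} |\MF{2n} \cap \MF{U_i}|$, and it suffices to show this sum is bounded by $\mfcbound{2n}$ as displayed in~(\ref{eq:mfcboundex}).

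First I would group the subspaces $U_i$, $i \neq 1$, according to $k = \dim(U_i \cap \xspace{n})$, which runs over $0, 1, \ldots, n-1$ (the value $k = n$ forces $U_i = \xspace{n} = U_1$, which is excluded). For a fixed $k$, I would count how many $U \in \snk{2n}{n}$ satisfy $\dim(U \cap \xspace{n}) = k$: using the representation $U = \sconstr{n}{L}{R}{H}{\infomaps}$ from Proposition~\ref{state:subspaceRepresentation}, such a $U$ is determined by a linear $R = U \cap \xspace{n} \in \snk{n}{k}$, a linear $L \in \snk{n}{n-k}$, and a linear $H: L \to \z{n-k}$; this gives $\snkp{n}{k} \cdot \snkp{n}{n-k} \cdot 2^{(n-k)^2}$ choices, and since $\snkp{n}{n-k} = \snkp{n}{k}$ this is $(\snkp{n}{k})^2 2^{(n-k)^2}$. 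For each such $U$, Lemma~\ref{lemma:MFUintersection} gives $|\MF{2n} \cap \MF{U}| = 2^k!\, 2^{(2n - 2k + 1)2^k} \prod_{i=0}^{n-k-1}(2^{n-k} - 2^i)^{2^k}$, which depends only on $k$. Multiplying the count of subspaces by this common value and summing over $k$ from $0$ to $n-1$ yields exactly the expression~(\ref{eq:mfcboundex}) for $\mfcbound{2n}$, so the number of $f \in \MF{2n}$ with $|\mathcal{M}(f)| \geq 2$ is at most $\mfcbound{2n}$, and the number with $|\mathcal{M}(f)| = 1$ is at least $|\MF{2n}| - \mfcbound{2n}$.

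The main thing to get right is the counting of subspaces $U \in \snk{2n}{n}$ with a prescribed intersection dimension with $\xspace{n}$: one must argue that the triple $(L, R, H)$ in the linear case is genuinely free (any linear $R \in \snk{n}{k}$, any linear $L \in \snk{n}{n-k}$, any linear $H : L \to \z{n-k}$) and that distinct triples give distinct subspaces with $U \cap \xspace{n} = R$ — both of which are furnished by the uniqueness and the identity $\linsp{\sconstr{n}{L}{R}{H}{\infomaps}} \cap \xspace{n} = R$ in Proposition~\ref{state:subspaceRepresentation}. The identity $\snkp{n}{n-k} = \snkp{n}{k}$ follows from~(\ref{eq:snkpDefinition}) (duality of Gaussian binomials). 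Everything else is bookkeeping: the over-count in the inclusion-exclusion only helps us (we keep just the first-order sum), and the fact that $U_1$ itself is excluded is exactly what turns $|\mathcal{M}(f)| = 1$ into the complement of the union $\bigcup_{i \neq 1} \MF{U_i}$.
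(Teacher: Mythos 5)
Your proposal is correct and follows essentially the same route as the paper: bound the number of $f \in \MF{2n}$ with $|\mathcal{M}(f)| \geq 2$ by the union bound $\sum_{i=2}^{\snkp{2n}{n}} |\MF{2n} \cap \MF{U_i}|$, group the $U_i$ by $k = \dim(U_i \cap \xspace{n})$, count the subspaces in each group as $(\snkp{n}{k})^2 2^{(n-k)^2}$ via Proposition~\ref{state:subspaceRepresentation}, and multiply by the common value from Lemma~\ref{lemma:MFUintersection}. No further comment is needed.
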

\begin{proof}
	The set of bent functions belonging to $\MF{2n}$ with the unique $\mathcal{M}$-subspace is
	$
		\MF{2n} \setminus ( \MF{U_2} \cup \ldots \cup  \MF{U_{\snkp{2n}{n}}}),
	$
	see Section~\ref{sec:mfcdescr}. The lower bound for its cardinality is
	$
		|\MF{2n}| - \sum_{i = 2}^{\snkp{2n}{n}} |\MF{2n} \cap \MF{U_i}|.
	$
	Thus, it is sufficient to prove that 
	\begin{equation}\label{eq:intersectionMFSum}
		\mfcbound{2n} = \sum_{i = 2}^{\snkp{2n}{n}} |\MF{2n} \cap \MF{U_i}|.
	\end{equation}
	Lemma~\ref{lemma:MFUintersection} gives us $|\MF{2n} \cap \MF{U_i}|$ depending on $\dim (U_i \cap \xspace{n}) = k$. 
	Due to Proposition~\ref{state:subspaceRepresentation}, the number of $U = \sconstr{n}{L}{R}{H}{\infomaps}\in \snk{2n}{n}$ such that $\dim (U \cap \xspace{n}) = \dim R = k$ is equal to $2^{(n - k)^2} \snkp{n}{k} \snkp{n}{n - k}$. We can choose any $R \in \snk{n}{k}$, any $L \in \snk{n}{n - k}$ and any linear $H: L \to \z{n - k}$. Applying Lemma~\ref{lemma:MFUintersection}, we obtain that~(\ref{eq:mfcboundex})
	is equal to $\sum_{i = 2}^{\snkp{2n}{n}} |\MF{2n} \cap \MF{U_i}|$.
\end{proof}
This can be rewritten in the following way.
\begin{corollary}
	The expected value of $|\mathcal{M}(f)|$ for a random $f \in \MF{2n}$ is equal to the following:
	$$
		\frac{1}{|\MF{2n}|}\sum_{f \in \MF{2n}} |\mathcal{M}(f)| = 1 + \frac{\mfcbound{2n}}{|\MF{2n}|}.
	$$
	Its values for small $n$ can be found in Table~\ref{table:expectedMf}.
\end{corollary}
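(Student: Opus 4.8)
The plan is to read the identity directly off the work already done for Proposition~\ref{state:MFUnique}. Since $\mathcal{M}(f) = \{U_i : f \in \MF{U_i},\ 1 \leq i \leq \snkp{2n}{n}\}$, the quantity $|\mathcal{M}(f)|$ is simply the number of indices $i$ with $f \in \MF{U_i}$. Summing over all $f \in \MF{2n}$ and exchanging the order of the two summations gives
$$
    \sum_{f \in \MF{2n}} |\mathcal{M}(f)| = \sum_{f \in \MF{2n}} \bigl|\{ i : f \in \MF{U_i} \}\bigr| = \sum_{i = 1}^{\snkp{2n}{n}} |\MF{2n} \cap \MF{U_i}|.
$$

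Next I would split off the index $i = 1$. Because $U_1 = \xspace{n}$ and every $f_{\pi, \varphi} \in \MF{2n}$ is affine on each coset of $\xspace{n}$, we have $\MF{2n} \subseteq \MF{U_1}$, so $|\MF{2n} \cap \MF{U_1}| = |\MF{2n}|$. The remaining sum $\sum_{i = 2}^{\snkp{2n}{n}} |\MF{2n} \cap \MF{U_i}|$ is precisely the quantity shown to equal $\mfcbound{2n}$ in~(\ref{eq:intersectionMFSum}) inside the proof of Proposition~\ref{state:MFUnique} (obtained there from Lemma~\ref{lemma:MFUintersection} together with the count, via Proposition~\ref{state:subspaceRepresentation}, of subspaces $U \in \snk{2n}{n}$ grouped by $\dim(U \cap \xspace{n})$). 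Combining the two pieces yields $\sum_{f \in \MF{2n}} |\mathcal{M}(f)| = |\MF{2n}| + \mfcbound{2n}$, and dividing through by $|\MF{2n}|$ gives the claimed formula.

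There is no genuine obstacle here: the statement is a bookkeeping reformulation of Proposition~\ref{state:MFUnique}, and the only point deserving a word of justification is isolating the term $i = 1$ and invoking $\MF{2n} \subseteq \MF{\xspace{n}}$ for it, the terms $i \geq 2$ being handled by~(\ref{eq:intersectionMFSum}). The numerical values in Table~\ref{table:expectedMf} then follow by substituting the closed form~(\ref{eq:mfcboundex}) for $\mfcbound{2n}$ together with $|\MF{2n}| = 2^{2^n}\, 2^n!$.
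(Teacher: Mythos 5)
Your proposal is correct and matches the paper's own argument: both exchange the order of summation (the double-counting trick from the proof of Lemma~\ref{lemma:sumPi}), identify the $i=1$ term with $|\MF{2n}|$ via $\MF{U_1}=\MF{2n}$, and dispose of the terms $i\geq 2$ by the identity~(\ref{eq:intersectionMFSum}) established in Proposition~\ref{state:MFUnique}. No gaps.
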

\begin{table}[!t]
\renewcommand{\arraystretch}{1.3}
\centering
\begin{tabular}{cc||cc}
\hline
$2n$ & Expected $|\mathcal{M}(f)|$ & $2n$ & Expected $|\mathcal{M}(f)|$ \\
\hline
2 & 3 & 10 & $1 + 2^{-46.501079} $ \\
4 & 15 & 12 & $1 + 2^{-133.377320}$ \\
6 & 8.6 & 14 & $1 + 2^{-341.189209}$ \\
8  &  $1 + 2^{-10.349626}$ & 16 & $1 + 2^{-822.845858}$ \\
\hline
\end{tabular}
\caption{The approximated expected value of $|\mathcal{M}(f)|$, $f \in \MF{2n}$}
\label{table:expectedMf}
\end{table}
\begin{proof}
	Indeed, similarly to~(\ref{eq:sumPi}) in the proof of Lemma~\ref{lemma:sumPi},
	$
		\sum_{f \in \MF{2n}} |\mathcal{M}(f)| = |\MF{2n} \cap \MF{2n}| + \sum_{i = 2}^{\snkp{2n}{n}} |\MF{2n} \cap \MF{U_i}|
	$
	since each $\MF{2n} \cap \MF{U_i}$ is exactly the set $
	 	\{f \in \MF{2n} : U_i \in \mathcal{M}(f)\}
	$. The equality~(\ref{eq:intersectionMFSum}) completes the proof.
\end{proof}

\begin{theorem}\label{th:MFCLowerBound}
	For $\mfcbound{2n}$ defined in~(\ref{eq:mfcboundex}) the following holds:
	$$\snkp{2n}{n}|\MF{2n}| - \snkp{2n}{n} \mfcbound{2n} \leq |\MFC{2n}| \leq \snkp{2n}{n}|\MF{2n}|.$$  
\end{theorem}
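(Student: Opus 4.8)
The plan is to prove the two inequalities separately; the upper bound is essentially already in place and all the content is in the lower bound. For the upper bound, recall from~(\ref{eq:mfcasunion}) that $\MFC{2n} = \bigcup_{i=1}^{\snkp{2n}{n}} \MF{U_i}$, and from Lemma~\ref{lemma:mfdescrequivalence} that $|\MF{U_i}| = |\MF{2n}|$ for every $i$; hence $|\MFC{2n}| \le \sum_{i=1}^{\snkp{2n}{n}} |\MF{U_i}| = \snkp{2n}{n}|\MF{2n}|$, which is exactly~(\ref{eq:mfcupperbound}). For the lower bound the idea is to exhibit, inside each $\MF{U_i}$, a large family of functions lying in $\MF{U_i}$ and in no other $\MF{U_j}$; these families are then automatically pairwise disjoint and their sizes add up to the claimed bound.

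First I would fix, for every $i$, an invertible linear map $A_i : \z{2n} \to \z{2n}$ with $A_i^{-1}(U_1) = U_i$; such a map exists because the group of invertible linear transformations acts transitively on $\snk{2n}{n}$. Since $A_i$ permutes the collection $\{U_1, \dots, U_{\snkp{2n}{n}}\}$ of all $n$-dimensional linear subspaces of $\z{2n}$, a short computation shows that for any Boolean $g$ the function $g \circ A_i$ is affine on every coset of $U_j$ if and only if $g$ is affine on every coset of $A_i(U_j)$. Consequently, for $g \in \MF{2n}$ one gets $\mathcal{M}(g \circ A_i) = \{\, A_i^{-1}(W) : W \in \mathcal{M}(g) \,\}$, and by Lemma~\ref{lemma:mfdescrequivalence} the map $g \mapsto g \circ A_i$ is a bijection $\MF{2n} \to \MF{U_i}$ that preserves $|\mathcal{M}(\cdot)|$.

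Next I would invoke Proposition~\ref{state:MFUnique}: there are at least $|\MF{2n}| - \mfcbound{2n}$ functions $g \in \MF{2n}$ with $|\mathcal{M}(g)| = 1$, and for any such $g$ necessarily $\mathcal{M}(g) = \{U_1\}$, since $U_1 = \xspace{n} \in \mathcal{M}(g)$ for every $g \in \MF{2n}$. Transporting by $A_i$, the set
$$
  \mathcal{F}_i = \{\, h \in \MF{U_i} : \mathcal{M}(h) = \{U_i\} \,\}
$$
satisfies $|\mathcal{F}_i| \ge |\MF{2n}| - \mfcbound{2n}$ for every $i$. The sets $\mathcal{F}_1, \dots, \mathcal{F}_{\snkp{2n}{n}}$ are pairwise disjoint, because $\mathcal{M}(h) = \{U_i\}$ recovers the index $i$, and all of them are contained in $\MFC{2n}$ by~(\ref{eq:mfcasunion}). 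Therefore
$$
  |\MFC{2n}| \ge \sum_{i=1}^{\snkp{2n}{n}} |\mathcal{F}_i| \ge \snkp{2n}{n}\bigl(|\MF{2n}| - \mfcbound{2n}\bigr) = \snkp{2n}{n}|\MF{2n}| - \snkp{2n}{n}\mfcbound{2n},
$$
which is the desired lower bound.

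The one step that requires genuine care is the transport in the second paragraph: Proposition~\ref{state:MFUnique} is phrased only in terms of $\MF{2n}$ and its distinguished subspace $\xspace{n}$, so one must check that conjugation by $A_i$ really carries the \emph{global} statement ``$\mathcal{M}(g) = \{U_1\}$'' to ``$\mathcal{M}(h) = \{U_i\}$'', i.e. that $A_i$ maps the full list $U_1, \dots, U_{\snkp{2n}{n}}$ bijectively onto itself and intertwines the predicate ``affine on every coset of $U_j$''. Granting this, the remaining ingredients --- transitivity of the linear group on $\snk{2n}{n}$, disjointness of the $\mathcal{F}_i$, and the trivial union bound --- are routine.
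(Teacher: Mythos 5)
Your proposal is correct and follows essentially the same route as the paper: the upper bound from the union decomposition~(\ref{eq:mfcasunion}) together with $|\MF{U_i}| = |\MF{2n}|$, and the lower bound by transporting Proposition~\ref{state:MFUnique} to each $\MF{U_i}$ via the linear equivalence of Lemma~\ref{lemma:mfdescrequivalence} and summing the resulting pairwise disjoint families of functions with a unique $\mathcal{M}$-subspace. The paper states this in two lines; your write-up merely makes the transport step explicit.
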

\begin{proof}
	The proof follows from Proposition~\ref{state:MFUnique} and~(\ref{eq:mfcasunion}). Indeed,
	$
		\MFC{2n} = \cup_{i = 1}^{\snkp{2n}{n}} \MF{U_i}
	$
	and each $\MF{U_i}$ contains at least $|\MF{2n}| - \mfcbound{2n}$ bent functions that do not belong to $\MF{U_j}$ for $i \neq j$.
\end{proof}

\begin{remark}
	We note that the lower bound from Theorem~\ref{th:MFCLowerBound} takes into account only $f \in \MFC{2n}$ with $|\mathcal{M}(f)| = 1$. 
\end{remark}

Table~\ref{table:mfcestimations} contains the values of the bounds for $|\MFC{2n}|$ from Theorem~\ref{th:MFCLowerBound}, where $n$ is small. 
Note that only $|\MFC{2}| = 8$, $|\MFC{4}| = 896$ and $|\MFC{6}| = 5425430528$ are known, they consist of all bent functions.
Theorem~\ref{th:MFCLowerBound} gives us $|\MFC{8}| \approx 2^{77.865}$, which is better than known $|\MFC{8}| < 2^{81.38}$~\cite{LangevinLeander2011}.

\begin{table}[!t]
\renewcommand{\arraystretch}{1.3}
\centering
\begin{tabular}{ccc||ccc}
\hline
$2n$ & Lower & Upper & $2n$ & Lower & Upper \\
\hline
2 & $< 0$ & 4.584963 & 10 & 176.365947 & 176.365947\\
4 & $< 0$ & 13.714246 & 12 & 397.742211 & 397.742211\\
6 & $< 0$ & 33.745257 & 14 & 894.931155 & 894.931155\\
8 & 77.864341 & 77.865447 & 16 & 2005.776948 & 2005.776948\\
\hline
\end{tabular}
\caption{Bounds for $\log_2 |\MFC{2n}|$ from Theorem~\ref{th:MFCLowerBound}}
\label{table:mfcestimations}
\end{table}

\subsection{The asymptotics of $|\MFC{2n}|$ and additional bounds}\label{sec:boundsasympt}

Let us estimate $\mfcbound{2n}$ and apply the results to~$|\MFC{2n}|$.
\begin{proposition}\label{state:mcboundasympt}
	  $(2^{n} - 1)^2 2^{3 \cdot 2^{n - 1} + 1} 2^{n - 1}! \leq \mfcbound{2n}$. Also, 
	\begin{equation}\label{eq:mfcboundlowerupper}
        \mfcbound{2n} < 2^{3 \cdot 2^{n - 1} + 2n + 1} 2^{n - 1}! \text{  for any  } n \geq 7.
	\end{equation}
\end{proposition}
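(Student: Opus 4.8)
The plan is to analyze the sum

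$$
  \mfcbound{2n} = \sum_{k = 0}^{n - 1} {(\snkp{n}{k})}^2 2^{(n - k)^2} 2^k! \, 2^{(2n - 2k + 1)2^k} \prod_{i = 0}^{n - k - 1}(2^{n - k} - 2^i)^{2^k}
$$

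by isolating its dominant term and bounding the rest. First I would identify which value of $k$ maximizes the $k$-th summand. The factor $2^{(2n-2k+1)2^k}$ grows doubly-exponentially in $k$, so one expects the term $k = n-1$ to dominate; its value is exactly $(\snkp{n}{n-1})^2 \cdot 2^{1} \cdot 2^{n-1}! \cdot 2^{3 \cdot 2^{n-1}} \cdot (2-1)^{2^{n-1}} = (2^n-1)^2 2^{3 \cdot 2^{n-1}+1} 2^{n-1}!$, using $\snkp{n}{n-1} = 2^n - 1$ from~(\ref{eq:snkpDefinition}). Since every summand is nonnegative, this immediately gives the claimed lower bound $(2^n-1)^2 2^{3 \cdot 2^{n-1}+1} 2^{n-1}! \leq \mfcbound{2n}$.

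For the upper bound~(\ref{eq:mfcboundlowerupper}), the strategy is to show the whole sum is at most roughly $2^{2n}$ times its top term, which is comfortably below $2^{3 \cdot 2^{n-1} + 2n + 1} 2^{n-1}!$. Writing $j = n - k \in \{1, \dots, n\}$, I would crudely bound each factor: $\snkp{n}{k} < 2^{k(n-k)} \cdot c$ for an absolute constant (more simply $\snkp{n}{n-j} = \snkp{n}{j} < 2^{j n}$ suffices), $\prod_{i=0}^{j-1}(2^j - 2^i)^{2^{n-j}} < 2^{j^2 2^{n-j}}$, and $2^k! = 2^{n-j}! \leq 2^{n-1}!$ since $n - j \leq n-1$. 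Collecting exponents, the $j$-th term is at most $2^{n-1}! \cdot 2^{E(j)}$ where $E(j) = 2 j n + j^2 + (2j+1)2^{n-j} + j^2 2^{n-j}$ (a generous over-count). The key observation is that $E(1) = 3 \cdot 2^{n-1} + 2n + 1$ and that $E(j)$ is strictly decreasing in $j$ for $j \geq 1$ once $n$ is large enough, because the dominant contribution $(2j + 1 + j^2)2^{n-j}$ roughly halves as $j$ increments, overwhelming the linear-and-quadratic-in-$j$ gain $2jn + j^2$; this is where I would need $n \geq 7$. Hence $\mfcbound{2n} \leq 2^{n-1}! \sum_{j=1}^{n} 2^{E(j)} \leq 2^{n-1}! \cdot n \cdot 2^{E(1)}$, and since $n \leq 2^{n-1}$ trivially, this is at most $2^{n-1}! \cdot 2^{n-1} \cdot 2^{3 \cdot 2^{n-1}+2n+1} $, which we still need to massage below the target — so I would instead be slightly less wasteful in the bound for $E(1)$ (keeping the exact top term and bounding only $j \geq 2$ crudely), writing $\mfcbound{2n} < (2^n-1)^2 2^{3\cdot 2^{n-1}+1}2^{n-1}! + (\text{geometric tail}) < 2^{3 \cdot 2^{n-1} + 2n + 1} 2^{n-1}!$, the last inequality holding for $n \geq 7$ because $(2^n-1)^2 < 2^{2n}$ and the tail is negligible relative to $2^{2n} 2^{3\cdot 2^{n-1}+1}2^{n-1}!$.

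The main obstacle I anticipate is making the monotonicity claim for $E(j)$ (equivalently, that the summands decrease from $k = n-1$ downward) rigorous and pinning down the exact threshold $n \geq 7$: the comparison between consecutive terms involves the ratio of $2^{(2n-2k+1)2^k}$ factors (a doubly-exponential gain when $k$ increases by one) against the polynomial-in-$2^k$ losses from $(\snkp{n}{k})^2 2^{(n-k)^2} 2^k!$, and getting a clean closed-form bound for the tail $\sum_{j \geq 2} 2^{E(j)}$ that provably stays under the stated bound for all $n \geq 7$ (while checking the small cases that are excluded) requires careful, if routine, estimation. Everything else — substituting $\snkp{n}{n-1} = 2^n-1$, and the factor-by-factor crude bounds — is straightforward.
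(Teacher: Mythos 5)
Your plan is correct and follows essentially the same route as the paper: isolate the $k=n-1$ summand, evaluate it exactly as $(2^n-1)^2 2^{3\cdot 2^{n-1}+1}2^{n-1}!$ to get the lower bound, and then bound the remaining $n-1$ summands crudely (via $\snkp{n}{k}\le 2^{k(n-k)+k}$-type estimates and $2^k!/2^{n-1}!\le 2^{-(n-2)2^{n-2}}$) to show their total stays below the headroom $(2^{n+1}-1)2^{3\cdot 2^{n-1}+1}2^{n-1}!$ for $n\ge 7$. (Your intermediate claim $E(1)=3\cdot 2^{n-1}+2n+1$ is off — your formula gives $2^{n+1}+2n+1$ because the crude bound on $\prod_i(2^j-2^i)^{2^{n-j}}$ wastes a factor $2^{2^{n-1}}$ at $j=1$ — but this lies in the branch you yourself discard in favour of keeping the exact top term, which is exactly what the paper does.)
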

\begin{proof}
	According to~(\ref{eq:mfcboundex}), $\mfcbound{2n} = \sum_{k = 0}^{n - 1} T_k$, where
	\begin{equation}\label{eq:estimatingSum}
		T_k =  {\snkp{n}{k}}^2 2^{(n - k)^2} 2^k! \, 2^{(2n - 2k + 1)2^k} \prod_{i = 0}^{n - k - 1}(2^{n - k} - 2^i)^{2^k}.
	\end{equation}
	It means that
	$
		T_{n - 1} = 2^{3 \cdot 2^{n - 1} + 1} (2^{n} - 1)^2 2^{n - 1}!,
	$
	i.e. $T_{n - 1} = U_n - o(U_n)$, where $U_n = 2^{3 \cdot 2^{n - 1} + 2n + 1} 2^{n - 1}!$. This proves the lower bound.
	Thus, it is sufficient to prove that $T_{0} + \ldots + T_{n - 2} = o(U_{n})$ and $T_{0} + \ldots + T_{n - 2} \leq R_{n}$, where
	\begin{equation*}
		R_{n} = 2^{3 \cdot 2^{n - 1} + n + 1} 2^{n - 1}! = 2^{- n } U_n
		 < U_n - T_{n - 1}.
	\end{equation*}
	Let $k \leq n - 2$. First, 	
	\begin{equation}\label{eq:snkpapprox}
		\snkp{n}{k} = 
		2^{k(n - k)}\prod_{i = 0}^{k - 1}\frac{1 - 2^{i - n}}{1 - 2^{i - k}} \leq 2^{k(n - k ) + k}
	\end{equation}
	since $1 - 2^{i - n} \leq 1$ and $1 - 2^{i - k} \geq 1/2$. Therefore,
	\begin{equation*}
		\snkp{n}{k}^2 2^{(n - k)^2} = \snkp{n}{k} \snkp{n}{n - k} 2^{(n - k)^2} 
        \leq 2^{2k(n - k) + n + (n - k)^2} = 2^{(n - k)(n + k) + n} \leq 2^{n^2 + n}.
	\end{equation*}
	At the same time $(2^{n - k} - 2^0) \cdot \ldots \cdot (2^{n - k} - 2^{n - k - 1}) \leq 2^{(n - k)^2 - 1}$, which implies that
	\begin{equation*}
		2^{(2n - 2k + 1)2^k} \prod_{i = 0}^{n - k - 1}(2^{n - k} - 2^i)^{2^k} \leq 2^{((n - k)^2 + 2(n - k))2^k} 
        = 2^{\frac{(n - k)(n - k +2)}{2^{n - k}} 2^n} \leq 2^{2^{n + 1}}.
	\end{equation*}
	Also,
	\begin{equation}\label{eq:factorialdiv}
		\frac{2^k!}{2^{k + 1}!} \leq {(2^{k})}^{-2^{k}} \text{ and } \frac{2^k!}{2^{n - 1}!} \leq 2^{-(n - 2)2^{n - 2}}.
	\end{equation} 
	As a result, we obtain that 
	\begin{equation}\label{eq:TkInequality}
		T_k \leq 2^{n^2 + n + 2^{n + 1}} 2^k! \leq 2^{n^2 - n - 1 + 2^{n - 1} - (n - 2)2^{n - 2}} U_n,
	\end{equation} 
	which directly implies that $T_0 + \ldots + T_{n - 2} = o(U_n)$. Similarly,
	$$
		T_k \leq 2^{n^2 - 1 - (n - 4)2^{n - 2}} R_n \text{ and }
	$$
	$$
		n^2 - 1 - (n - 4)2^{n - 2} \leq -48 \text{ for any } n \geq 7.
	$$
	Due to~(\ref{eq:factorialdiv}) and (\ref{eq:TkInequality}), $T_0 + \ldots + T_{n - 2} < R_n$ holds.
\end{proof}

Table~\ref{table:summusubsp} shows us rounded values of $\mfcbound{2n}$ for small $n$ and its upper bound~(\ref{eq:mfcboundlowerupper}). 
\begin{remark}
    Proposition~\ref{state:mcboundasympt} is correct for $n \geq 5$, see Table~\ref{table:summusubsp}. 
\end{remark}

\begin{table}[!t]
\renewcommand{\arraystretch}{1.3}
\caption{Rounded values of $\log_2 \mfcbound{2n}$ and its upper bound}
\label{table:summusubsp}
\centering
\begin{tabular}{cccccc}
\hline
$2n$ & $\log_2 \mfcbound{2n}$ &  $\log_2$ of~(\ref{eq:mfcboundlowerupper}) & $\log_2 (\snkp{2n}{n} \mfcbound{2n})$ & $\log_2 |\MF{2n}|$\\
\hline
8 &  49.900515 &  48.299208 & 67.515821 & 60.250140\\
10 &  103.162185 &  103.250140 & 129.864868 & 149.663264\\
12 &  226.617823 &  226.663264 & 264.364890 & 359.995144\\
14 &  502.972513 &  502.995144 & 553.741947 & 844.161722\\
16 &  1117.150429 &  1117.161722 & 1182.931089 & 1939.996287\\
\hline
\end{tabular}
\end{table}

\begin{corollary}\label{cor:MFCLowerBoundAsympt}
	$|\MFC{2n}| = \snkp{2n}{n}|\MF{2n}| - o(2^n!)$.  
\end{corollary}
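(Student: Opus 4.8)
The plan is to read the statement directly off the two-sided estimate of Theorem~\ref{th:MFCLowerBound} combined with the bound for $\mfcbound{2n}$ from Proposition~\ref{state:mcboundasympt}. Since Theorem~\ref{th:MFCLowerBound} yields
$$
    0 \;\leq\; \snkp{2n}{n}|\MF{2n}| - |\MFC{2n}| \;\leq\; \snkp{2n}{n}\,\mfcbound{2n},
$$
it suffices to prove that $\snkp{2n}{n}\,\mfcbound{2n} = o(2^n!)$; as the claim concerns the limit $n \to \infty$, the finitely many small values of $n$ (where Proposition~\ref{state:mcboundasympt} does not yet apply) are irrelevant.

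First I would bound the Gaussian factor: applying the computation in~(\ref{eq:snkpapprox}) with $n$ replaced by $2n$ and $k$ by $n$, each factor $\tfrac{1 - 2^{i - 2n}}{1 - 2^{i - n}}$ for $0 \leq i \leq n-1$ is at most $2$, so $\snkp{2n}{n} \leq 2^{n^2 + n}$. Next I would invoke Proposition~\ref{state:mcboundasympt}, which for $n \geq 7$ gives $\mfcbound{2n} < 2^{\,3\cdot 2^{n-1} + 2n + 1}\,2^{n-1}!$. Combining these with the elementary inequality
$$
    \frac{2^n!}{2^{n-1}!} \;=\; (2^{n-1}+1)(2^{n-1}+2)\cdots 2^n \;\geq\; (2^{n-1})^{2^{n-1}} \;=\; 2^{(n-1)2^{n-1}},
$$
I obtain
$$
    \frac{\snkp{2n}{n}\,\mfcbound{2n}}{2^n!} \;<\; 2^{\,n^2 + n + 3\cdot 2^{n-1} + 2n + 1 - (n-1)2^{n-1}} \;=\; 2^{\,n^2 + 3n + 1 - (n-4)2^{n-1}} \longrightarrow 0
$$
as $n \to \infty$, since the negative term $(n-4)2^{n-1}$ eventually dominates the polynomial $n^2 + 3n + 1$. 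Hence $\snkp{2n}{n}\,\mfcbound{2n} = o(2^n!)$, and substituting this into the sandwich above gives $|\MFC{2n}| = \snkp{2n}{n}|\MF{2n}| - o(2^n!)$.

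There is no genuine obstacle here: the corollary is essentially a repackaging of Theorem~\ref{th:MFCLowerBound} and Proposition~\ref{state:mcboundasympt}. The only points requiring a little care are the exponent bookkeeping in the last display and the observation that, since $|\MF{2n}| = 2^{2^n}2^n!$ grows like a (large) multiple of $2^n!$, an additive error of order $2^n!$ is indeed negligible compared with the main term $\snkp{2n}{n}|\MF{2n}|$ while still being the natural scale in which to state the approximation.
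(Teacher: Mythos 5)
Your proposal is correct and follows essentially the same route as the paper, which derives the corollary directly from the sandwich in Theorem~\ref{th:MFCLowerBound}, the bound $\mfcbound{2n} < 2^{3\cdot 2^{n-1}+2n+1}\,2^{n-1}!$ of Proposition~\ref{state:mcboundasympt}, and the factorial-ratio estimate~(\ref{eq:factorialdiv}). Your explicit bookkeeping (the bound $\snkp{2n}{n} \leq 2^{n^2+n}$ and the final exponent $n^2+3n+1-(n-4)2^{n-1}$) just spells out what the paper leaves implicit.
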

The proof directly follows from Theorem~\ref{th:MFCLowerBound}, Proposition~\ref{state:mcboundasympt} and~(\ref{eq:factorialdiv}). Also, Proposition~\ref{state:mcboundasympt} gives us the bound
\begin{equation}\label{eq:mfclowerboundbyestimation}
    |\MFC{2n}| > \snkp{2n}{n} (2^{2^n} 2^n! - 2^{3 \cdot 2^{n - 1} + 2n + 1} 2^{n - 1}!), \ n \geq 5
\end{equation}
as well as a more accurate asymptotic expression in the case of applying it together with Stirling's formula.

\section{An upper bound for $|\near{\MFC{2n}}|$}\label{sec:nearestAverageMFC}

We will estimate the $|\near{\MFC{2n}}|$ using 
$$
    \nearav{\MFC{2n}} = \frac{1}{|\MFC{2n}|} \sum_{f \in \MFC{2n}} |\near{f}|.
$$
The main idea is the following lemma. 
\begin{lemma}\label{lemma:twoCosetSeries}
    Let $f \in \MF{U} \cap \MF{U'}$ for distinct $U, U' \in \snk{2n}{n}$. Then $|\near{f}| \geq 2^{2n + 2} - 2^{n + 3}$. In particular, $|\near{f}| > \nearav{\MF{2n}}$ if $n \geq 5$.
\end{lemma}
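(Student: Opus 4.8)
\textbf{Proof plan for Lemma~\ref{lemma:twoCosetSeries}.}
The plan is to show that when $f$ lies in $\MF{U}\cap\MF{U'}$ for two distinct $U,U'\in\snk{2n}{n}$, the construction~(\ref{eq:subspaceContruction}) is available not merely on the cosets of one $\mathcal{M}$-subspace but on the cosets of both, and these two families of closest bent functions overlap only in a controlled way. Since $f\in\MF{U}$, Proposition~\ref{th:BasicCriterion} (or Theorem~\ref{th:criterionMMFnew} after transporting $U$ to $\xspace{n}$ via Lemma~\ref{lemma:mfdescrequivalence}) tells us that $f\oplus\charfunc{V}\in\bentset{2n}$ for every affine subspace $V\in\asnk{2n}{n}$ on which $f$ is affine, and in particular for each of the $2^n$ cosets of $U$; but we want to count more carefully. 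First I would reduce to the case $U=\xspace{n}$ by applying an invertible linear $A$ with $A^{-1}(\xspace{n})=U$, so that $f\circ A^{-1}\in\MF{2n}$ and, by Lemma~\ref{lemma:mfdescrequivalence}, $|\near{f}|=|\near{f\circ A^{-1}}|$ and the second subspace $U'$ becomes some $\widetilde U\in\snk{2n}{n}$, $\widetilde U\neq\xspace{n}$, with $f\circ A^{-1}\in\MF{\widetilde U}$. Write $g=f\circ A^{-1}=f_{\pi,\varphi}$.

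Next I would produce two explicit families of closest bent functions to $g$. The first family comes from taking $V$ to be a coset of $\xspace{n}$ itself: by Proposition~\ref{state:mcfarlandcase} these contribute the $\lbound{2n}=2^{2n+1}-2^n$ functions of $\near{g}\cap\MF{2n}$ (actually the relevant sub-count here is the $2^n$ cosets of $\xspace{n}$ together with the affine choices giving exactly those). The second family comes from the cosets of $\widetilde U$: since $g\in\MF{\widetilde U}$, $g$ is affine on each of the $2^n$ cosets of $\widetilde U$, so each such coset $V$ yields $g\oplus\charfunc{V}\in\near{g}$. The key point is to bound the overlap between the two families: a bent function can be written as $g\oplus\charfunc{V}$ with $V\in\asnk{2n}{n}$ in several ways only if the corresponding affine subspaces of $\z{2n}$ coincide (the summand $\charfunc{V}$ determines $V$ since $V$ is exactly the set where $g$ and $g\oplus\charfunc{V}$ differ), so the overlap between "cosets of $\xspace{n}$" and "cosets of $\widetilde U$" is precisely the number of affine subspaces that are simultaneously a coset of $\xspace{n}$ and a coset of $\widetilde U$ — and since $\xspace{n}\neq\widetilde U$ as linear subspaces, no coset of one is a coset of the other, hence the overlap is empty. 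Counting the cosets of $\xspace{n}$ on which $g$ is affine as contributing at least $2^n$ functions (even restricting to the "constant $H$" choices) and likewise at least $2^n$ from the cosets of $\widetilde U$, but being more generous by using that each coset of $\xspace{n}$ admits two choices of added affine function yielding distinct bent functions in $\near{g}\cap\MF{2n}$... let me reconsider the exact bookkeeping to land on $2^{2n+2}-2^{n+3}$.

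More precisely, the bound $2^{2n+2}-2^{n+3}=4(2^{2n}-2^{n+1})$ suggests counting, for \emph{each} of $\xspace{n}$ and $\widetilde U$, the $|\near{f_{\pi,\varphi}}\cap\MF{2n}|$-type contribution $\lbound{2n}=2^{2n+1}-2^n$ is too large by itself, so instead the intended count is: the closest bent functions obtained from cosets of $\xspace{n}$ using the $\dim L\le 1$ part of Proposition~\ref{state:mcfarlandcase} that are \emph{not} in $\MF{\widetilde U}$, plus symmetrically those from $\widetilde U$ not in $\MF{\xspace{n}}$. Here the main obstacle — and the step I would spend the most care on — is establishing that the intersection $\near{g}$-contributions from the two subspaces is small, i.e.\ that $\big|\{\,g\oplus\charfunc{V}:V\text{ coset of }\xspace{n},\ g|_V\text{ affine}\}\cap\{\,g\oplus\charfunc{V'}:V'\text{ coset of }\widetilde U\}\big|$ is at most something like $2^{n+1}$ or so, which by the "$\charfunc{V}$ determines $V$" principle reduces to bounding the number of affine subspaces $V$ of $\z{2n}$ with $\linsp{V}=\xspace{n}$, $g|_V$ affine, and $V$ also a coset of $\widetilde U$; this forces $\xspace{n}=\widetilde U$ (contradiction) unless we are comparing \emph{different} affine subspaces, so that intersection is in fact empty and the two families are disjoint, giving $|\near{g}|\ge 2\lbound{2n}-(\text{double-counted }\MF{2n}\text{ functions})$. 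The double-counted functions are exactly those in $\MF{2n}\cap\MF{\widetilde U}$, but since we only need a lower bound we can instead directly count $|\near{g}|\ge |\near{g}\cap\MF{2n}| + (\text{cosets of }\widetilde U\text{ giving bent functions outside }\MF{2n}) = \lbound{2n} + (\ge \lbound{2n}-\text{small})$, and the cleanest route is: from the $\xspace{n}$-side we get all $\lbound{2n}$ functions of $\near{g}\cap\MF{2n}$; from the $\widetilde U$-side we get $\ge 2^{n}\cdot 2^{n+1} = 2^{2n+1}$ functions (cosets of $\widetilde U$ times added affine functions), of which at most $|\MF{2n}\cap\MF{\widetilde U}|\le 2^{n+2}$-ish lie in $\MF{2n}$, so $|\near{g}|\ge (2^{2n+1}-2^n)+(2^{2n+1}-2^{n+2})=2^{2n+2}-2^{n}-2^{n+2}$. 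Finally, comparing with Corollary~\ref{cor:AverageNumbUpperBound}, $\nearav{\MF{2n}}<\frac{10}{3}2^{2n}-2^n+29<2^{2n+2}-2^{n+3}$ for $n\ge 5$, which gives the last assertion; I expect the arithmetic of pinning down the constant $2^{n+3}$ and verifying the $n\ge5$ threshold to be the only remaining routine work.
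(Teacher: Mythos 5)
Your overall shape --- transport one of the two subspaces to $\xspace{n}$ via Lemma~\ref{lemma:mfdescrequivalence}, exhibit two large families of closest bent functions coming from the two $\mathcal{M}$-subspaces, and control their overlap --- is the same as the paper's, but the concrete steps do not go through. The first problem is the identification of the families. The $\lbound{2n}=2^{2n+1}-2^n$ functions of $\near{g}\cap\MF{2n}$ do \emph{not} arise only from the $2^n$ cosets of $\xspace{n}$: by Proposition~\ref{state:mcfarlandcase} they arise from all $V\in\asnk{2n}{n}$ with $\dim(\linsp{V}\cap\xspace{n})\geq n-1$ (the cases $\dim L\leq 1$ of Theorem~\ref{th:criterionMMFnew}), and likewise for $\widetilde U$. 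If you keep only the cosets themselves you get $2^n+2^n$ functions, nowhere near $2^{2n+2}-2^{n+3}$; if you take the full $\lbound{2n}$-element families, your disjointness argument collapses, since a subspace $V$ can satisfy $\dim(\linsp{V}\cap\xspace{n})\geq n-1$ and $\dim(\linsp{V}\cap\widetilde U)\geq n-1$ simultaneously without being a coset of either. Your fallback bound on the overlap, ``at most $|\MF{2n}\cap\MF{\widetilde U}|\leq 2^{n+2}$-ish,'' is also false: Lemma~\ref{lemma:MFUintersection} shows that this intersection has cardinality $2^{3\cdot 2^{n-1}}\,2^{n-1}!$ already when $\dim(\widetilde U\cap\xspace{n})=n-1$. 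The paper instead bounds the overlap geometrically, by counting the subspaces $S$ with $\dim(\linsp{S}\cap\xspace{n})\geq n-1$ and $\dim(\linsp{S}\cap\widetilde U)\geq n-1$: when $\dim(\widetilde U\cap\xspace{n})\leq n-2$ these conditions force $\linsp{S}\supseteq\widetilde U\cap\xspace{n}$ and $\dim(\widetilde U\cap\xspace{n})=n-2$, and a short count of the admissible $(L',R',H')$ gives an overlap of at most $3\cdot 2^{n+1}$, whence $2\lbound{2n}-3\cdot2^{n+1}=2^{2n+2}-2^{n+3}$.

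The second, and more essential, missing idea is the case $\dim(\widetilde U\cap\xspace{n})=n-1$, where no union bound of this kind is available: the geometric constraint above degenerates, and the set of subspaces qualifying for both families can be of the same order as $\lbound{2n}$ itself, so $|G\cup G'|$ need not exceed $\lbound{2n}$ appreciably. The paper switches mechanisms entirely there: writing $\widetilde U=\sconstr{n}{L}{R}{H}{\infomaps}$ with $\dim L=1$, Corollary~\ref{cor:fullCosetSerie} forces $\pi$ to map every coset of $L$ onto a coset of the fixed hyperplane $R^{\bot}$, so the union of any two cosets of $L$ lies in $\subsp{2}{\pi}$, giving $|\subsp{2}{\pi}|\geq 2^{2n-3}-2^{n-2}$, and Theorem~\ref{theorem:criterionMMFdim2} then yields $|\near{g}|\geq\lbound{2n}+2^5|\subsp{2}{\pi}|\geq 2^{2n+2}-2^{n+3}$. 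Without this case split and the $\dim L=2$ argument, the lemma is not proved. (Your final comparison with Corollary~\ref{cor:AverageNumbUpperBound} for $n\geq 5$ is fine and is exactly what the paper does.)
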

\begin{proof}
    Due to Corollary~\ref{cor:AverageNumbUpperBound}, $2^{2n + 2} - 2^{n + 3} > \nearav{\MF{2n}}$ if $n \geq 5$.
    
    According to Lemma~\ref{lemma:mfdescrequivalence}, we can assume that $U' = \xspace{n}$, i.e. $\MF{U'} = \MF{2n}$ and $f = f_{\pi, \varphi} \in \MF{2n} \cap \MF{U}$. 
	Since $U \neq \xspace{n}$, $\dim R \leq n - 1$ holds, where $R = U \cap \xspace{n}$. Also, Proposition~\ref{state:subspaceRepresentation} and Corollary~\ref{cor:fullCosetSerie} give us that $U = \sconstr{n}{L}{R}{H}{\infomaps}$, where 
    $\linsp{\pi(L)} = R^{\bot}$ and $L \in \snk{n}{n - \dim R}$.
	    
	Let $\dim R = n - 1$, i.e. $\dim L = 1$. According to Corollary~\ref{cor:fullCosetSerie}, $\linsp{\pi(a \oplus L)} = R^{\bot}$ for all $a \in \z{n}$. 
    It implies that $\pi(a \oplus L) \cup \pi(a' \oplus L) \in \asnk{n}{2}$ for any two distinct cosets $a \oplus L$ and $a' \oplus L$, $a' \in \z{n}$. Thus,  $(a \oplus L) \cup (a' \oplus L) \in \subsp{2}{\pi}$. Therefore, 
    $
        |\subsp{2}{\pi}| \geq \frac{2^{n - 1} \cdot (2^{n - 1} - 1)}{2} = 2^{2n - 3} - 2^{n - 2}
    $
    since there are $2^{n - 1}$ distinct cosets of $L$. By Theorem~\ref{theorem:criterionMMFdim2}, $\near{f_{\pi, \varphi}}$ contains at least $\lbound{2n} + 2^5 |\subsp{2}{\pi}| \geq 2^{2n + 2} - 2^{n + 3}$ elements. 

    Let $\dim R \leq n - 2$. We recall that $\near{f_{\pi, \varphi}}$ contains $\lbound{2n}$-element subsets $G$ and $G'$ relatively to $f_{\pi, \varphi} \in \MF{U}$ and $f_{\pi, \varphi} \in \MF{2n}$, see Proposition~\ref{state:mcfarlandcase} and Lemma~\ref{lemma:mfdescrequivalence}. However, we must take into account that some elements of $G$ and $G'$ may coincide. Let us find an upper bound for $|G \cap G'|$ since $|G \cup G'| = 2 \lbound{2n} - |G \cap G'|$. 
	Suppose that some $g \in G \cap G'$ is constructed using $S \in \asnk{2n}{n}$, i.e. $S = \sconstr{n}{L'}{R'}{H'}{\infomaps}$ due to Theorem~\ref{th:criterionMMFnew}. We need to estimate the number of such $S$.
    
    Proposition~\ref{state:mcfarlandcase} and Lemma~\ref{lemma:mfdescrequivalence} give us that $\dim (\linsp{S} \cap \xspace{n}) \geq n - 1$ and $\dim (\linsp{S} \cap U) \geq n - 1$. 
    Hence, $\dim (\linsp{S} \cap U \cap \xspace{n}) \geq n - 2$. At the same time, $\dim R = \dim (U \cap \xspace{n}) \leq n - 2$ which is possible only if $R = \linsp{S} \cap U \cap \xspace{n}$. Consequently, $R \subset \linsp{S}$ and $\dim R = n - 2$. Due to Proposition~\ref{state:subspaceRepresentation}, $R \subset R'$ since $R, R' \subset \xspace{n}$ and $R' = \linsp{S} \cap \xspace{n}$.
	
	Next, $\dim (\linsp{S} \cap \xspace{n}) = n - 1$. Indeed, it is shown that $\dim (\linsp{S} \cap \xspace{n}) \geq n - 1$, but $\dim (\linsp{S} \cap \xspace{n}) = n$ implies $\linsp{S} = \xspace{n}$ which together with $\dim (\linsp{S} \cap U) \geq n - 1$ contradicts $\dim R = n - 2$.
    
    According to Proposition~\ref{state:mcfarlandcase}, $L' \in \subsp{1}{\pi}$. To calculate the number of all possible $S = \sconstr{n}{L'}{R'}{H'}{\infomaps}$, we need to calculate the number of possible $L' \in \subsp{1}{\pi}$ such that $R \subset R' = \linsp{\pi(L')}^{\bot}$ and $H'$. In the case of $\dim L' = 1$ the second condition of Theorem~\ref{th:criterionMMFnew} is satisfied for any affine $H'$, there are $2^{1\cdot1 +1} = 4$ of them.
    
    Next, $R'^{\bot} = \linsp{\pi(L')} = \linsp{\pi(\{x, y\})\}}$ for all distinct $x, y \in \z{n}$. Since $\pi(\{x, y\})$ is any element of $\asnk{n}{1}$, each distinct $R'$ will appear $2^{n - 1}$ times, i.e. for each coset of $R'^{\bot}$. At the same time, we are interested only in $R' \supset R$.
    There are $3$ of such $R'$. We can union $(n - 2)$-dimensional $R$ with any of its coset that is not equal to $R$ to obtain $(n - 1)$-dimensional $R'$. 
    Thus,
    $
        |G \cap G'| \leq  4 \cdot 3 \cdot 2^{n -1} = 3 \cdot 2^{n + 1},
    $
    and $|G \cup G'| \geq 2 \lbound{2n} - 3 \cdot 2^{n + 1} = 2^{2n + 2} - 2^{n + 3}$. 
\end{proof}
This allows us to obtain the following upper bound. 

\begin{theorem}\label{th:MFCompletedNearAverage}
    Let $n \geq 5$. Then $\nearav{\MFC{2n}} < \nearav{\MF{2n}}$. Moreover, $\nearav{\MFC{2n}} = \nearav{\MF{2n}} - o(1)$.
\end{theorem}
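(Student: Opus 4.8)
The plan is to express the difference $\nearav{\MF{2n}} - \nearav{\MFC{2n}}$ through the overcounting caused by bent functions with $|\mathcal{M}(f)| \geq 2$, and then to show that this overcounting is both strictly positive and asymptotically negligible. First I would set up a double-counting identity. Counting pairs $(i, f)$ with $f \in \MF{U_i}$ gives
$\sum_{i = 1}^{\snkp{2n}{n}} \sum_{f \in \MF{U_i}} |\near{f}| = \sum_{f \in \MFC{2n}} |\mathcal{M}(f)|\,|\near{f}|$ and $\sum_{i} |\MF{U_i}| = \sum_{f \in \MFC{2n}} |\mathcal{M}(f)|$. By Lemma~\ref{lemma:mfdescrequivalence}, $\MF{U_i} = \MF{2n} \circ A_i$ for an invertible linear $A_i$ with $A_i^{-1}(U_1) = U_i$, and $|\near{f \circ A_i}| = |\near{f}|$; hence $\sum_{f \in \MF{U_i}} |\near{f}| = |\MF{2n}|\,\nearav{\MF{2n}}$ and $|\MF{U_i}| = |\MF{2n}|$ for every $i$. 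Combining these relations with $\sum_{f \in \MFC{2n}} |\near{f}| = |\MFC{2n}|\,\nearav{\MFC{2n}}$ and eliminating $\snkp{2n}{n}|\MF{2n}|$, I would arrive at the key identity
$$
  |\MFC{2n}|\bigl(\nearav{\MF{2n}} - \nearav{\MFC{2n}}\bigr) = \sum_{f \in \MFC{2n}} (|\mathcal{M}(f)| - 1)\bigl(|\near{f}| - \nearav{\MF{2n}}\bigr).
$$

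For the strict inequality I would examine the right-hand side term by term. When $|\mathcal{M}(f)| = 1$ the term vanishes. When $|\mathcal{M}(f)| \geq 2$, the function $f$ lies in $\MF{U} \cap \MF{U'}$ for two distinct $U, U' \in \snk{2n}{n}$, so Lemma~\ref{lemma:twoCosetSeries} gives $|\near{f}| > \nearav{\MF{2n}}$ for $n \geq 5$, which makes the term strictly positive. Since Lemma~\ref{lemma:MFUintersection} shows that $\MF{2n} \cap \MF{U}$ is nonempty for $U$ with $\dim(U \cap \xspace{n}) = n - 1$, at least one $f$ with $|\mathcal{M}(f)| \geq 2$ exists, so the sum is strictly positive and $\nearav{\MFC{2n}} < \nearav{\MF{2n}}$.

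For the $o(1)$ claim I would discard the negative contribution and use the crude bounds $|\near{f}| \leq |\asnk{2n}{n}| = 2^n \snkp{2n}{n}$ (distinct $U$ yield distinct $f \oplus \charfunc{U}$) together with $\sum_{f \in \MFC{2n}} (|\mathcal{M}(f)| - 1) = \snkp{2n}{n}|\MF{2n}| - |\MFC{2n}| \leq \snkp{2n}{n}\mfcbound{2n}$, the last step by Theorem~\ref{th:MFCLowerBound}. Bounding $|\MFC{2n}|$ from below once more via Theorem~\ref{th:MFCLowerBound} gives
$$
  0 < \nearav{\MF{2n}} - \nearav{\MFC{2n}} \leq \frac{2^n(\snkp{2n}{n})^2\,\mfcbound{2n}}{|\MFC{2n}|} \leq \frac{2^n\,\snkp{2n}{n}\,\mfcbound{2n}}{|\MF{2n}| - \mfcbound{2n}}.
$$
Plugging in $\snkp{2n}{n} \leq 2^{n^2 + n}$ from~(\ref{eq:snkpapprox}), $\mfcbound{2n} < 2^{3 \cdot 2^{n-1} + 2n + 1} 2^{n-1}!$ from Proposition~\ref{state:mcboundasympt}, $|\MF{2n}| = 2^{2^n} 2^n!$, and $2^{n-1}!/2^n! \leq 2^{-(n-1)2^{n-1}}$ from~(\ref{eq:factorialdiv}) — noting $\mfcbound{2n} = o(|\MF{2n}|)$, so the denominator is asymptotic to $|\MF{2n}|$ — the exponent of the resulting bound is $n^2 + 4n + 1 - (n-2)2^{n-1}$, which tends to $-\infty$, so the bound is $o(1)$. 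I expect the main obstacle to be the bookkeeping in the double-counting identity and checking that these deliberately wasteful estimates on $|\near{f}|$ and on the overcount still beat the super-exponential gap between $|\MF{2n}|$ and $\mfcbound{2n}$; once the identity is in place, the remaining inequalities are routine.
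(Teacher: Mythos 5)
Your proof is correct, and all the heavy lifting is done by the same three ingredients the paper uses: Lemma~\ref{lemma:mfdescrequivalence} (to make every $\MF{U_i}$ an isometric copy of $\MF{2n}$), Lemma~\ref{lemma:twoCosetSeries} (to show that any $f$ with $|\mathcal{M}(f)|\geq 2$ satisfies $|\near{f}|>\nearav{\MF{2n}}$ for $n\geq 5$), and the bound $\sum_{f}(|\mathcal{M}(f)|-1)=\snkp{2n}{n}|\MF{2n}|-|\MFC{2n}|\leq\snkp{2n}{n}\mfcbound{2n}$ coming from Proposition~\ref{state:MFUnique}. Where you differ is the bookkeeping: the paper greedily partitions $\MFC{2n}$ into sets $\MF{U_i}^{*}$ (functions in $\MF{U_i}$ but in no earlier $\MF{U_j}$) and compares $\sum_{f\in\MF{U_i}^*}|\near{f}|$ with $\nearav{\MF{2n}}|\MF{U_i}^*|$ one $i$ at a time, whereas you package everything into the single covariance-type identity
$|\MFC{2n}|\bigl(\nearav{\MF{2n}}-\nearav{\MFC{2n}}\bigr)=\sum_{f\in\MFC{2n}}(|\mathcal{M}(f)|-1)\bigl(|\near{f}|-\nearav{\MF{2n}}\bigr)$,
which I checked and which is valid. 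Your identity is arguably cleaner: it makes both claims transparent at once (every term is nonnegative, strictly positive terms exist by Lemma~\ref{lemma:MFUintersection}, and the whole sum is crushed by $\mfcbound{2n}$), and it avoids the paper's slightly awkward $>^{*}$ notation. Two cosmetic points: there is no ``negative contribution'' to discard --- every summand is already $\geq 0$, which is exactly what gives you the lower bound $\nearav{\MFC{2n}}\leq\nearav{\MF{2n}}$ needed for the $o(1)$ statement; and your crude bound $|\near{f}|\leq 2^{n}\snkp{2n}{n}$ is much weaker than the paper's $|\near{f}|\leq 2^{n(n+5)/2}$ from~\cite{Kolomeec2017}, but as your exponent computation $n^{2}+4n+1-(n-2)2^{n-1}\to-\infty$ shows, it is still far more than enough against the super-exponential gap between $\mfcbound{2n}$ and $|\MF{2n}|$.
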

\begin{proof}    
    In terms of Section~\ref{sec:mfcdescr}, let $\MF{U_i}^*$ be all functions from $\MF{U_i}$ that do not belong to $\MF{U_j}$ for any $1 \leq j < i$. Then~(\ref{eq:mfcasunion}) implies that 
    $
        |\MFC{2n}| = \sum_{i = 1}^{m}|\MF{U_i}^*|,
    $
    i.e. $\{ \MF{U_1}^*, \ldots, \MF{U_m}^*\}$ is the partition of $\MFC{2n}$, $m = \snkp{2n}{n}$. Therefore, 
    \begin{equation}\label{eq:MFCompletedPart}
        \nearav{\MFC{2n}}|\MFC{2n}| = \sum_{i = 1}^{m} \sum_{f \in \MF{U_i}^*} |\near{f}|.
    \end{equation}
    At the same time, Lemma~\ref{lemma:mfdescrequivalence} guarantees that
    \begin{equation*}
         \nearav{\MF{2n}}|\MF{2n}| = \sum_{f \in \MF{U_i}^*} |\near{f}| + \sum_{f \in \MF{U_i} \setminus \MF{U_i}^*} |\near{f}|. 
    \end{equation*}
%
    But for each $f \in \MF{U_i} \setminus \MF{U_i}^*$ the condition of Lemma~\ref{lemma:twoCosetSeries} is satisfied. Hence,   Lemma~\ref{lemma:twoCosetSeries} implies the following inequality:
    \begin{equation}\label{eq:MFStarIneq}
        \sum_{f \in \MF{U_i} \setminus \MF{U_i}^*} |\near{f}| >^* \nearav{\MF{2n}} \big(|\MF{U_i}| - |\MF{U_i}^*|\big).
    \end{equation}
    The used $>^*$ means that we must use $\geq$ if $\MF{U_i} = \MF{U_i}^*$. At the same time, there is some $i$ with $\MF{U_i} \neq \MF{U_i}^*$. For instance, the bent function $\iprod{x}{y}$ belongs to both $\MF{\z{n} \times \{ 0 \in \z{n}\}}$ and $\MF{ \{ 0 \in \z{n}\} \times \z{n}}$. 
    Next, 
    (\ref{eq:MFStarIneq}) and the equality above imply
    \begin{multline*}
        \sum_{f \in \MF{U_i}^*} |\near{f}| <^* \nearav{\MF{2n}}|\MF{2n}| - \nearav{\MF{2n}} |\MF{U_i}| \\+ \nearav{\MF{2n}} |\MF{U_i}^*| = \nearav{\MF{2n}} |\MF{U_i}^*|. 
    \end{multline*}
    Substituting this inequality to~(\ref{eq:MFCompletedPart}), we obtain that
    \begin{equation*}
        \nearav{\MFC{2n}}|\MFC{2n}| 
        < \sum_{i = 1}^{m} \nearav{\MF{2n}} |\MF{U_i}^*| =  \nearav{\MF{2n}} |\MFC{2n}|.
    \end{equation*}
    
    Let us provide a lower bound for $\nearav{\MFC{2n}}$. First of all, there is the bound $|\near{f}| \leq 2^n(2^1 + 1) \cdot \ldots \cdot (2^n + 1) \leq 2^{n (n  + 5)/ 2}$, see~\cite{Kolomeec2017}. Also, $|\MF{U_i} \setminus \MF{U_i}^*| \leq \mfcbound{2n}$ due to Proposition~\ref{state:MFUnique} and Lemma~\ref{lemma:mfdescrequivalence}. Thus,
    $$
        \sum_{f \in \MF{U_i} \setminus \MF{U_i}^*} |\near{f}| \leq 2^{n (n  + 5)/ 2} \mfcbound{2n}
    $$
    which can be used similarly to~(\ref{eq:MFStarIneq}):
    $$
        \sum_{f \in \MF{U_i}^*} |\near{f}| \geq  \nearav{\MF{2n}}|\MF{2n}| - 2^{n (n  + 5)/ 2} \mfcbound{2n}. 
    $$
    Substituting this inequality to~(\ref{eq:MFCompletedPart}), we obtain that
    \begin{equation*}
        \nearav{\MFC{2n}}|\MFC{2n}| 
        \geq  \snkp{2n}{n} \nearav{\MF{2n}}|\MF{2n}| - \snkp{2n}{n} 2^{n (n  + 5)/ 2} \mfcbound{2n}.
    \end{equation*}
    Taking into account~(\ref{eq:mfcupperbound}),  Proposition~\ref{state:mcboundasympt} and (\ref{eq:factorialdiv}), 
    this gives us $\nearav{\MFC{2n}} 
    \geq \nearav{\MF{2n}} - o(1)$.
\end{proof}

Let us estimate the cardinality of $\near{\MFC{2n}}$ and $\MFCSP{2n}$.
\begin{corollary}\label{cor:NearCompletedUpperBound}
    Let $n \geq 5$. Then $\near{\MFC{2n}} < (\nearav{\MF{2n}} - \lbound{2n}) |\MFC{2n}|$. In particular, 
    $
        \near{\MFC{2n}} < (\frac{4}{3} 2^{2n} + 29) |\MFC{2n}|
    $ and $|\MFCSP{2n}| < (\frac{4}{3} 2^{2n} + 30)|\MFC{2n}|$.
\end{corollary}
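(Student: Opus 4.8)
The plan is to bound $|\near{\MFC{2n}}|$ by a union bound over the sets $\near{f}$, $f \in \MFC{2n}$, exploiting that each $\near{f}$ keeps a fixed number $\lbound{2n}$ of its closest bent functions inside $\MFC{2n}$, and then to replace $\nearav{\MFC{2n}}$ by $\nearav{\MF{2n}}$ using the already proved strict inequality $\nearav{\MFC{2n}} < \nearav{\MF{2n}}$ of Theorem~\ref{th:MFCompletedNearAverage}.

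First I would establish that $|\near{f} \cap \MFC{2n}| \geq \lbound{2n}$ for every $f \in \MFC{2n}$. By~(\ref{eq:mfcasunion}) there is some $i$ with $f \in \MF{U_i}$, and Lemma~\ref{lemma:mfdescrequivalence} gives an invertible linear $A$ with $A^{-1}(U_1) = U_i$ and $\MF{U_i} = \MF{2n} \circ A$; hence $f = g \circ A$ for some $g \in \MF{2n}$ and $\near{f} = \near{g} \circ A$. Proposition~\ref{state:mcfarlandcase} yields $|\near{g} \cap \MF{2n}| = \lbound{2n}$, and since composition with $A$ is a bijection on Boolean functions, $|\near{f} \cap \MF{U_i}| = \lbound{2n}$. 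As $\MF{U_i} \subseteq \MFC{2n}$, this gives $|\near{f} \cap \MFC{2n}| \geq \lbound{2n}$, and therefore $|\near{f} \setminus \MFC{2n}| = |\near{f}| - |\near{f} \cap \MFC{2n}| \leq |\near{f}| - \lbound{2n}$.

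Next, since $\near{\MFC{2n}} = \bigcup_{f \in \MFC{2n}} (\near{f} \setminus \MFC{2n})$, the union bound together with the previous step gives
\[
  |\near{\MFC{2n}}| \leq \sum_{f \in \MFC{2n}} \big(|\near{f}| - \lbound{2n}\big) = \nearav{\MFC{2n}}\,|\MFC{2n}| - \lbound{2n}\,|\MFC{2n}| = \big(\nearav{\MFC{2n}} - \lbound{2n}\big)|\MFC{2n}|.
\]
Now Theorem~\ref{th:MFCompletedNearAverage} (for $n \geq 5$) gives $\nearav{\MFC{2n}} < \nearav{\MF{2n}}$, which upgrades this to the strict bound $|\near{\MFC{2n}}| < (\nearav{\MF{2n}} - \lbound{2n})|\MFC{2n}|$. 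Finally, Corollary~\ref{cor:AverageNumbUpperBound} gives $\nearav{\MF{2n}} < \frac{10}{3}2^{2n} - 2^n + 29$, and since $\lbound{2n} = 2^{2n+1} - 2^n$ we obtain $\nearav{\MF{2n}} - \lbound{2n} < \frac{4}{3}2^{2n} + 29$, hence $|\near{\MFC{2n}}| < (\frac{4}{3}2^{2n} + 29)|\MFC{2n}|$; adding $|\MFC{2n}|$ and using $|\MFCSP{2n}| = |\near{\MFC{2n}}| + |\MFC{2n}|$ yields $|\MFCSP{2n}| < (\frac{4}{3}2^{2n} + 30)|\MFC{2n}|$.

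No serious obstacle is expected: every ingredient is already available. The one point deserving care is the first step, namely checking that the $\lbound{2n}$ closest bent functions of Proposition~\ref{state:mcfarlandcase} remain inside $\MFC{2n}$ after conjugating $f$ into $\MF{2n}$ by an affine change of coordinates — this is precisely what Lemma~\ref{lemma:mfdescrequivalence} supplies. Note that strictness comes for free from the gap $\nearav{\MF{2n}} - \nearav{\MFC{2n}} > 0$, so we never need the sets $\near{f} \setminus \MFC{2n}$ to overlap.
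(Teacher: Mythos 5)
Your proposal is correct and follows essentially the same route as the paper: bound each $|\near{f} \setminus \MFC{2n}|$ by $|\near{f}| - \lbound{2n}$ via Proposition~\ref{state:mcfarlandcase} and Lemma~\ref{lemma:mfdescrequivalence}, sum over $\MFC{2n}$, then apply Theorem~\ref{th:MFCompletedNearAverage} and Corollary~\ref{cor:AverageNumbUpperBound}. You merely spell out the conjugation argument and the arithmetic in more detail than the paper does.
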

\begin{proof}
    It is clear that
    $
        \near{\MFC{2n}} \leq \sum_{f \in \MFC{2n}}\big( |\near{f}| - \lbound{2n} \big) =  \nearav{\MFC{2n}} |\MFC{2n}| - \lbound{2n} |\MFC{2n}|  
    $
    since each $\near{f}$ contains at least $\lbound{2n}$ bent functions from $\MFC{2n}$, see Proposition~\ref{state:mcfarlandcase} and Lemma~\ref{lemma:mfdescrequivalence}. Theorem~\ref{th:MFCompletedNearAverage} and Corollary~\ref{cor:AverageNumbUpperBound} complete the proof.
\end{proof}
\begin{remark}\label{remark:NearMFC}
Due to Theorem~\ref{th:MFCompletedNearAverage},
$\nearav{\MFC{2n}}$, $|\near{\MFC{2n}}|$ and $|\MFCSP{2n}|$ can be estimated 
more precisely using Corollary~\ref{cor:AverageNumbSimplified}. 
\end{remark} 

The estimations of $|\MFC{2n}|$ from  Theorem~\ref{th:MFCLowerBound} can also be used.
Note that providing a lower bound for $|\near{\MFC{2n}}|$ and $|\MFCSP{2n}|$ looks more difficult since even finding $\near{f} \cap \MFC{2n}$ is not an easy problem. 

\section{Conclusion}

We have analyzed the affinity of random $f \in \MF{2n}$ and $g \in \MFC{2n}$ on $n$-dimensional affine subspaces of $\z{2n}$ and have obtained that the properties of $\MF{2n}$ and $\MFC{2n}$ are similar.
The results have allowed us to establish $|\MFSP{2n}|$ in a quite constructive way, precisely enough estimate $|\MFC{2n}|$ and obtain the upper bound for $|\MFCSP{2n}|$. These also imply certain metric properties of $\MF{2n}$ and $\MFC{2n}$. The missed thing is a lower bound for $|\MFCSP{2n}|$ which is a topic for further research.

\bigskip

\noindent {\normalsize \textbf{Acknowledgements} 
The work is supported by the Mathematical Center in Akademgorodok under the agreement No. 075--15--2022--282 with the Ministry of Science and Higher Education of the Russian Federation.}

\end{document}